\documentclass[11pt]{article}

\newif\ifjasa
\jasafalse

\newif\ifbka
\bkafalse

\newif\ifjmlr
\jmlrfalse

\newif\ifshortsim
\shortsimfalse

\ifjasa
\addtolength{\oddsidemargin}{-.5in}%
\addtolength{\evensidemargin}{-1in}%
\addtolength{\textwidth}{1in}%
\addtolength{\textheight}{1.7in}%
\addtolength{\topmargin}{-1in}%
\else
\ifjmlr
\else
\topmargin -2truecm \rightmargin -1truein \leftmargin -1truein
\setlength\parindent{15pt}
\setlength\parskip{0pt}
\oddsidemargin 12pt \evensidemargin 12pt

\setlength{\textwidth}{15.3 truecm} \setlength{\textheight}{23.9 truecm}
\setlength{\abovedisplayskip}{-10pt}
\setlength{\belowdisplayskip}{-10pt}

\renewcommand{\baselinestretch}{1}
\fi
\fi

\ifjmlr
\usepackage{jmlr2e}
\usepackage{lastpage}

\jmlrheading{25}{2024}{1-\pageref{LastPage}}{7/24}{7/24}{21-0000}{Peter W. MacDonald, Elizaveta Levina and Ji Zhu}

\ShortHeadings{Latent process models for functional networks}{MacDonald, Levina and Zhu}
\firstpageno{1}
\else
\fi

\usepackage{amsmath}
\usepackage{amsfonts}
\usepackage{lscape}
\ifjasa
\usepackage{natbib}
\else
\ifjmlr
\else
\usepackage[square,numbers]{natbib}
\fi
\fi
\usepackage{graphicx}
\graphicspath{{figures/}}
\usepackage{amssymb}
\usepackage{color}
\usepackage{bm}
\ifjmlr
\else
\usepackage{amsthm}
\usepackage[hidelinks]{hyperref}
\fi
\usepackage{enumerate}
\usepackage[nodisplayskipstretch]{setspace}
\usepackage{changepage}
\usepackage[ruled]{algorithm2e}
\usepackage{comment}

\ifjmlr
\else
\theoremstyle{definition}
\newtheorem{theorem}{Theorem}

\newtheorem{proposition}{Proposition}
\newtheorem{corollary}{Corollary}
\newtheorem{lemma}{Lemma}

\fi
\newtheorem{assumption}{Assumption}


\newcommand{\tabby}{\hspace{10pt}}

\newcommand{\iprod}[2]{\left\langle #1 , #2 \right\rangle}

\newcommand{\iid}{\overset{\mathrm{iid}}{\sim}}

\ifbka
  \newcommand{\tp}{\mathrm{\scriptscriptstyle T}}
\else
  \newcommand{\tp}{\intercal}
\fi
\ifbka
  \newcommand{\expect}{E}
\else
  \newcommand{\expect}{\mathbb{E}}
\fi
\ifbka
  \newcommand{\prob}{\mathrm{pr}}
\else
  \newcommand{\prob}{\mathbb{P}}
\fi
\ifbka
  \newcommand{\neweps}{\epsilon}
\else
  \newcommand{\neweps}{\varepsilon}
\fi


\definecolor{purple}{rgb}{0.5,0,1}

\newcommand\numberthis{\addtocounter{equation}{1}\tag{\theequation}}

\newif\ifblind
\blindfalse

\newif\ifthesis
\thesisfalse

\begin{document}

\ifthesis
\title{{\bf CHAPTER 2:} Latent process models for functional network data}
\author{\empty}
\date{\empty}
\else
\ifjmlr
\title{Latent Process Models for Functional Network Data}
\else
\title{Latent process models for functional network data}
\fi
\ifblind
\author{\empty}
\else
\ifjmlr
\author{\name Peter W. MacDonald \email pwmacdon@uwaterloo.ca \\
\addr Department of Statistics \& Actuarial Science \\
University of Waterloo\\
Waterloo, ON N2L 3G1, Canada \\
\AND
\name Elizaveta Levina \email elevina@umich.edu \\
\name Ji Zhu \email jizhu@umich.edu \\
\addr Department of Statistics\\
University of Michigan\\
Ann Arbor, MI 48109-1107, USA}

\editor{My Editor}
\else
\author{Peter W. MacDonald \\ Department of Statistics and Actuarial Science, University of Waterloo \\ ~ \\
  \ifjasa
  \thanks{
  The authors gratefully acknowledge the U.S. National Sciences Foundation and the Rackham Graduate School for funding this research. There are no conflicts of interest to declare.}
  \else
  \fi
  Elizaveta Levina and Ji Zhu \\ Department of Statistics, University of Michigan}
\fi
\fi
\ifjasa
  \date{November 23, 2022}
\else
\ifjmlr
\else
  \date{July 15, 2024}
\fi
\fi
\fi

\ifjasa
\def\spacingset#1{\renewcommand{\baselinestretch}%
{#1}\small\normalsize} \spacingset{1}
\else
\ifthesis
\else
\ifjmlr
\else
\onehalfspacing
\fi
\fi
\fi

\maketitle

\bigskip
\begin{abstract}%
Network data are often sampled with auxiliary information or collected through the observation of a complex system over time, leading to multiple network snapshots indexed by a continuous variable.
Many methods in statistical network analysis are traditionally designed for a single network, and can be applied to an aggregated network in this setting, but that approach can miss important functional structure.    Here we develop an approach to estimating the expected network explicitly as a function of a continuous index, be it time or another indexing variable.
We parameterize the network expectation through low dimensional latent processes, whose components we represent with a fixed, finite-dimensional functional basis.
We derive a gradient descent estimation algorithm,  establish theoretical guarantees for recovery of the low dimensional structure, compare our method to competitors, and apply it to a data set of international political interactions over time, showing our proposed method to adapt well to data, outperform competitors, and provide interpretable and meaningful results.
\end{abstract}

\ifjasa
\noindentb
{\it Keywords:} Multilayer network; Dynamic network; Latent space model
\vfill

\newpage
\spacingset{1.9}

\else
\ifthesis
\else
\ifjmlr
\begin{keywords}
  Latent space model, Multilayer, Multiplex, Dynamic network, $B$-spline
\end{keywords}
\else
{\bf Keywords:} Latent space model; Multilayer; Multiplex; Functional network; Dynamic network; $B$-spline
\fi
\fi
\fi

\section{Introduction}
Modern data are collected in a much greater variety of forms than classical statistics considered, and require novel models and methods.
Networks are one important example of a complex data structure which has received recent interest in many fields of application.
In general, network data on $n$ statistical units, or nodes, describe connections, or edges, between pairs of those units.
This information is stored in 
the $n \times n$ adjacency matrix $A$, where each entry
\ifjasa
$[A]_{ij}$
\else
$\{A_{ij} : i, j=1,\ldots,n\}$
\fi
describes the connection from node $i$ to node $j$, which could be binary or real-valued.
Much of the statistical network analysis literature deals with a single network, sometimes with auxiliary information, but increasingly samples of networks are also studied.
Samples of networks arise in diverse applications: for instance, in neuroimaging we observe a brain connectivity network for each study subject, and in political science, we observe relations between countries over many years.
In this paper, we focus on functional network data, the setting where edges are indexed by a continuous auxiliary variable.
This variable is commonly time, but it can be anything else that has a natural ordering.

Often, functional network data are collected as repeated, indexed snapshots of a single evolving network.
In this regime, suppose that we observe a network on a common set of $n$ nodes, at $m$ distinct indices $\{x_k\}_{k=1}^m$ in a compact set $\mathcal{X} \subseteq \mathbb{R}$.
The complete data is made up of a collection of indexed adjacency matrix snapshots, $\{A_k\}_{k=1}^m \subseteq \mathbb{R}^{n \times n}$, where each matrix entry $\left[ A_k \right]_{ij}$ gives the value of edge $(i,j)$ from node $i$ to node $j$ for the snapshot corresponding to index $x_k$.
These adjacency matrices may have binary edges with values in $\{0,1\}$, or weighted edges taking any real value, and may be either undirected or directed.
Examples of functional network data collected as time-indexed snapshots include dynamic friendship networks \citep{snijders17stochastic}, or dynamic networks of international conflict (cf. Section~\ref{sec:real_data}).
However, networks may also be indexed in some other continuous covariate; for instance brain images for a collection of subjects, indexed by a single continuous task score \citep{shan22network}.

It is also common that functional network data are collected in the form of time-stamped records of interactions between pairs of nodes.
Examples of functional network data collected in this manner include e-mail networks, bike share networks \citep{matias18semiparametric}, or animal interaction networks \citep{mersch13tracking}.
We can construct snapshots from such event data by partitioning the index set into $m$ contiguous intervals, and constructing weighted adjacency matrices which count the number or total weight of events between nodes $i$ and $j$ in a given interval.

Another way to view adjacency matrix snapshots is to treat them as a multiplex network \citep{kivela14multilayer}, a multilayer network object with $m$ layers corresponding to the snapshots.
However, the ordering of the layers, inherited from the ordering of the indexing variable, gives them additional structure not usually present in a multiplex network.


Much of the existing literature focuses on the dynamic network setting, with the snapshots indexed by time, rather than the general functional setting.
We briefly review recent related work for dynamic networks; see \cite{kim18review} for a more detailed review.
Approaches based on the Stochastic Block Model (SBM) include
\cite{matias16statistical}, which assumes node community memberships follow a Markov chain and allows connection probabilities to vary; \cite{bhattacharyya18spectral}, which assumes community memberships are fixed but allows connection probabilities to vary; and \cite{pensky19spectral}, which allows both community memberships and connection probabilities to vary smoothly in time.
Latent space or latent position models for networks have also been popular since the seminal work of \cite{hoff02latent}.
Latent space approaches to dynamic networks go back to \cite{sarkar05dynamic}, which models latent positions in discrete time with independent Gaussian random walks.
A Bayesian latent space approach with similar random walk transitions on latent positions in discrete time was proposed by \cite{sewell15latent}, and
extended to continuous time by \cite{durante16locally}. 
Both \cite{lee11latent} and \cite{padilla22change} consider extensions of the random dot product graph (RDPG) to the dynamic setting, for the purpose of changepoint detection.
In all of these papers, the single network latent space framework is extended by mapping each node to a sequence or continuum of positions in the latent space, with network snapshots which are conditionally independent given the positions. 
More recently, \cite{athreya22discovering} and \cite{chen24euclidean} consider a similar functional RDPG model, and make a valuable contribution towards the statistical interpretation of a sequence of embeddings of functional network snapshots, in order to reveal the overall changes in network structure over time.
In contrast, our work contributes new, improved methodology for producing such a sequence.

Direct modeling of indexed dyadic connection events has been studied by \cite{perry13point} and \cite{kreiss19nonparametric}, among others, again in the dynamic case.
These two papers model edge event intensities using time-varying edge covariates, with constant and time-varying coefficients, respectively.
A variational Bayes approach to fitting edge event intensities according to a latent community structure was proposed by \cite{matias18semiparametric}.
A similar community-based model was proposed by \cite{arastuie20chip}, focusing on the effect of self-excitation on community detection.

In this paper, we propose a new model for functional network data: a continuously indexed inner product latent position model,  which we will call a {\em latent process model}.
In contrast to many previous approaches for functional latent position models, which treat the positions as random variables, we treat the latent processes as function-valued parameters.
As a result, our approach does not rely on any assumptions of an explicit, simple, and discrete transition model for the latent positions between snapshots \citep{matias16statistical,sarkar05dynamic,sewell15latent,padilla22change}.
Instead, it allows for arbitrary, continuously varying latent processes, with estimation efficency depending primarily on their functional smoothness.
This means that our approach can easily handle irregularly spaced snapshot indices and missing edge entries, with faster computation time than similarly flexible Bayesian methods \citep{durante16locally}.

Our key contribution is to make this function estimation problem tractable by modeling latent processes using a finite, prespecified function basis.
By adaptively selecting the basis based on the data, we are able to share information to efficiently estimate network structure which is shared locally between snapshots, but need not assume that that any part of that structure is common to all snapshots \citep{bhattacharyya18spectral,arroyo21inference}.
In contrast to approaches which first smooth network snapshots and then estimate latent network structure \citep{pensky19spectral}, our estimation approach performs both tasks simultaneously, hence it can adapt to smoothness in the latent structure that may not be directly detectable from the network edges.
Conversely, compared to approaches which estimate latent network structure for each snapshot, then summarize or smooth the output \citep{sannapassino21link,athreya22discovering,chen24euclidean}, our approach can accurately estimate the latent processes as the edge variance increases
\ifshortsim
  (see Figure~\ref{fig:i_jasa}),
\else
  (see Figure~\ref{fig:i_varym}),
\fi
as its performance does not rely directly on the signal-to-noise ratio of the individual network snapshots.

\ifjasa
\else
The rest of this paper is organized as follows.  In Section~\ref{sec:model}, we define the latent process network model.
In Section~\ref{sec:estimation}, we develop an estimation algorithm using gradient descent on coordinates in a function basis.
Section~\ref{sec:theory} provides theoretical guarantees for recovery of the latent processes.
Section~\ref{sec:simulations} investigates the proposed methods in simulation studies, and Section~\ref{sec:real_data} applies them to a data set of international political interactions.
Finally, Section~\ref{sec:conclusion} contains brief discussion and future research directions.
\ifjmlr
Mathematical proofs, technical details, and additional simulation studies have been provided as supplementary material.
\else
Proofs and supplemental information are provided in appendices \ref{app:proofs} to \ref{app:smoothing_splines}.
\pagebreak
\fi
\fi

\section{Latent Process Network Models} \label{sec:model}

\ifjasa
\else
\ifjmlr
  In this section, we will introduce latent process network models. To begin, we fix some notation that we will use in the remainder of this paper.
\else
\subsection{Notation} \label{subsec:notation}
\fi
\fi

\ifjmlr
Throughout,
\else
Throughout this paper,
\fi
$\lVert \cdot \rVert_F$ will denote the matrix Frobenius norm, $\lVert \cdot \rVert_2$ the matrix $\ell_2$ operator norm, and $\iprod{\cdot}{\cdot}$ the Frobenius inner product.
When applied to a vector, these coincide with the standard vector $\ell_2$ norm and inner product.
With some abuse of notation, we will analogously use $\lVert \cdot \rVert_F$ to denote the vector $\ell_2$ norm of the vectorization of a 3-mode tensor.
The notation $[\cdot]_{ij}$ denotes the $(i,j)$-th entry of a matrix.
We use $\lambda_{\max}(\cdot)$ and $\lambda_{\min}(\cdot)$ to denote the maximum and minimum eigenvalues of a symmetric matrix;
\ifjasa
\else
The condition number is defined as the ratio of the maximum and minimum singular values of a matrix,
it is bounded between $0$ and $1$;
\fi
and $\mathcal{O}_d$ will denote the set of $d \times d$ orthogonal transformation matrices, which includes rotations, reflections, and combinations of the two.

Capital calligraphic letters are generally reserved for 3-mode tensors, and notation for tensor operations will follow \cite{kolda09tensor}.
The slices of a 3-mode tensor are constructed by fixing the first, second or third index respectively, and varying the other two.
The fibers of a 3-mode tensor are constructed by fixing two indices and varying the third.
For $m=1,2,3$, tensor-matrix multiplication in the $m$th mode is defined by the operator $\times_m$ and matrix multiplies each $m$th mode slice of the first tensor argument by the second matrix argument.
Similarly, for $m=1,2,3$, tensor-vector multiplication in the $m$th mode is defined by the operator $\bar{\times}_{m}$ and matrix-vector multiplies $m$th mode slice of the first tensor argument by the second vector argument.
Note that tensor-vector multiplication in the $m$th mode
results in a matrix
with dimensions corresponding to the other two modes of the original tensor argument.


\subsection{Latent Functional Parameterization}

We parameterize functional networks through a matrix-valued network mean function $\Theta:\mathcal{X} \rightarrow \mathbb{R}^{n \times n}$, such that $\Theta_{ij}(x_k) = \expect( \left[ A_k \right]_{ij})$ for all $i,j$ and $k=1,\ldots,m$.
We assume independent edges, that is,  $[A_k]_{ij}$ are independent for all $i \le j$ and $k$.   Formally, suppose $q(\cdot;\theta,\phi)$ is the edge distribution, parameterized by its mean $\theta$ and some possible nuisance parameters $\phi$.
Then for $1 \leq i,j \leq n$ and $k=1,\ldots,m$,
\ifjasa
  $\left[ A_k \right]_{ij} \overset{\mathrm{ind}}{\sim} q(\cdot;\Theta_{ij}(x_k),\phi)$.

\else
\begin{equation*}
  \left[ A_k \right]_{ij} \overset{\mathrm{ind}}{\sim} q(\cdot;\Theta_{ij}(x_k),\phi).
\end{equation*}
\fi
For instance, we could model edges with $q(\cdot;\theta,\sigma) = \mathcal{N}(\theta,\sigma^2)$, in which case the network is fully parameterized by $\Theta$ and the nuisance edge variance $\sigma^2$.
For binary edge networks, we could model $q(\cdot;\theta,\phi) = \operatorname{Bernoulli}(\theta)$, in which case the model is fully parameterized by $\Theta$.
In making the independence assumption, we follow both the single network latent space literature, which typically assumes edge independence conditional on latent positions \citep{athreya18statistical}, and multilayer network latent space models that also make the assumption of independence across layers
\ifblind
 \citep{arroyo21inference}.
\else
 \citep{macdonald22latent}.
\fi
While the independence assumption is likely not exactly correct, it is a common and useful analysis tool for estimating the network structure, and, in this setting, the functional trends in this structure.

For a fixed node pair $(i,j)$, we may also view the sequence of random variables  $\left\{ [A_k]_{ij} \right\}_{k=1}^m$ as a univariate functional response with inputs $x_k$.
The independence of edges over $k$ implies that these responses have a mean which is a function of the continuous index $x$, and independent errors.
Hence the focus of this work is on modeling functional mean structure with no temporal dependence.
In general functional settings, for instance where network snapshots represent brain scans of different patients indexed by a continuous task performance score, independence across snapshots is a reasonable working assumption.
In dynamic networks, independence across snapshots is not guaranteed, but still commonly assumed conditional on latent structure \citep[e.g.,][]{sewell15latent}.
In future work, we may consider allowing within-edge autoregressive errors, or other forms of dependence.
However, accurate nonparametric estimation of the underlying trend component, which we develop in this paper, will be of primary interest in many applications, and such an estimator is necessary to make individual edge sequences stationary before applying time series modeling approaches to the residuals \citep{fan03nonlinear}.

Extending the latent space modeling approach, we assume that the parameter of interest $\Theta$ is determined by the trajectories of each node in a $d$-dimensional latent space, denoted by $Z^{(i)} : \mathcal{X} \rightarrow \mathbb{R}^d$ for $i=1,\ldots,n$. We denote the component functions by $Z^{(i)} = (z_{i,1},\ldots,z_{i,d})^{\tp}$.
Throughout this paper, we assume an inner product similarity function, namely that $\Theta_{ij}(x) = \{ Z^{(i)}(x) \}^{\tp}Z^{(j)}(x)$ for any $1 \leq i,j \leq n$ and $x \in \mathcal{X}$.
We refer to this as an inner product {\em latent process network model}.
Collecting all the latent processes $Z^{(i)}$ into rows of an $n \times d$ matrix-valued function $Z$, we can write  $\Theta(x) = Z(x)Z(x)^{\tp}$. For each $x$, $Z(x)Z(x)^{\tp}$ is a rank $d$ matrix.
The appeal of the inner product similarity is  a parameterization of the network mean function which is low rank for any $x \in \mathcal{X}$.

Evaluating the latent processes at the snapshot indices, we see that our function-valued parameterization produces a tensor decomposition of the $n \times n \times m$ tensor
\ifjasa
\else
  $\expect(\mathcal{A})$
\fi
with $n \times n$ slices $\expect(A_k)$ for $k=1,\ldots,m$, similar to the Tucker or canonical polyadic (CP) decompositions \citep{kolda09tensor}.
However, our formulation does not force the latent structure to factorize in the third mode.
\ifjasa
\else
Consider the simple case where $d=1$, and compare our parameterization
\ifjasa
  $\expect(A_k) = Z(x_k)Z(x_k)^{\tp}$
\else
\begin{equation*}
  \expect(A_k) = Z(x_k)Z(x_k)^{\tp}
\end{equation*}
\fi
to a representation of $\expect(\mathcal{A})$ by a CP decomposition which is symmetric in the first two modes.
The latent process representation cannot in general be reproduced by a rank $1$ CP decomposition, which would require that $\expect(A_k) = w_k \bm{z}\bm{z}^{\tp}$ for an $n$-vector $\bm{z}$ and scalars $w_k$ for $k=1,\ldots,m$.
Under the rank $1$ CP decomposition, the expected value of every network snapshot would share a common eigenvector.
To capture the structure of the latent process representation would require a rank $m$ CP decomposition with different eigenvectors,
and in general no dimension reduction or information sharing across snapshot means.

\fi
While we still assume there is a low rank representation of each snapshot mean, we treat them as functions of the index, and propose methodology with good theoretical and empirical properties when the representations are smooth in $x$.
In other words, we do not force our functional network models to rely on tensor-valued extensions of matrix algebra procedures, recognizing that node modes should be treated differently from the index mode.
In the two node modes with the third index mode fixed, the data are assumed to have low rank or latent position structure, a highly successful and popular approach for network models.
On the other hand, in the third index mode the data are assumed to have smooth structure as a function of the index, as in classical nonparametric regression.

\subsection{Identifiability} \label{sec:identify}

Latent space models with inner product link functions are well known to be non-identifiable due to their invariance to orthogonal transformations of the latent positions, since  $\expect(A) = XX^{\tp} = XO(XO)^{\tp}$ for any $n \times d$ matrix $X$ and $d$-dimensional orthogonal transformation $O \in \mathcal{O}_d$.
This non-identifiability also extends to continuous time:   for any orthogonal matrix-valued function $Q:\mathcal{X} \rightarrow \mathcal{O}_d$, we have
\begin{equation} \label{nonident_eqn}
  Z(x)Z(x)^{\tp} = Z(x)Q(x) \left\{ Z(x)Q(x) \right\}^{\tp}.
\end{equation}
Thus, $Z$ is identifiable only up to uncountably many orthogonal transformations.
In particular, for a given $Z$, we define the unidentified class of latent processes $\mathcal{T}(Z)$ by
\begin{equation} \label{nonident_cont}
  \mathcal{T}(Z) = \left\{ Z(x)Q(x) : Q:\mathcal{X} \rightarrow \mathcal{O}_d \right\}.
\end{equation}
Our goal is to take advantage of smoothness in $Z$ to share information across network snapshots.
The non-identifiability could mask the smoothness, because even if $Z(x)$ is a smooth function of $x$, $Z(x)Q(x)$ may not be.
Although all elements of $\mathcal{T}(Z)$ lead to identically distributed network snapshots, an estimation algorithm which targets the class representative which is ``maximally smooth'' will have the greatest potential to share information across snapshots and improve estimation efficiency.
In general, our theory will instead consider the distance between an estimate $\widehat{Z}$ and an unknown representative of $\mathcal{T}(Z)$.

\section{Estimation With Gradient Descent} \label{sec:estimation}

\ifjmlr
\else
\subsection{Approximation With Finite Dimensional Function Bases} \label{subsec:basis_approx}
\fi

The latent process assumption reduces the parameter space from $n^2$ to $nd$ function-valued parameters.
To further simplify estimation, we will restrict to a finite dimensional parameter space by assuming that each component function of each latent process, $z_{i,r}$,  $i=1,\ldots,n$, $r=1,\ldots,d$ is well approximated by the span of a common $q$-dimensional function basis.
That is, suppose that $B = (B_1,\ldots,B_q)^{\tp}$ is a $q$-dimensional basis of functions each mapping from $\mathcal{X}$ to $\mathbb{R}$.
We assume that every component function $z_{i,r}(x)$ is close to $\bm{w}_{i,r}^{\tp}B(x)$ for some $q$-dimensional coordinate vector $\bm{w}_{i,r}$.
For each $r$, we collect $\bm{w}_{i,r}$ as the rows of an $n \times q$ matrix $\bm{W}_r$. Let $\mathcal{W} = \{\bm{W}_r\}_{r=1}^d$ denote the $n \times q \times d$ tensor containing all the basis coordinates for all nodes in all latent dimensions.
For such a coordinate tensor, the first mode corresponds to the nodes, the second mode to the index space after summarizing from $m$ snapshots into $q$ basis coordinates, and the third mode to the latent space dimensions.

The basis $B$ could be, for instance, a $B$-spline basis on $\mathcal{X}$, or any other similar functional basis.   A $B$-spline basis of order $D \geq 0$ is defined by an increasing sequence of $K$ internal knots, as well as boundary knots.
The dimension of a $B$-spline basis is $q = K + D + 1$. The span of a given $B$-spline basis is a collection of piecewise polynomial functions which are $(D-1)$-times differentiable at the internal knots and smooth elsewhere.
For additional mathematical properties, we refer readers to \cite{schumaker07spline}.
In this paper, we will use order $3$, or cubic $B$-spline bases, although the algorithms to follow would proceed similarly for any function basis $B$, including orthonormal function bases such as the Fourier basis.
The simulations and real data analysis in Sections~\ref{sec:simulations} and \ref{sec:real_data} explore the performance of $B$-splines in detail, but we also compare performance using penalized smoothing splines in Appendix E of the supplementary materials.

For simplicity, we assume a common basis $B$ for all $i=1,\ldots,n$ and all $r=1,\ldots,d$, although this could also be relaxed.
In practice, the underlying latent processes may not exactly belong to $\operatorname{span}(B)$.
However, if the latent processes are smooth in $x$, we will be able to approximate them effectively with functions in $\operatorname{span}(B)$.
In Section~\ref{subsec:tuning}, we develop an approach to choose the dimension, and thus the approximation power of $B$ adaptively from data.

We begin by defining a nonconvex least squares optimization problem equivalent to maximizing a Gaussian likelihood.
In examples in Section~\ref{sec:simulations}, we minimize the same objective function for other edge distributions, in particular the binary edge model with $q(\theta;\phi) = \operatorname{Bernoulli}(\theta)$.
Denote
\begin{equation} \label{snap_opt}
  \ell(\mathcal{W}) = \sum_{k=1}^m \lVert A_k - \sum_{r=1}^d \bm{W}_r B(x_k) B(x_k)^{\tp} \bm{W}_r^{\tp} \rVert_F^2.
\end{equation}
Fixing the latent space dimension $d$, and a $q$-dimensional function basis $B$, we can apply gradient descent over the $n \times q \times d$ tensor-valued argument $\mathcal{W}$
\ifjasa
.
\else
with $n \times q$ slices $\bm{W}_r \in \mathbb{R}^{n \times q}$ for $r=1,\ldots,d$.
Throughout the paper, we will store the basis coordinates in this way as 3-mode tensors, where the first mode corresponds to the nodes, the second to the index space, and the third to the dimensions of the latent space.
\fi
The following proposition derives the gradient of $\ell$ with respect to each $n \times q$ slice of $\mathcal{W}$. The proof is provided in
\ifjasa
  online Appendix A.
\else
\ifjmlr
  Appendix A of the supplementary materials.
\else
  Appendix~\ref{app:proofs}.
\fi
\fi
\begin{proposition} \label{prop:snap_gradient}
  Define $\ell(\mathcal{W})$ as in \eqref{snap_opt}. Then
  \begin{equation*}
    \frac{\partial \ell}{\partial \bm{W}_r}(\mathcal{W}) \propto  - \sum_{k=1}^m \left\{ A_k - \sum_{r'=1}^d \bm{W}_{r'} B(x_k)B(x_k)^{\tp} \bm{W}_{r'}^{\tp} \right\} \bm{W}_r B(x_k)B(x_k)^{\tp} \in \mathbb{R}^{n \times q}
  \end{equation*}
  for $r=1,\ldots,d$.
\end{proposition}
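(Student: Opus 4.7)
The plan is to perform a direct matrix-calculus computation, differentiating the sum of Frobenius norm terms one snapshot at a time. Fix $r \in \{1,\ldots,d\}$ and write $B_k = B(x_k)$ and $M_k(\mathcal{W}) = A_k - \sum_{r'=1}^d \bm{W}_{r'} B_k B_k^{\tp} \bm{W}_{r'}^{\tp}$, so that $\ell(\mathcal{W}) = \sum_{k=1}^m \|M_k\|_F^2 = \sum_{k=1}^m \iprod{M_k}{M_k}$.

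First I would compute the differential of $M_k$ with respect to a perturbation $\bm{W}_r \mapsto \bm{W}_r + H$, keeping all other $\bm{W}_{r'}$ fixed. Only the $r'=r$ term in the sum defining $M_k$ contributes, and expanding to first order in $H$ gives
\begin{equation*}
  dM_k = -H B_k B_k^{\tp} \bm{W}_r^{\tp} - \bm{W}_r B_k B_k^{\tp} H^{\tp}.
\end{equation*}
Next, applying the chain rule $d\|M_k\|_F^2 = 2 \iprod{M_k}{dM_k}$ and using the trace identities $\iprod{M}{HX} = \iprod{MX^{\tp}}{H}$ and $\iprod{M}{YH^{\tp}} = \iprod{M^{\tp}Y}{H}$ (both immediate from $\iprod{U}{V} = \mathrm{tr}(U^{\tp}V)$ and cyclicity of trace), the two pieces of $dM_k$ contribute
\begin{equation*}
  d\|M_k\|_F^2 = -2 \iprod{M_k \bm{W}_r B_k B_k^{\tp}}{H} - 2 \iprod{M_k^{\tp} \bm{W}_r B_k B_k^{\tp}}{H}.
\end{equation*}

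The final step is to collapse these two terms. Because $A_k$ is symmetric and $\sum_{r'} \bm{W}_{r'} B_k B_k^{\tp} \bm{W}_{r'}^{\tp}$ is manifestly symmetric, $M_k$ is symmetric, so $M_k^{\tp} = M_k$ and the two inner products coincide. Summing over $k$ and identifying the gradient through $d\ell = \iprod{\partial \ell/\partial \bm{W}_r}{H}$ yields
\begin{equation*}
  \frac{\partial \ell}{\partial \bm{W}_r}(\mathcal{W}) = -4 \sum_{k=1}^m M_k(\mathcal{W}) \, \bm{W}_r B_k B_k^{\tp},
\end{equation*}
which matches the stated formula up to the overall constant $4$ absorbed by $\propto$.

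I do not anticipate a genuine obstacle here; the only care needed is (i) being consistent with the inner-product/transpose conventions when moving matrices across $\iprod{\cdot}{\cdot}$, and (ii) explicitly invoking symmetry of $A_k$ to merge the two contributions (on directed networks the same argument would give $-2\sum_k (M_k + M_k^{\tp}) \bm{W}_r B_k B_k^{\tp}$, still of the same form up to a factor, consistent with the $\propto$ in the statement).
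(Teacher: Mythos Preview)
Your proof is correct and arrives at the same explicit constant $-4$ as the paper. The paper's argument is essentially the same matrix-calculus computation, presented slightly differently: it first absorbs the $r'\neq r$ terms into $A_k$ to reduce to the $d=1$ case, then expands $\operatorname{tr}\big([A_k-\bm{W}B_kB_k^{\tp}\bm{W}^{\tp}]^2\big)$ and differentiates each trace term, whereas you work directly with the differential $d\|M_k\|_F^2=2\iprod{M_k}{dM_k}$ and invoke symmetry of $M_k$ to collapse the two contributions.
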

Since \eqref{snap_opt} is a nonconvex objective, gradient descent will not necessarily converge to the global optimum, and the result depends on the starting value.  In Section~\ref{sec:theory}, we will directly prove results for the output of the gradient descent algorithm proposed below, rather than for the global minimizer of \eqref{snap_opt}.
We will see in Section~\ref{sec:theory} that the starting value for gradient descent may affect the target of estimation among the unidentified class $\mathcal{T}(Z)$ of latent processes defined by \eqref{nonident_cont}.



We propose a gradient descent algorithm which
estimates the $d$ latent dimensions concurrently.
Our concurrent gradient descent algorithm takes initial coordinates
\ifjasa
  $\widehat{\mathcal{W}}^0 = \{\widehat{\bm{W}}_r^0\}_{r=1}^d$,
\else
\begin{equation*}
  \widehat{\mathcal{W}}^0 = \{\widehat{\bm{W}}_r^0\}_{r=1}^d,
\end{equation*}
\fi
step sizes $\eta_{h} > 0$, and a maximum number of iterations $H$ as inputs.
In general, we shall allow the step size to depend on the iteration number $h \geq 0$.
\begin{algorithm} \label{alg:gd_concur}
  \ifshortsim
  \onehalfspacing
  \else
  \doublespacing
  \fi
  \caption{Concurrent gradient descent algorithm.}
  \begin{tabbing}
  \enspace For $h=1$ to $h=H$ \\
  \qquad For $r=1$ to $r=d$ \\
  \qquad\qquad $\widehat{\bm{W}}_r^{h} \leftarrow \widehat{\bm{W}}_r^{h-1} - \eta_h \frac{\partial \ell}{\partial \bm{W}_r}(\widehat{\mathcal{W}}^{h-1})$ \\
  \enspace Output $\widehat{\mathcal{W}}^{H} = \{\widehat{\bm{W}}_r^H\}_{r=1}^d$
  \end{tabbing}
\end{algorithm}
The output of Algorithm~\ref{alg:gd_concur} is an $n \times q \times d$ tensor-valued coordinate estimator $\widehat{\mathcal{W}}^{H}$.
In practice, we choose $\eta_h$ according to a backtracking search scheme.
We start from fixing the maximum step size, typically $\bar{\eta} = 1/nm$, and try one gradient descent step with size $\bar{\eta}$.
If the objective decreases we accept it and continue, otherwise we try the original step again with size $\bar{\eta}/2$.
This is repeated until we find a step size that decreases the objective.
In the next iteration, we begin with the step size accepted in the previous iteration, and repeat.

\ifjasa
  Both of these gradient descent algorithms require initialization of the basis coordinates.
  We provide the details of an initialization algorithm using kernel smoothed embeddings in online Appendix B.
\else
\fi

Based on an estimator $\widehat{\mathcal{W}}^{H}$ found using our gradient descent scheme, we can use the function basis $B$ to convert back to an estimate of the unknown latent processes. For $x \in \mathcal{X}$, define the $n \times d$ matrix
\ifjasa
  $\widehat{Z}^{H}(x) = \widehat{\mathcal{W}}^{H} \bar{\times}_2 B(x)$.
\else
\begin{equation*}
  \widehat{Z}^{H}(x) = \widehat{\mathcal{W}}^{H} \bar{\times}_2 B(x),
\end{equation*}
\fi
where we recall that the tensor-vector product operator $\bar{\times}_2$ takes a weighted sum of the $n \times d$ slices of $\widehat{\mathcal{W}}^{H}$ and the elements of $B(x)$ along the second mode, corresponding to the index space.
We will refer to the estimator $\widehat{Z}^{H}$, or simply $\widehat{Z}$, as a functional adjacency spectral embedding (FASE).

The FASE estimator, unlike many other approaches to dynamic networks,  can easily produce an estimate of the unknown latent processes at a value of $x$ where no snapshot is observed.
While the function basis modeling approach taken here is not well suited to forecasting (extrapolation), it should do well for predicting a snapshot at a new index $x$ in the interior of the index space (interpolation).
We evaluate the performance of FASE on this task and compare it to some competing approaches in Appendix C.2 of the supplementary materials.
FASE can also be easily adapted to the case where some edge variables are missing from some snapshots, similar to \cite{macdonald22latent}, by ignoring those triples in the calculation of the least squares objective \eqref{snap_opt}.
\ifblind
\else
FASE is implemented in an R package \texttt{fase}
available on CRAN.
\fi

\subsection{Initializing Gradient Descent} \label{subsec:initializers_new}

As most iterative methods, FASE relies on a suitable initial value.
In this section, we develop a principled initialization approach based on local averaging of network snapshots.

We start with the formal definition of adjacency spectral embedding \citep[ASE][]{tang13universally}. For an $n \times n$ symmetric matrix $M$ with eigendecomposition $Y \Lambda Y^{\tp}$, define
\ifjasa
  $\operatorname{ASE}_d(M) = Y_d \Lambda_d^{1/2}$,
\else
\begin{equation*}
  \operatorname{ASE}_d(M) = Y_d \Lambda_d^{1/2} \in \mathbb{R}^{n \times d},
\end{equation*}
\fi
where $Y_d \in \mathbb{R}^{n \times d}$ and $\Lambda_d \in \mathbb{R}^{d \times d}$ correspond to the first $d$ eigenvectors and eigenvalues of $M$, ordered according to the
absolute values of the diagonal entries of $\Lambda$.
If the diagonal entries of $\Lambda$ are distinct, then the ASE is uniquely defined up to sign flips of each column.

Under the conditions of the FASE model, suppose the sets $T_{\ell}$ for $\ell=1,\ldots,L$ form a partition of $\{1,\ldots,m\}$ into $L$ contiguous groups of approximately equal size.
We will use this decomposition of the index set to construct our initializer for the unknown coordinates.
First we embed a local mean adjacency matrix using the adjacency spectral embedding, and define, for each $\ell=1,\ldots,L$,
\begin{equation*}
  \widehat{Z}^0_{\ell} = \operatorname{ASE}_d \left( \frac{1}{\lvert T_{\ell} \rvert} \sum_{k \in T_{\ell}} A_k \right) .
\end{equation*}
Then, for each $\ell = 2,\ldots,L$, we perform an additional alignment step which orthogonally transforms the columns of each embedding to minimize the discrepancy with the previous embedding, as measured by the Frobenius norm.
This helps produce an initial smooth set of processes, targeting a smooth representative of the unidentified class $\mathcal{T}(Z)$ and increasing the potential to share information across network snapshots.

Then we set these embeddings as a piecewise constant estimator of the true processes, according to the partition $\{T_{\ell}\}_{\ell=1}^L$.
For each $k=1,\ldots,m$, define
\begin{equation*}
  \widehat{Z}^0(x_k) = \{ \widehat{Z}_{\ell} : k \in T_{\ell} \},
\end{equation*}
and denote them together as an $n \times m \times d$ tensor $\widehat{\mathcal{Z}}^0$.
As for the coordinate tensors, the first mode corresponds to nodes, the second to the index space now exactly corresponding to the snapshots, and the third mode to the latent space dimensions.

The initial coordinates for each $i=1,\ldots,n$ and $r=1,\ldots,d$ are given by the coordinates of the least squares solution in the corresponding spline basis,
\begin{equation} \label{init_zb}
  \widehat{\mathcal{W}}^0 = \widehat{\mathcal{Z}}^0 \times_2 (\bm{B}^{\tp}\bm{B})^{-1} \bm{B}^{\tp},
\end{equation}
where the tensor-matrix product $\times_2$ along the second index space mode maps each of the $m$-dimensional fibers of $\widehat{\mathcal{Z}}^0$ to their best fitting $q$-dimensional basis coordinates.
This procedure is written formally in Algorithm~\ref{alg:init_alg}.

\begin{algorithm} \label{alg:init_alg}
  \ifshortsim
  \onehalfspacing
  \else
  \doublespacing
  \fi
  \caption{Local average embedding initialization algorithm.}
  \begin{tabbing}
  \enspace Partition $\{1,\ldots,m\}$ into $L$ contiguous subsets $\{T_{\ell}\}_{\ell}^L$ \\ 
  \enspace For $\ell=1$ to $\ell=L$ \\
  \qquad Set $\widehat{Z}^0_{\ell} = \operatorname{ASE}_d \left( \frac{1}{\lvert T_{\ell} \rvert} \sum_{k=1}^m A_k \right)$ \\
  \qquad {\bf Optional:} If $\ell>1$ then \\
  \qquad\qquad Set $Q_{\ell} = \operatorname{argmin}_{Q \in \mathcal{O}_d} \lVert \widehat{Z}^0_{\ell}Q - \widehat{Z}^0_{\ell-1} \rVert^2_F$ \\
  \qquad\qquad Update $\widehat{Z}^0_{\ell} \leftarrow \widehat{Z}^0_{\ell}Q_{\ell}$ \\
  \enspace For $k=1,\ldots,m$ \\
  \qquad Set $\widehat{Z}^0(x_k) = \{ \widehat{Z}^0_{\ell} : k \in T_{\ell}\}$ \\
  \enspace Set $\widehat{\mathcal{Z}}^0 = \left\{ \widehat{Z}^0(x_k) \right\}_{k=1}^m \in \mathbb{R}^{n \times m \times d}$ \\
  \enspace Set $\widehat{\mathcal{W}}^0 = \widehat{\mathcal{Z}}^0 \times_2 (\bm{B}^{\tp}\bm{B})^{-1} \bm{B}^{\tp}$ \\
  \enspace Output $\widehat{\mathcal{W}}^0$
\end{tabbing}
\end{algorithm}

In Section~\ref{subsec:thm_init}, we will identify conditions under which this estimator $\mathcal{W}^0$ is close enough to a corresponding target to provide a good initializer for gradient descent for FASE.
We also use this result to formally motivate a choice of $L$, the number of sets in the partition.

\subsection{Parameter Tuning} \label{subsec:tuning}

Up to this point, we have treated both the latent space dimension $d$, and function basis $B$ as fixed.
In practice these will have to be selected based on data.
To simplify this tuning problem, we consider only cubic $B$-spline 
bases with knots in $\mathcal{X}$ placed at equally spaced quantiles of the snapshot indices $\{x_1,\ldots,x_m\}$.
Thus, basis selection is fully determined by an integer $q$ which is a function of the number of knots.
The parameter $d$ controls static complexity: flexibility of the network mean structure for each fixed $x \in \mathcal{X}$,
while the parameter $q$ controls functional complexity: flexibility of each latent process as a function of the index.

To select these two tuning parameters, we derive a penalized least squares criterion based on the total squared error \eqref{snap_opt} for the observed network snapshots, and the number of parameters ($nqd$) used to define the latent process model.
The details of this derivation are given in
\ifjmlr
  Appendix B of the supplementary materials.
\else
\ifjasa
  online Appendix C.
\else
  Appendix~\ref{app:tuning}.
\fi
\fi

In short, the objective \eqref{snap_opt} can be decomposed as the sum of squared residuals for $2n$ linear regression problems, comprising the incoming and outgoing edge values for each node.
Each of these problems is based on $nm/2$ independent observations, and $qd$ unknown coefficients.
We evaluate the generalized cross validation \citep[GCV][]{golub79generalized} criterion for each problem and take the mean, resulting in an overall {\em network GCV} (NGCV) criterion equivalent to
\begin{equation} \label{ngcv}
  \operatorname{NGCV}(q,d) = \log\left\{\frac{\ell(\widehat{\mathcal{W}})}{mn^2} \right\} - 2 \log\left( 1 - \frac{2qd}{nm} \right)
\end{equation}
for an estimated $n \times q \times d$ tensor $\widehat{\mathcal{W}}$ of basis coordinates.
Then, parameter tuning will proceed by minimizing $\operatorname{NGCV}$ over a grid $\{(q,d) : q_{\min} \leq q \leq q_{\max}, \tabby 1 \leq d \leq d_{\max}\}$, with user specified lower and upper bounds on each parameter.

While minimization of NGCV only requires the model to be fit once for each $(q,d)$ pair, this can still be computationally costly for large $m$ and $n$. For comparison, we also propose a more efficient heuristic approach to optimizing over the grid using coordinate descent.
In the coordinate descent scheme, we initialize $d=1$, and perform alternating minimization for $q$ and $d$ by evaluating NGCV on a grid and treating the other as fixed.

We evaluate our tuning approach on synthetic data in
\ifjasa
  online Appendix D.1.
\else
  Section~\ref{subsec:tuning_sims}.
\fi
It appears to consistently recover the ground truth parameters when the signal-to-noise ratio is sufficiently high.
In other cases, it still tends to select parameters with good performance in terms of recovery of the latent processes.

\section{Theoretical Guarantees} \label{sec:theory}

\ifjasa
\else
\ifjmlr
\else
\subsection{Preliminaries} \label{subsec:prelims}
\fi
\fi

In this section, we establish theoretical guarantees for the error of the gradient descent estimators of the true latent processes, averaged over nodes and snapshots, .
All proofs are provided in
\ifjasa
  online Appendix A.
\else
\ifjmlr
  Appendix A of the supplementary materials.
\else
  Appendix~\ref{app:proofs}.
\fi
\fi
As in Section~\ref{sec:model}, denote the true latent processes by $Z$, an $n \times d$ matrix-valued function of $\mathcal{X}$.
The latent space dimension $d$ will be treated as fixed.
Suppose that after centering, the edge distribution 
is sub-Gaussian with parameter $\sigma$,
and for simplicity assume that the networks are undirected and self loops are allowed.

We also define some notation for the important spectral quantities related to the true latent processes.
Define
\begin{equation*}
   \gamma_Z^2 = \min_{k=1,\ldots,m} \left[ \lambda_{\min}\left\{ Z(x_k)^{\tp}Z(x_k) \right\} \right],
\end{equation*}
and
\begin{equation} \label{Z_conditioning}
  \kappa = \gamma_Z^{-2} \left\{ \max_{k=1,\ldots,m} \left( \lambda_{\max}\left\{ Z(x_k)^{\tp}Z(x_k) \right\} \right) \right\} \geq 1.
\end{equation}
Throughout this section, constants may depend on $d$ or $\kappa$, which are treated as fixed, but will always be free of $n$, $m$, $q$, $\sigma$, and $\gamma_Z$ which will be tracked in the final bounds.
Estimation will proceed by
Algorithm~\ref{alg:gd_concur} on $\ell(\mathcal{W})$ with fixed $q$-dimensional $B$-spline basis $B$.

Let
\ifjasa
  $\bm{B} = [B(x_1) \cdots B(x_m)]^{\tp}$,
\else
\begin{equation*}
  \bm{B} = \begin{pmatrix} B(x_1) & \cdots & B(x_m) \end{pmatrix}^{\tp},
\end{equation*}
\fi
an $m \times q$ matrix which we will refer to as the $B$-spline {\em design matrix}.
Throughout the section, we will make the following assumptions on the basis and associated design matrix.
\begin{assumption} \label{assump:bs}
  \tabby 
  \begin{enumerate}[{(A)}]
    \item Assume that for each $k$, $B(x_k) \geq 0$ element-wise, $\lVert B(x_k) \rVert_1 = 1$, and $B(x_k)$ has at most $2D+1$ nonzero entries.
    \item Assume that $\bm{B}$ satisfies
    \ifjasa
      $c_Bm/q \leq \lambda_{\min}(\bm{B}^{\tp}\bm{B}) \leq \lambda_{\max}(\bm{B}^{\tp}\bm{B}) \leq C_Bm/q$
    \else
    \begin{equation*}
      \frac{c_Bm}{q} \leq \lambda_{\min}(\bm{B}^{\tp}\bm{B}) \leq \lambda_{\max}(\bm{B}^{\tp}\bm{B}) \leq \frac{C_Bm}{q}
    \end{equation*}
    \fi
    for constants $C_B > c_B > 0$.
  \end{enumerate}
\end{assumption}
Part (A) is satisfied by $B$-spline bases of fixed order $D \geq 0$.
Moreover, because $B$-splines are locally supported, if the snapshot indices and basis knots are evenly spread across $\mathcal{X}$, it follows that $\operatorname{tr}(\bm{B}^{\tp}\bm{B}) \sim m$.
Thus part (B) simply requires that $\bm{B}^{\tp}\bm{B}$ has a condition number of a constant order, a condition which holds if the support of every basis function contains on the order of $m/q$ snapshot indices, for growing $m$ and $q$.

\subsection{Results for Algorithm~\ref{alg:gd_concur}} \label{subsec:thm_concur}

In this subsection, we will establish an asymptotic bound on the error of the FASE estimator as computed by Algorithm~\ref{alg:gd_concur}, up to an unknown orthogonal transformation.
Recall that in general we can only identify an unknown representative of $\mathcal{T}(Z)$.
Thus, as in single layer latent space approaches \citep{ma20universal}, for a given iteration $h$ of gradient descent, there may be a different representative of $\mathcal{T}(Z)$ which minimizes the Frobenius norm error over the unidentified class.

In the following, we will only need the true processes evaluated at the snapshot indices, so we store these in an $n \times m \times d$ tensor $\mathcal{Z}$ with $n \times d$ slices given by $Z(x_k)$ for $k=1,\ldots,m$.
We also define the unidentified class of process snapshots by
\begin{equation*}
  \mathcal{T}^m(\mathcal{Z}) = \{ Z(x_k)Q_k : Q_k \in \mathcal{O}_d, k=1,\ldots,m \} \subset \mathbb{R}^{n \times m \times d}.
\end{equation*}
These process snapshots produce the same expected adjacency matrix snapshots as the original $\mathcal{Z}$ and are thus indistinguishable with respect to the least squares objective \eqref{snap_opt}.
As in Section~\ref{subsec:initializers_new}, the first mode corresponds to nodes, the second to the index space, and the third mode to the latent space dimensions.

To prove a result about Algorithm~\ref{alg:gd_concur}, we must control the intrinsic approximation error introduced by restricting our estimates to a finite-dimensional function basis.
This error will depend on $\mathcal{T}^m(\mathcal{Z})$ and the functions in $\operatorname{span}(B)$, as well as, due to nonidentifiability of $\mathcal{Z}$, on the current iterate of gradient descent.

Define a map from a set of coordinates to the space of snapshots,
\begin{align*}
  \mathcal{R}_{\bm{B},\mathcal{Z}}(W) &= \operatorname{argmin}_{Z \in \mathcal{T}^m(\mathcal{Z})} \lVert W - Z \times_2 (\bm{B}^{\tp}\bm{B})^{-1} \bm{B}^{\tp} \rVert_F^2 .
\end{align*}
In words, $\mathcal{R}_{\bm{B},\mathcal{Z}}$ takes a set of coordinates and finds the best-aligned orthogonal transformation of the true processes.
The tensor product in the second term inside the norm can be interpreted as finding a least squares solution with respect to the $B$-spline design matrix, as in \eqref{init_zb}.


Then $\mathcal{R}_{\bm{B},\mathcal{Z}}(\widehat{\mathcal{W}}^{h})$ is an $n \times m \times d$ tensor which we call the {\em snapshot target} at iteration $h$; it is an element of $\mathcal{T}^m(\mathcal{Z})$, so for each $k=1,\ldots,m$ and $h \geq 0$, we can find $Q_k^{*,h} \in \mathcal{O}_d$ such that $Z(x_k) Q_k^{*,h}$ is the $k$th $n \times d$ slice of $\mathcal{R}_{\bm{B},\mathcal{Z}}(\widehat{W}^{h})$.
We also define
\begin{equation} \label{def_target}
  \mathcal{W}^{*,h} = \mathcal{R}_{\bm{B},\mathcal{Z}} (\widehat{\mathcal{W}}^{h}) \times_2 (\bm{B}^{\tp}\bm{B})^{-1} \bm{B}^{\tp} \in \mathbb{R}^{n \times q \times d},
\end{equation}
which we call the {\em coordinate target} at iteration $h$.
Control of the approximation error at each iteration will naturally depend on how well the coordinate target approximates the snapshot target, through
\begin{align}
  \neweps^{(h)}_{\mathrm{approx},2} &= \frac{1}{m} \sum_{k=1}^m \lVert \mathcal{W}^{*,h} \bar{\times}_2 B(x_k) - Z(x_k) Q_k^{*,h} \rVert_F^2 \nonumber \\ 
  \neweps^{(h)}_{\mathrm{approx},\infty} &= \max_{k=1,\ldots,m} \lVert \mathcal{W}^{*,h} \bar{\times}_2 B(x_k) - Z(x_k) Q_k^{*,h} \rVert_F^2 \nonumber 
\end{align}
for $h \geq 0$, which describe average and maximum approximation errors over the snapshot times, respectively.
In the case of $B$-splines, the approximation errors will be small if the target in snapshot space is made up of latent processes which are smooth in $x$ \citep{schumaker07spline}.

Our main result will require that the approximation errors are controlled uniformly over $h$ with high probability.
While this condition cannot be verified, empirically we find these quantities do not tend to increase in $h$, and thus if the initializer induces a sufficiently smooth snapshot target, so too will the gradient descent iterates.


With these definitions in hand, we are now ready to state the assumptions required for our main result.
First, a condition on a properly scaled signal-to-noise ratio.

\begin{assumption} \label{assump:snr}
  \begin{equation*}
    \frac{\sigma^2 q^5n\log q}{m \gamma_Z^4} = o(1) .
  \end{equation*}
\end{assumption}


Additionally, we require asymptotic conditions on the approximation and initialization errors.
\begin{assumption} \label{assump:approxes}
  \begin{align*}
    \operatorname{sup}_{h \geq 0} \neweps^{(h)}_{\mathrm{approx},2} &= o_{\mathbb{P}}(\gamma_Z^2 / q) \\
    \operatorname{sup}_{h \geq 0} \neweps^{(h)}_{\mathrm{approx},\infty} &= o_{\mathbb{P}}(\gamma_Z^2)
  \end{align*}
\end{assumption}

\begin{assumption} \label{assump:init}
  \begin{equation*}
    \lVert \widehat{\mathcal{W}}^{0} - \mathcal{W}^{*,0} \rVert_F^2 = o_{\mathbb{P}}(\gamma_Z^2)
  \end{equation*}
\end{assumption}


All three of these assumptions appear in the proof to ensure a contraction property of the iterates in coordinate space, in particular that the discrepancy
\begin{equation*}
  \lVert \widehat{\mathcal{W}}^{h} - \mathcal{W}^{*,h} \rVert_F^2
\end{equation*}
between the current estimate and the current coordinate target shrinks as $h$ increases.
Empirically, we see that gradient descent does indeed converge as $h$ increases, but in theory this contraction does not guarantee convergence of $\widehat{\mathcal{W}}^{h}$.
Hence, our result is written in terms of a $\limsup$, which exists even if gradient descent does not converge.

Assumption~\ref{assump:approxes} puts a requirement on the approximation errors, and Assumption~\ref{assump:init} puts a requirement on the initialization error.
In the simplest case for a basis of dimension $q=1$, the latent processes are modeled as constant in $x$.
Then, $\neweps_{\mathrm{approx},2}^{(h)}$ is the total sample variance of the snapshot target processes around their means, and we require that this is asymptotically smaller than the squared magnitude of the process.
For larger $q$, we need additional parameters to proportionally reduce this variation.
Similarly, we require that the maximum squared discrepancy over $k=1,\ldots,m$ is smaller than the squared magnitude of the process.
Finally, Assumption~\ref{assump:init} requires that as $n$ increases, the initialization error is small compared to the magnitude of the processes.
Validity of this assumption is addressed in Section~\ref{subsec:thm_init}.




With these assumptions, we are ready to state the main result of this section.
Recall that in this asymptotic regime, we have $n \rightarrow \infty$, and allow $m$, $q$, $\gamma_Z$, and $\sigma$ to possibly grow as functions of $n$.


\begin{theorem} \label{thm:main_concur}
  Suppose $\{A_k\}_{k=1}^m$ are generated from a latent process network model, with independent sub-Gaussian edges with parameter $\sigma$.
  Suppose we compute a FASE estimator using Algorithm~\ref{alg:gd_concur} with $q$-dimensional $B$-spline basis $B$ and step size $\eta_{h} \equiv \eta'q/m \gamma_Z^2$ for a constant $\eta'$.
  Suppose Assumptions \ref{assump:bs}, \ref{assump:snr}, \ref{assump:approxes}, and \ref{assump:init} hold.
  Then if the sequence of average approximation errors $\neweps^{(h)}_{\mathrm{approx},2}$ satisfies
  \begin{equation} \label{main_approx}
    \limsup_{h \rightarrow \infty} \neweps^{(h)}_{\mathrm{approx},2} = O_{\mathbb{P}}(\alpha_n),
  \end{equation}
  it follows that the estimation errors satisfy
  \begin{equation} \label{main_concur_concl}
  \limsup_{h \rightarrow \infty} \frac{1}{mn} \sum_{k=1}^m \lVert \widehat{Z}^{h}(x_k) - Z(x_k)Q_k^{*,h} \rVert_F^2 = O_{\mathbb{P}} \left(  \frac{\sigma^2 q^4 \log q}{\gamma_Z^2 m} + \frac{\alpha_n}{n} \right) ,
  \end{equation}
  where $Z(x_k)Q_k^{*,h}$ is the $k$th $n \times d$ slice of the snapshot target at iteration $h$.
\end{theorem}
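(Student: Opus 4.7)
The plan is to apply the standard nonconvex gradient descent template for inner product factor models, adapted to the iteration-dependent target $\mathcal{W}^{*,h}$. I will work primarily in coordinate tensor space, using Assumption~\ref{assump:bs}(B) to translate between coordinate error $\|\Delta^h\|_F$ (with $\Delta^h := \widehat{\mathcal{W}}^h - \mathcal{W}^{*,h}$) and snapshot error only at the end. The structure is: (i) derive a one-step contraction for $\|\Delta^h\|_F^2$, (ii) show the stochastic and approximation residuals in that recursion are small, (iii) iterate until the signal-driven contraction factor $\rho < 1$ forces a limit, and (iv) convert to the snapshot error bound \eqref{main_concur_concl}.

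For steps (i) and (ii), I would decompose the gradient from Proposition~\ref{prop:snap_gradient} using $A_k = Z(x_k)Z(x_k)^\tp + E_k$ with sub-Gaussian $E_k$, and split $\partial \ell / \partial \bm{W}_r$ into a signal term (the gradient of the noiseless objective), a stochastic term driven by $\sum_k E_k \widehat{\bm{W}}_r^h B(x_k)B(x_k)^\tp$, and an approximation residual proportional to $\mathcal{W}^{*,h}\bar{\times}_2 B(x_k) - Z(x_k)Q_k^{*,h}$. Near $\mathcal{W}^{*,h}$, the noiseless objective is locally quadratic with Hessian of scale $m\gamma_Z^2/q$ using both parts of Assumption~\ref{assump:bs}, so the choice $\eta_h = \eta' q / (m \gamma_Z^2)$ delivers a contraction factor $\rho = 1 - c/\kappa$ for some $c > 0$. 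For the stochastic term, I would apply a matrix concentration inequality (matrix Bernstein or a Hanson--Wright style bound) to $\sum_k E_k \bm{W}_r^{*,h} B(x_k)B(x_k)^\tp$ combined with an $\epsilon$-net argument over the small ball around $\mathcal{W}^{*,h}$ guaranteed by Assumption~\ref{assump:init}, with the $\log q$ factor arising from the tail bound. The approximation residual enters through cross-terms with the signal, which are controlled in both $\ell_2$ and $\ell_\infty$ over $k$ by Assumption~\ref{assump:approxes}; the $\ell_\infty$ half is essential to bound the per-snapshot interaction in the inner product structure.

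The main obstacle, which distinguishes this analysis from the standard matrix factorization setting, is the time-varying target $\mathcal{W}^{*,h}$: its definition \eqref{def_target} depends on the current iterate through an $\operatorname{argmin}$ over orthogonal alignments, so a fixed target cannot be used due to the non-identifiability \eqref{nonident_cont}. The key is to show that $\|\mathcal{W}^{*,h+1} - \mathcal{W}^{*,h}\|_F$ is dominated by the same quantities that govern the contraction, so that after absorbing the target shift one still has a recursion $\|\Delta^{h+1}\|_F^2 \leq \rho \|\Delta^h\|_F^2 + C(\text{noise}^2 + \text{approx}^2)$. A convenient approach is to show that once $\|\Delta^h\|_F$ enters a local smallness regime---enforced by Assumption~\ref{assump:init} at $h=0$ and maintained thereafter by the signal-to-noise bound in Assumption~\ref{assump:snr}---the change of target between iterations is on the same order as the gradient step itself, so it is absorbed into the noise-plus-approximation budget without degrading $\rho$.

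Iterating the contraction then yields $\limsup_h \|\Delta^h\|_F^2 = O_{\mathbb{P}}(n q^5 \sigma^2 \log q / (\gamma_Z^2 m) + q \alpha_n)$ as the equilibrium balance between the noise and the contraction gap $1 - \rho$. Finally, Assumption~\ref{assump:bs}(B) gives $\frac{1}{mn}\sum_{k=1}^m \|\Delta^h \bar{\times}_2 B(x_k)\|_F^2 \leq (C_B/nq) \|\Delta^h\|_F^2$, which produces the $\sigma^2 q^4 \log q / (\gamma_Z^2 m)$ term in \eqref{main_concur_concl}, and a triangle inequality against the snapshot-space approximation residual $\mathcal{W}^{*,h} \bar{\times}_2 B(x_k) - Z(x_k)Q_k^{*,h}$, averaged over $k$ and controlled by \eqref{main_approx}, supplies the $\alpha_n / n$ term.
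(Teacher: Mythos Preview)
Your proposal is essentially correct and follows the same architecture as the paper: a one-step contraction in coordinate space via the decomposition of the gradient into signal, stochastic, and approximation pieces (the paper calls these $T_{\mathrm{mean}}$, $T_{\mathrm{op}}$, $T_{\mathrm{approx}}$, plus cross and quadratic remainders), an $\epsilon$-net concentration bound for the stochastic part, induction to maintain the local smallness regime, and the final conversion to snapshot space via Assumption~\ref{assump:bs}(B) and a triangle inequality.

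The one place where you work harder than necessary is the moving target. You propose to bound $\|\mathcal{W}^{*,h+1}-\mathcal{W}^{*,h}\|_F$ and absorb it into the recursion budget. The paper avoids this entirely: since $\mathcal{W}^{*,h+1}$ is by definition \eqref{def_target} the projection of the \emph{best-aligned} element of $\mathcal{T}^m(\mathcal{Z})$ to $\widehat{\mathcal{W}}^{h+1}$, one has immediately
\[
\|\widehat{\mathcal{W}}^{h+1}-\mathcal{W}^{*,h+1}\|_F^2 \;\le\; \|\widehat{\mathcal{W}}^{h+1}-\mathcal{W}^{*,h}\|_F^2,
\]
so the one-step analysis can be carried out against the \emph{previous} target $\mathcal{W}^{*,h}$ and the inequality above converts it to the new target for free. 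This is the standard Procrustes-alignment trick in factor-model gradient descent (cf.\ Ma, 2020), and it removes any need to track the target drift explicitly. Your route could be made to work but is strictly more delicate; the paper's version also explains why the result is stated as a $\limsup$ rather than a limit, since the targets themselves need not converge even though the distances contract.
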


The left hand side of \eqref{main_concur_concl} is an upper bound on the error of the limiting FASE estimator output by Algorithm~\ref{alg:gd_concur}, averaged over the snapshot indices and nodes.
The right hand side of \eqref{main_concur_concl} can be interpreted as a statistical error term and an approximation bias term.
The constant $\eta'$ is derived explicitly in the proof, and depends on the problem dimension, basis design, and spectral properties of $Z$.
In practice we do not attempt to use this step size, instead selecting it adaptively as described in Section~\ref{sec:estimation}.

In this setting, assuming that each latent process is in $\operatorname{span}(B)$ will not help establish consistency, as in general this assumption will no longer hold after applying the unknown orthogonal transformation, leading to nonzero approximation bias.
However, in the limit the average approximation bias may be bounded above by $\limsup_{h \rightarrow \infty} \neweps^{(h)}_{\mathrm{approx},2}$.
Thus \eqref{main_concur_concl} shows that the error in estimating the unknown latent processes inherits the rate $\alpha_n$ from the intrisic basis approximation error.
Based on the normalization of $\neweps^{(h)}_{\mathrm{approx},2}$, we expect $\alpha_n/n$ to be approximately constant in $n$, $m$, and $\sigma$; but decrease in $q$ as the function basis becomes more flexible.
Hence, there should be an optimal choice of $q$ for which the right hand side of \eqref{main_concur_concl} goes to zero asymptotically, and the FASE estimator is consistent.
To better understand this tradeoff, we state a more interpretable result in a special case where $d=1$.

When $d=1$, the unknown rotations at each $x_k$ reduce to sign flips, and we can show that when the initialization error is small, gradient descent will contract towards a special, deterministic coordinate target.
Define
\begin{equation*}
  \bm{W}_1^* = \operatorname{argmin}_{W \in \mathbb{R}^{n \times q}} \sum_{k=1}^m \lVert WB(x_k) - Z_1(x_k) \rVert_2^2
\end{equation*}
and state slightly altered assumptions on the approximation and initialization error in terms of this new $\bm{W}_1^*$.
\begin{assumption} \label{assump:approxes1}
  \begin{align*}
    \frac{1}{m}\sum_{k=1}^m \lVert \bm{W}_1^* B(x_k) - Z_1(x_k) \rVert_2^2 &= o(\gamma_Z^2 / q)  , \\
    \max_{k=1,\ldots,m} \lVert \bm{W}_1^* B(x_k) - Z_1(x_k) \rVert_2^2 &= o(\gamma_Z^2) .
  \end{align*}
\end{assumption}
\begin{assumption} \label{assump:init1}
  \begin{equation*}
    \lVert \widehat{\bm{W}}_1^{0} - \bm{W}_1^{*} \rVert_F^2 = o_{\mathbb{P}}(\gamma_Z^2) .
  \end{equation*}
\end{assumption}
Then we have the following stronger result, stated as a corollary of Theorem~\ref{thm:main_concur}.

\begin{corollary} \label{cor:main_concur1}
  Suppose $\{A_k\}_{k=1}^m$ are generated from a latent process network model with $d=1$, and independent sub-Gaussian edges with parameter $\sigma$.
  Suppose we compute a FASE estimator using Algorithm~\ref{alg:gd_concur} with $q$-dimensional $B$-spline basis $B$ and step size $\eta_{h} \equiv \eta''q/m \gamma_Z^2$ for a constant $\eta''$.
  Suppose Assumptions \ref{assump:bs}, \ref{assump:snr}, \ref{assump:approxes1}, and \ref{assump:init1} hold.
  Then if the approximation error satisfies
  \begin{equation} \label{main_approx1}
     \frac{1}{m}\sum_{k=1}^m \lVert \bm{W}_1^* B(x_k) - Z_1(x_k) \rVert_2^2 = O(\alpha'_n),
  \end{equation}
  it follows that the estimation error satisfies
  \begin{equation} \label{main_concur_concl1}
  \limsup_{h \rightarrow \infty} \frac{1}{mn} \sum_{k=1}^m \lVert \widehat{Z}_1^{h}(x_k) - Z_1(x_k) \rVert_F^2 = O_{\mathbb{P}} \left(  \frac{\sigma^2 q^4 \log q}{\gamma_Z^2 m} + \frac{\alpha'_n}{n} \right).
  \end{equation}
\end{corollary}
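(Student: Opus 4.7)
My plan is to deduce Corollary~\ref{cor:main_concur1} from Theorem~\ref{thm:main_concur} by showing that in the special case $d=1$, the coordinate target $\mathcal{W}^{*,h}$ is not only constant in $h$, but equal to the deterministic object $\bm{W}_1^*$ defined in the corollary. This will collapse the random, iteration-dependent approximation errors in Assumption~\ref{assump:approxes} into the single deterministic quantity appearing in Assumption~\ref{assump:approxes1}, and identify all the $Q_k^{*,h}$'s with $+1$.

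First I would unpack the $d=1$ structure of the non-identifiability. Since $\mathcal{O}_1 = \{-1,+1\}$, the class $\mathcal{T}^m(\mathcal{Z})$ reduces to the $2^m$ sign-flipped copies $\{(\epsilon_k Z_1(x_k))_{k=1}^m : \epsilon_k \in \{\pm1\}\}$. Correspondingly, for any coordinate tensor $W \in \mathbb{R}^{n\times q}$, the map $\mathcal{R}_{\bm{B},\mathcal{Z}}$ picks the signs $\epsilon_k^*(W)$ that minimize a residual against the least-squares $B$-spline fit to $\epsilon_k Z_1(x_k)$. By construction of $\bm{W}_1^*$ as the least-squares coefficient vector fitted to the un-flipped $Z_1(x_k)$'s, one can check that $\mathcal{R}_{\bm{B},\mathcal{Z}}(\bm{W}_1^*) = \mathcal{Z}$, i.e., all signs equal $+1$. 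The central step is then to show that this sign pattern is preserved by every gradient descent iterate.

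The main obstacle, which I would address next, is establishing this \emph{sign stability} along the iterates. The idea is to combine Assumption~\ref{assump:init1}, which says $\widehat{\bm{W}}_1^0$ is already within $o_{\mathbb{P}}(\gamma_Z)$ of $\bm{W}_1^*$ in Frobenius norm, with the contraction statement that underlies the proof of Theorem~\ref{thm:main_concur}. Informally, that contraction shows that $\|\widehat{\bm{W}}_1^{h} - \mathcal{W}^{*,h}\|_F$ remains of the same order as the initialization error plus a noise-level term, both of which are assumed to be $o_{\mathbb{P}}(\gamma_Z)$. Because the gap between two distinct sign patterns in $\mathcal{T}^m(\mathcal{Z})$, when projected back through $(\bm{B}^\tp\bm{B})^{-1}\bm{B}^\tp$, is of order $\gamma_Z$ at each $k$ (by definition of $\gamma_Z^2$ as the smallest $Z(x_k)^\tp Z(x_k)$), an iterate that lies within a $o_\mathbb{P}(\gamma_Z)$-ball of $\bm{W}_1^*$ cannot be closer to any competing sign pattern. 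Hence $\mathcal{R}_{\bm{B},\mathcal{Z}}(\widehat{\mathcal{W}}^{h}) = \mathcal{Z}$ and $Q_k^{*,h} = 1$ for every $h$ and $k$, with probability tending to one. This uses the key fact that the closest element of $\mathcal{T}^m(\mathcal{Z})$ does not change discontinuously inside a Frobenius ball of radius $o(\gamma_Z)$ around $\bm{W}_1^*$.

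Once sign stability is established, the final step is mechanical. The equalities $\mathcal{W}^{*,h} \equiv \bm{W}_1^*$ and $Q_k^{*,h} \equiv 1$ imply
\[
\neweps^{(h)}_{\mathrm{approx},2} = \frac{1}{m}\sum_{k=1}^m \lVert \bm{W}_1^* B(x_k) - Z_1(x_k)\rVert_2^2, \qquad
\neweps^{(h)}_{\mathrm{approx},\infty} = \max_{k} \lVert \bm{W}_1^* B(x_k) - Z_1(x_k)\rVert_2^2,
\]
for all $h$. Therefore Assumption~\ref{assump:approxes1} directly implies Assumption~\ref{assump:approxes}, Assumption~\ref{assump:init1} directly implies Assumption~\ref{assump:init}, and the hypothesis \eqref{main_approx1} supplies the rate $\alpha_n'$ appearing in \eqref{main_approx}. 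Applying Theorem~\ref{thm:main_concur} and replacing $Z(x_k) Q_k^{*,h}$ by $Z_1(x_k)$ throughout yields the bound \eqref{main_concur_concl1}, completing the argument. The one subtlety I would check carefully is that the step size rescaling from $\eta'$ to $\eta''$ is absorbed without issue, since both live in the same constant class allowed by Theorem~\ref{thm:main_concur}'s proof.
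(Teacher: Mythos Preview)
Your plan to invoke Theorem~\ref{thm:main_concur} as a black box by proving $\mathcal{W}^{*,h}\equiv\bm{W}_1^*$ has a genuine gap in the sign-stability step. The target $\mathcal{R}_{\bm{B},\mathcal{Z}}(W)$ is defined by a \emph{joint} minimization in coordinate space, and competing sign patterns need not be well separated there. Flipping a single sign $\epsilon_{k_0}$ shifts the projected target by $-2Z_1(x_{k_0})\{(\bm{B}^\tp\bm{B})^{-1}B(x_{k_0})\}^\tp$, whose Frobenius norm is only of order $\gamma_Z q/m$ under Assumption~\ref{assump:bs}, not $\gamma_Z$ as you assert. So an iterate lying within an $o_{\mathbb{P}}(\gamma_Z)$-ball of $\bm{W}_1^*$ can easily be closer (in coordinate space) to some $P(\epsilon)$ with $\epsilon\ne\mathbf{1}$, and you cannot conclude $Q_k^{*,h}\equiv+1$. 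Once any $Q_{k_0}^{*,h}=-1$, one has $\neweps^{(h)}_{\mathrm{approx},\infty}\gtrsim\gamma_Z^2$, so Assumption~\ref{assump:approxes} fails and Theorem~\ref{thm:main_concur} cannot even be applied.

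The paper does not try to apply Theorem~\ref{thm:main_concur} directly; it \emph{re-runs} the proof of Lemma~\ref{lemma:onestep2} with the fixed target $\bm{W}_1^*$ in place of the moving $\mathcal{W}^{*,h}$, modifying only the $T_{\mathrm{mean}}$ lower bound \eqref{tmean_lb}. The relevant alignment there is the \emph{per-$k$ Procrustes} sign $\operatorname{sign}\{Z_1(x_k)^\tp\widehat{\bm{W}}_1^hB(x_k)\}$ in snapshot space, where the separation between $\pm Z_1(x_k)$ really is $2\lVert Z_1(x_k)\rVert_2\ge 2\gamma_Z$. Since $\lVert\widehat{\bm{W}}_1^hB(x_k)-Z_1(x_k)\rVert_2\le (c_{\mathrm{prev}}^{1/2}+c_{\mathrm{approx},\infty}^{1/2})\gamma_Z$, Lemma~\ref{lemma:proc_alignment} forces the identity alignment once $c_{\mathrm{approx},\infty}\le 1/16$, after which the contraction argument goes through verbatim with the deterministic target. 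The essential point is that the sign argument must be carried out snapshot-by-snapshot in $\mathbb{R}^n$, not globally in the $nq$-dimensional coordinate space where single-index sign flips are nearly invisible.
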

If we make a parametric assumption that each orthogonalized latent process is in $\operatorname{span}(B)$, we have $\bm{W}_1^*B(x_k) = Z_1(x_k)$ for all $k=1,\ldots,m$, so that $\alpha'_n=0$, and get a consistent estimator on average over the nodes and snapshot indices.
In the corresponding nonparametric setting, suppose that each component function $z_{i,1}$ is twice differentiable.
Then by approximation results for cubic $B$-splines \citep{schumaker07spline}, there exists $w \in \mathbb{R}^q$ such that
\begin{equation*}
   \sup_{x \in \mathcal{X}} \left\lvert z_{i,1}(x) - w^{\tp}B(x) \right\rvert \tabby \lesssim \frac{1}{q^{2}} \cdot \sup_{x \in \mathcal{X}} \left\lvert \frac{\partial^2 z_{i,1}(x)}{\partial x^2} \right\rvert.
\end{equation*}
Thus, if the true processes have uniformly bounded second derivatives, we have $\alpha'_n \lesssim n/q^2$, and there exists a theoretically optimal $q$ which grows with $m$ and $n$ such that the FASE estimator is consistent for $Z_1$ on average over the nodes and snapshot indices.

This guarantee on the alignment of the true and estimated latent processes for $d=1$ motivates a sequential estimator of FASE for higher-dimensional models, which estimates one latent dimension at a time.
We describe this approach in Appendix F of the supplementary materials, and implement it in the R package \texttt{fase}.
However, the sequential estimator relies on strong assumptions on separation of the singular values of each $Z(x)$ (uniformly in $x$), so in general we recommend the concurrent gradient descent estimator for FASE presented in Section~\ref{sec:estimation}, which we use for all the simulation and real data results to follow.

\subsection{Results for initialization} \label{subsec:thm_init}

In this section we state the main result for our proposed initializer, a high probability upper bound for recovery of the initial coordinate target, as defined in Section~\ref{subsec:thm_concur}.
Then, we derive an optimal choice for $L$, the number of sets in the partition of the index set, as defined in Section~\ref{subsec:initializers_new}.

For simplicity, in this section we assume the snapshot indices are equally spaced on $\mathcal{X} = [0,1]$, that is, $x_k = k/m$ for $k=1,\ldots,m$, and $T_{\ell}$ splits the indices into equal sized contiguous subsets.  Supposing $m/L$ is an integer, $\lvert T_{\ell} \rvert = m/L$ for all $\ell=1,\ldots,L$, and two indices in the same $T_{\ell}$ are separated by at most $1/L$ in the index space.

We also assume a Lipschitz condition on the latent processes $z_{i,r}(x)$ as a function of $x$.
\begin{assumption} \label{assump:init_lipschitz}
  Suppose that for $i=1,\ldots,n$ and $r=1,\ldots,d$, each latent process $z_{i,r}(x)$ satisfies
  \begin{equation*}
    \sup_{x,y \in \mathcal{X}} \lvert z_{i,r}(x) - z_{i,r}(y) \rvert \leq K_1 \lvert x - y \rvert
  \end{equation*}
  for a uniform a constant $K_1 > 0$.
\end{assumption}
Note that this condition need only hold for a particular orthogonal transformation of the latent processes, not for any element of the unidentified class $\mathcal{T}(Z)$.

With these assumptions, we can state the main result of this section.
\begin{proposition} \label{prop:init_concur}
  Suppose $\{A_k\}_{k=1}^m$ are generated from a latent process network model, with independent sub-Gaussian edges with parameter $\sigma$.
  Define the initializer $\widehat{\mathcal{W}}^0$ as in \eqref{init_zb}, with parameter $L$.  
  Suppose Assumption \ref{assump:init_lipschitz} holds.
  Then the initializer $\widehat{\mathcal{W}}^0$ satisfies
  \begin{equation*}
    \lVert \widehat{\mathcal{W}}^0 - \mathcal{W}^{*,0} \rVert_F^2 \leq \left\{ \frac{C_B q}{c_B^2} \left( \frac{(2\sqrt{10d} + 1)K_1 \sqrt{d n}}{\gamma_Z L} + \frac{c_{\mathrm{prob}} \sigma \sqrt{10 d L n}}{\gamma_Z^2 \sqrt{m}} \right)^2 \right\} \gamma_Z^2
  \end{equation*}
  with probability at least $1 - 4L \exp(-n)$, where $\mathcal{W}^{*,0}$ is the associated coordinate target for $\widehat{\mathcal{W}}^0$, as defined in \eqref{def_target}, and $c_{\mathrm{prob}}$ is a universal constant.
\end{proposition}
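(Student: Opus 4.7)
The plan is to reduce the coordinate-space error $\lVert \widehat{\mathcal{W}}^0 - \mathcal{W}^{*,0} \rVert_F^2$ to an aggregate snapshot-space error, and then to bound the latter by a per-partition analysis of the adjacency spectral embedding applied to $\bar{A}_\ell = \lvert T_\ell \rvert^{-1}\sum_{k \in T_\ell} A_k$. Since $\mathcal{W}^{*,0}$ is by definition the basis projection of the element of $\mathcal{T}^m(\mathcal{Z})$ that best matches $\widehat{\mathcal{W}}^0$, for any $\widetilde{\mathcal{Z}} \in \mathcal{T}^m(\mathcal{Z})$,
\begin{equation*}
\lVert \widehat{\mathcal{W}}^0 - \mathcal{W}^{*,0} \rVert_F^2 \le \lVert (\widehat{\mathcal{Z}}^0 - \widetilde{\mathcal{Z}}) \times_2 (\bm{B}^\tp \bm{B})^{-1}\bm{B}^\tp \rVert_F^2 .
\end{equation*}
Applying the operator norm bound for tensor--matrix products together with Assumption~\ref{assump:bs}(B) on the eigenvalues of $\bm{B}^\tp \bm{B}$ peels off a prefactor of order $q/m$ (with a constant depending on $C_B$ and $c_B$) multiplied by the snapshot-space error $\lVert \widehat{\mathcal{Z}}^0 - \widetilde{\mathcal{Z}} \rVert_F^2$. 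I then select $\widetilde{\mathcal{Z}}$ to equal $Z(x_k)Q_\ell^*$ for $k \in T_\ell$, with a single orthogonal matrix $Q_\ell^*$ per partition, chosen below.

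Within each partition $\ell$, fix a reference index $\tilde{x}_\ell \in T_\ell$ and write
\begin{equation*}
\lVert \widehat{Z}^0_\ell - Z(x_k)Q_\ell^* \rVert_F \le \lVert \widehat{Z}^0_\ell - Z(\tilde{x}_\ell)Q_\ell^* \rVert_F + \lVert Z(\tilde{x}_\ell) - Z(x_k) \rVert_F .
\end{equation*}
The second term is at most $K_1 \sqrt{nd}/L$ by Assumption~\ref{assump:init_lipschitz}, since $\lvert x_k - \tilde{x}_\ell \rvert \le 1/L$ and the Frobenius norm sums over $nd$ Lipschitz components. For the first term, I set $Q_\ell^*$ to be the optimal Procrustes rotation between $\widehat{Z}^0_\ell$ and $Z(\tilde{x}_\ell)$ and apply a Davis--Kahan-type bound for the ASE of $\bar{A}_\ell$, yielding
\begin{equation*}
\lVert \widehat{Z}^0_\ell Q_\ell^* - Z(\tilde{x}_\ell) \rVert_F \; \lesssim \; \sqrt{d}\, \lVert \bar{A}_\ell - Z(\tilde{x}_\ell)Z(\tilde{x}_\ell)^\tp \rVert_2 / \gamma_Z .
\end{equation*}
I then decompose the perturbation into a deterministic bias $E[\bar{A}_\ell] - Z(\tilde{x}_\ell)Z(\tilde{x}_\ell)^\tp$ and a stochastic part $\bar{A}_\ell - E[\bar{A}_\ell]$. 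The bias is a local average of $Z(x_k)Z(x_k)^\tp - Z(\tilde{x}_\ell)Z(\tilde{x}_\ell)^\tp$; using the identity $AA^\tp - BB^\tp = A(A-B)^\tp + (A-B)B^\tp$ together with the Lipschitz bound on $\lVert Z(x_k) - Z(\tilde{x}_\ell) \rVert$ and the spectral bound $\lVert Z(\cdot) \rVert_2 \lesssim \gamma_Z$ (from \eqref{Z_conditioning}) gives a bias of operator norm $O(\gamma_Z K_1 \sqrt{nd}/L)$.

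For the stochastic part, entries of $\bar{A}_\ell - E[\bar{A}_\ell]$ are means of $m/L$ independent centered sub-Gaussian($\sigma$) variables, hence sub-Gaussian($\sigma \sqrt{L/m}$), so standard concentration for symmetric sub-Gaussian matrices gives $\lVert \bar{A}_\ell - E[\bar{A}_\ell] \rVert_2 \le c_{\mathrm{prob}} \sigma \sqrt{Ln/m}$ on an event of probability at least $1 - 2 e^{-n}$. A union bound across the $L$ partitions (covering both the concentration and the eigengap events needed for the perturbation bound to apply) produces the failure probability $\le 4 L e^{-n}$. Summing the per-partition squared errors over the $m/L$ indices in each partition, and then across $L$ partitions, and multiplying by the basis prefactor from the first step yields the stated bound. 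The main obstacle is producing the Davis--Kahan Procrustes bound for the ASE with the precise scalings in $d$ and $\gamma_Z$; this is the standard but delicate piece of the argument, as one must simultaneously track the $\sqrt{d}$ factor from eigen-perturbation, the $\gamma_Z$ factor appearing in $\lVert Z \rVert_2$ inside the bias, and the $\gamma_Z$ in the denominator from the eigengap. A secondary subtlety is that a single $Q_\ell^*$ must control the error for every $x_k \in T_\ell$ simultaneously; this is handled cleanly because the triangle inequality separates the Procrustes alignment at $\tilde{x}_\ell$ from the Lipschitz motion across the partition. The cross-partition alignment step in Algorithm~\ref{alg:init_alg} plays no role here, since each $Q_\ell^*$ is chosen independently in constructing $\widetilde{\mathcal{Z}}$.
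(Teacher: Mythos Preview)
Your proposal is correct and matches the paper's proof essentially step for step: the same reduction from coordinate space to snapshot space via the optimality of $\mathcal{W}^{*,0}$ and Assumption~\ref{assump:bs}(B), the same choice of one rotation $Q_\ell^*$ per partition obtained from a Procrustes/Davis--Kahan perturbation bound for the ASE (the paper packages this as a lemma based on Tu et al.), the same bias--noise decomposition of $\bar{A}_\ell - Z(\tilde{x}_\ell)Z(\tilde{x}_\ell)^\tp$ with the Lipschitz condition controlling the bias and sub-Gaussian matrix concentration controlling the noise, and the same union bound over the $L$ partitions. The only cosmetic difference is that the paper's ASE perturbation lemma does not require a separate eigengap event, so the $4L e^{-n}$ failure probability comes entirely from the concentration step.
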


Towards justifying the asymptotic rate in Assumption~\ref{assump:init}, 
we derive an optimal choice of $L$ (up to constant factors) and analyze the resulting effect on initialization error.
Treating $C_B$, $c_B$, $K_1$, and $d$ as constants,
we have, with high probability,
\begin{equation*}
  \frac{1}{\gamma_Z} \lVert \widehat{\mathcal{W}}^0 - \mathcal{W}^{*,0} \rVert_F \lesssim \frac{\sqrt{qn}}{L \gamma_Z} + \frac{\sigma \sqrt{qLn}}{\gamma_Z^2 \sqrt{m}} .
\end{equation*}

Some algebra shows that the first two terms of the upper bound are minimized for
\begin{equation*}
  \hat{L} \sim \left( \frac{\gamma_Z \sqrt{m}}{\sigma} \right)^{2/3},
\end{equation*}
and plugging this in gives optimized upper bound
\begin{equation*} 
  \frac{1}{\gamma_Z} \lVert \widehat{\mathcal{W}}^0 - \mathcal{W}^{*,0} \rVert_F \lesssim \frac{\sigma^{2/3} (qn)^{1/2}}{m^{1/3}\gamma_Z^{5/3}}.
\end{equation*}
If this upper bound on relative error goes to zero, then the initialization assumption will hold asymptotically with high probability.

\section{Evaluation on Synthetic Networks} \label{sec:simulations}

\ifjmlr
In this section, we will evaluate FASE against competing methods for functional or dynamic network embedding, by applying them to simulated functional networks.
\else
\fi

\ifjasa
\else
\subsection{Latent Process Recovery} \label{subsec:error_sims}
\fi


\ifjmlr
First, we compare our FASE estimator to existing methods for similar inner product latent space models for both weighted and binary edge networks, in terms of recovery of the latent processes snapshots.
\else
We compare our FASE estimator to existing methods for similar inner product latent space models for both weighted and binary edge networks.
\fi
We will implement FASE (Algorithm~\ref{alg:gd_concur}) with both an oracle and an adaptive NGCV tuning scheme, as well as three
\ifjasa
  approaches based on the adjacency spectral embedding (ASE)
\else
  ASE-based approaches
\fi
that have been applied in the past to functional, in particular dynamic, network data \citep{sannapassino21link}.
First, we apply the usual $d$-dimensional ASE to each of the $m$ adjacency matrix snapshots.
Second, we apply the omnibus ASE (OMNI) \citep{levin17central}, which finds a $d$-dimensional embedding at each snapshot index based on the ASE of
\ifshortsim
   a so-called {\em omnibus} matrix.
\else
  the so-called {\em omnibus} matrix given by
  \begin{equation*}
    \begin{pmatrix}
      A_1 & \frac{A_1 + A_2}{2} & \cdots & \frac{A_1 + A_m}{2} \\
      \frac{A_1 + A_2}{2} & A_2 & & \vdots \\
      \vdots & & \ddots & & \\
      \frac{A_1 + A_m}{2} & \cdots & & A_m \\
    \end{pmatrix}.
  \end{equation*}
\fi
Third, we apply the multiple ASE (COSIE) \citep{arroyo21inference}, which assumes that the expectations of the adjacency matrix snapshots share a common invariant subspace.
For these baseline estimators, we assume oracle knowledge of $d$.

We generate instances of the latent process network model under three scenarios.
\begin{enumerate}[{(i)}]
  \item Parametric Gaussian network with $B$-spline processes. In this scenario we generate each component process $z_{i,r}$ for $i=1,\ldots,n$ and $r=1,\ldots,d$ from a cubic $B$-spline basis $B$ on $[0,1]$, with equally spaced knots and dimension 10.
  In particular we generate $\bm{w}_{i,r} \sim \mathcal{N}_{10}(0,I_{10})$, and then define $z_{i,r}(x) = \bm{w}_{i,r}^{\tp}B(x)$ for $x \in [0,1]$.
  Then for equally spaced snapshot indices $0=x_1 < \cdots < x_m=1$, we set
  \ifshortsim
    $A_k = Z(x_k)Z(x_k)^{\tp} + E_k$,
  \else
  \begin{equation*}
    A_k = Z(x_k)Z(x_k)^{\tp} + E_k,
  \end{equation*}
  \fi
  where each $E_k$ is a symmetric matrix of independent Gaussian random variables with variance $\sigma^2$.
  \item Nonparametric Gaussian network with sinusoidal processes. In this scenario we generate each component process as
  \begin{equation*}
    z_{i,r}(x) = \frac{3 \sin [2\pi (2x - U_{i,r})]}{1 + 5[x + B_{i,r}(1-2x)]} + G_{i,r}
  \end{equation*}
  where $U_{i,r} \iid \mathrm{Unif}[0,1]$, $B_{i,r} \iid \mathrm{Bernoulli}(1/2)$, and $G_{i,r} \iid \mathcal{N}(0,1/4)$.
  \ifshortsim
  \else
  Each process is a shifted (according to $G$) sine function which goes through 2 full cycles from a random starting point (controlled by $U$), with amplitude either increasing or decreasing (controlled by $B$) from $3$ to $1/2$ or $1/2$ to $3$.
  \fi
  Then for equally spaced snapshot indices $0=x_1 < \cdots < x_m=1$, we set
  \ifshortsim
    $A_k = Z(x_k)Z(x_k)^{\tp} + E_k$,
  \else
  \begin{equation*}
    A_k = Z(x_k)Z(x_k)^{\tp} + E_k,
  \end{equation*}
  \fi
  where each $E_k$ is a symmetric matrix of independent Gaussian random variables with variance $\sigma^2$.
  \item Parametric RDPG network with $B$-spline processes.
  In this scenario we generate each component process $z_{i,r}$ for $i=1,\ldots,n$ and $r=1,\ldots,d$ from a cubic $B$-spline basis $B$ on $[0,1]$, with equally spaced knots and dimension $10$.
  We generate each $d$-dimensional fiber of the full $n \times 10 \times d$ coordinate tensor $\mathcal{W}$ as an independent Dirichlet random variable with parameter $[0.1 \cdots 0.1]^{\tp}$.
  The coordinates are then rescaled to control the overall network density.
  Then for $i \leq j$, we generate
  \begin{equation*}
    \left[ A_k \right]_{ij} \sim \operatorname{Bernoulli} \left\{ \sum_{r=1}^d z_{i,r}(x_k)z_{j,r}(x_k) \right\}
  \end{equation*}
  and set $\left[ A_k \right]_{ji}$ to make each $A_k$ symmetric.
\end{enumerate}

To compare the performance of FASE against the baseline estimators, we evaluate error for recovery of the latent processes up to orthogonal transformation, averaged over the snapshot indices:
\begin{equation*}
  \mathrm{Err}_Z(\widehat{Z}) = \left\{
  \frac{1}{ndm} \sum_{k=1}^m \min_{Q_k \in \mathcal{O}_d} \lVert \widehat{Z}(x_k) -  Z(x_k)Q_k \rVert_F^2 \right\}^{1/2}.
\end{equation*}
This error metric is similar to the error bounded in the conclusion of Theorem~\ref{thm:main_concur}.
If our adaptive implementation of FASE selects $d$ incorrectly, then either the true or estimated latent processes are given additional columns of all zeros so that the dimensions match.
We report two errors for the FASE estimator found using Algorithm~\ref{alg:gd_concur}, one for the adaptive version which selects $d$ and $q$ using a grid search with candidates $q=6,8,\ldots,16$ and $d=1,2,\ldots,6$, and the NGCV criterion defined in Section~\ref{subsec:tuning} (FASE (NGCV)); and another oracle FASE estimator which fits the model with ground truth $d$, and $q$ selected to minimize $\mathrm{Err}_Z$ (FASE (ORC)). 
\ifbka
  Detailed results regarding the tuning parameter selection performance of the adaptive FASE can be found in online Appendix D.1.
\else
\fi
In all of the following plots, vertical lines at each point denote plus and minus $2$ standard errors over the independent replications.


In
\ifshortsim
  Figures~\ref{fig:i_jasa}
\else
  Figures~\ref{fig:i_varym} and \ref{fig:i_varyn},
\fi
we report results for scenario (i) generated with
\ifshortsim
  $\sigma=2$ and $\sigma=8$.
\else
  $\sigma \in \{2,4,6,8\}$.
\fi
In
\ifshortsim
  the left column of Figure~\ref{fig:i_jasa}
\else
  Figure~\ref{fig:i_varym}
\fi
we vary the number of snapshots $m \in \{20,40,\ldots,200\}$ for fixed $n=100$ and $d=2$, and in
\ifshortsim
  the right columns
\else
  Figure~\ref{fig:i_varyn}
\fi
we vary the number of nodes $n \in \{80,120,\ldots,400\}$ for fixed $m=80$ and $d=2$.
In all settings, $\mathrm{Err}_Z$ is averaged over 50 independent replications.
In
\ifshortsim
  the left column of Figure~\ref{fig:i_jasa}, neither
\else
  Figure~\ref{fig:i_varym}, in all four panels, none
\fi
of the baseline ASE estimators show an improvement with increasing $m$, while FASE does.
Note that for $\sigma=4$ the plotted points for COSIE are not visible, as the performance coincides with that of OMNI.
In
\ifshortsim
  the right column of Figure~\ref{fig:i_jasa},
\else
  Figure~\ref{fig:i_varyn},
\fi
only FASE and ASE show improvement with increasing $n$.
While ASE improves at a faster rate than FASE, it never outperforms it, even for the largest values of $n$ considered.
In almost all settings, FASE performs the best of all methods.
This is true regardless of whether tune $d$ and $q$ adaptively, as the errors for FASE (NGCV) and FASE (ORC) are nearly indistinguishable in these plots.
Among the baselines, the errors for ASE, which is unbiased, are by far the most sensitive to $\sigma$.
On the other hand, COSIE and OMNI, which share information globally, incur a lot of bias in this setting, where the latent processes are only similar locally in the index variable.
\ifshortsim
  For $\sigma=8$ and $m=20$,
\else
  In Figure~\ref{fig:i_varym} for $\sigma=8$ and $m=20$,
\fi
the adaptive FASE estimator is able to outperform the oracle on average.
This phenomenon, which we discuss in more detail in
\ifjasa
  online Appendix D.1,
\else
  Section~\ref{subsec:tuning_sims},
\fi
can occur when the signal is low and the adaptive estimator selects a value of $d$ which achieves better error than the ground truth $d$.
\ifshortsim
\begin{figure}
  \centering
  \includegraphics[width=.7\textwidth]{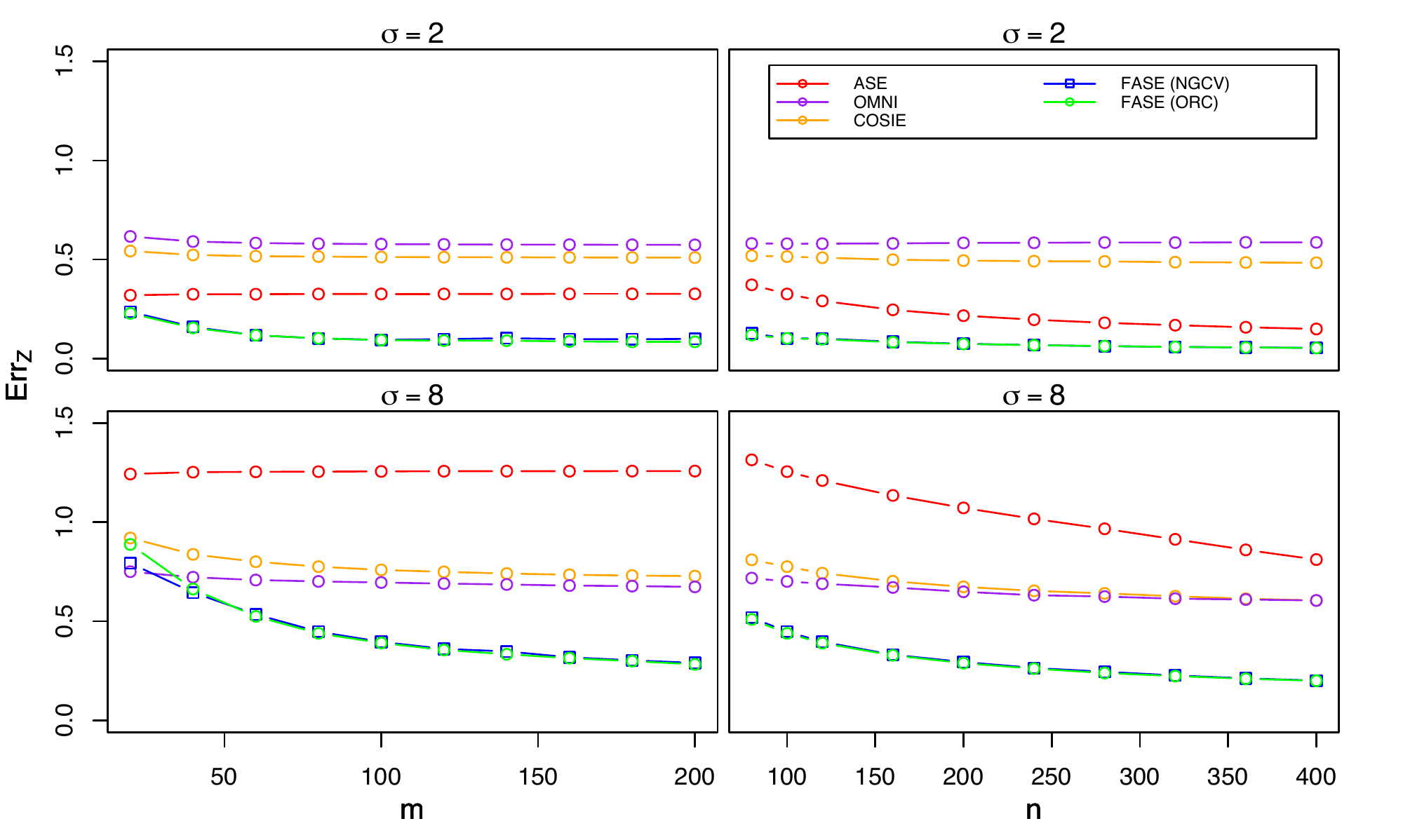}
  \caption{Mean of $\mathrm{Err}_Z$, varying $m$, the number of snapshots (left column) or $n$, then number of nodes (right column). Scenario (i), parametric Gaussian networks. Plots are labeled by edge standard deviation $\sigma$.}
  \label{fig:i_jasa}
\end{figure}
\else
\begin{figure}
  \centering
  \includegraphics[width=\textwidth]{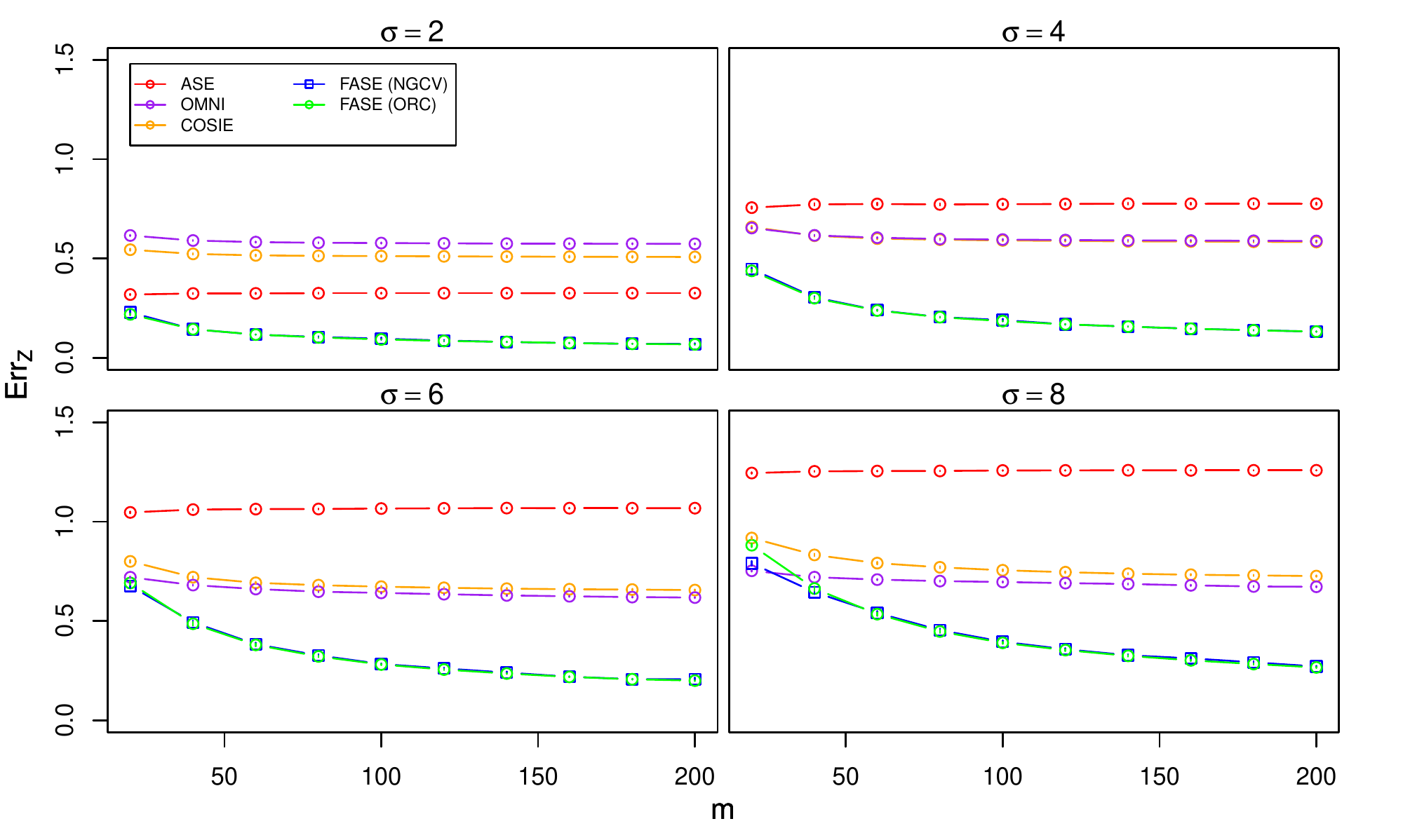}
  \caption{Mean of $\mathrm{Err}_Z$, varying $m$, the number of snapshots. Scenario (i), parametric Gaussian networks. Plots are labeled by edge standard deviation $\sigma$.}
  \label{fig:i_varym}
\end{figure}
\begin{figure}
  \centering
  \includegraphics[width=\textwidth]{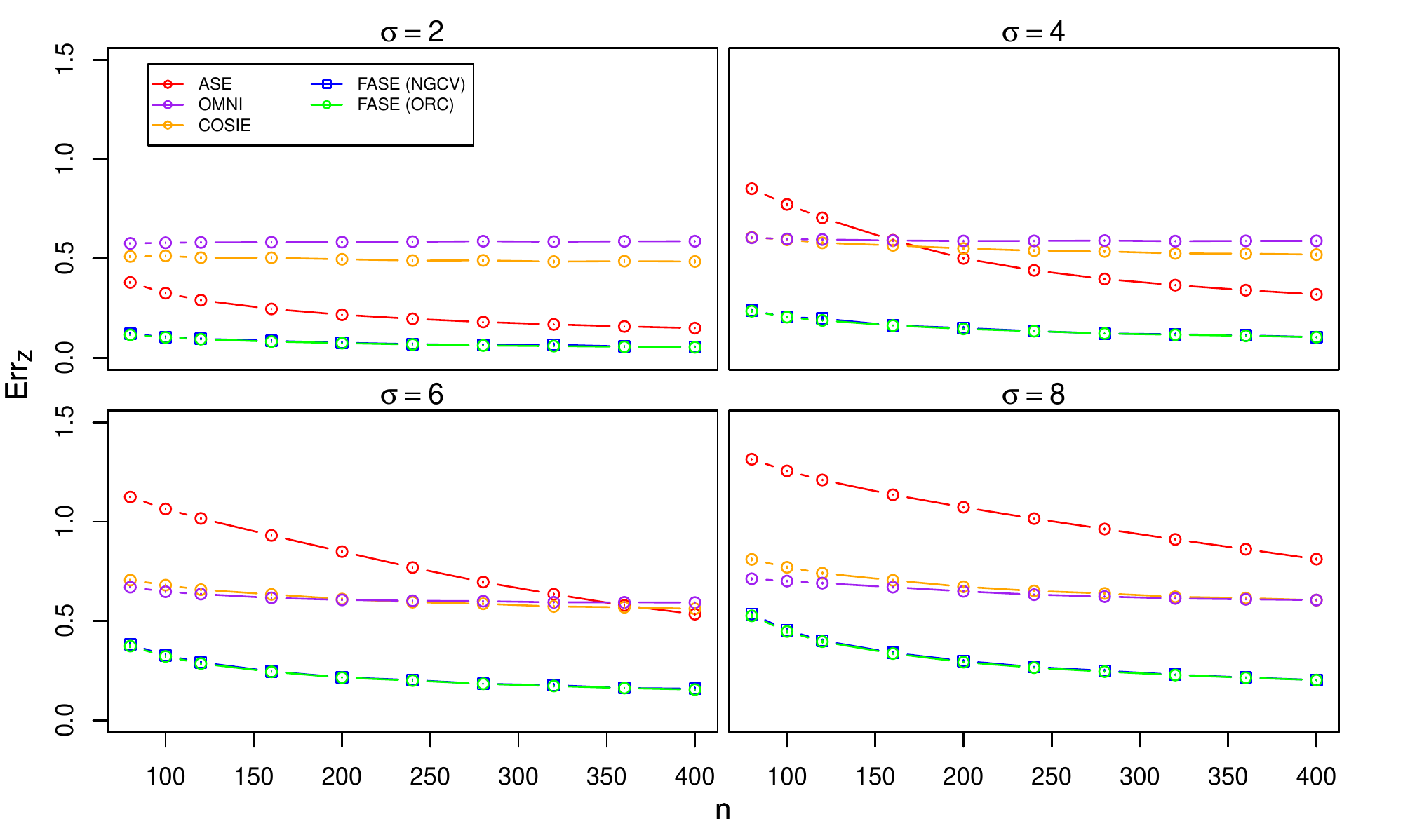}
  \caption{Mean of $\mathrm{Err}_Z$, varying $n$, the number of nodes. Scenario (i), parametric Gaussian networks. Plots are labeled by edge standard deviation $\sigma$.}
  \label{fig:i_varyn}
\end{figure}
\fi

In
\ifshortsim
  Figure~\ref{fig:ii_jasa}
\else
  Figures~\ref{fig:ii_varym} and \ref{fig:ii_varyn},
\fi
we report results for scenario (ii) generated with
\ifshortsim
  $\sigma=2$ and $\sigma=8$.
\else
  $\sigma \in \{2,4,6,8\}$.
\fi
In
\ifshortsim
  the left column of Figure~\ref{fig:ii_jasa}
\else
  Figure~\ref{fig:ii_varym}
\fi
we vary the number of snapshots $m \in \{20,40,\ldots,200\}$ for fixed $n=100$ and $d=2$, and in
\ifshortsim
  the right column
\else
  Figure~\ref{fig:ii_varyn}
\fi
we vary the number of nodes $n \in \{80,120,\ldots,400\}$ for fixed $m=80$ and $d=2$.
In all settings, $\mathrm{Err}_Z$ is averaged over 50 independent replications.
Results in these plots look similar to those in scenario (i), confirming that FASE is not relying on any parametric assumptions made on the true latent processes.
In fact, even in the low signal to noise parameter settings for scenario (i) where OMNI outperformed FASE,
FASE now outperforms all of its competitors.
In
\ifshortsim
  the top right panel of Figure~\ref{fig:ii_jasa},
\else
  Figure~\ref{fig:ii_varyn}, for $\sigma=2$
\fi
the errors for ASE are very close to those for FASE, but this is only because of the very high signal to noise ratio, and even in relative terms, the performance of ASE is comparatively worse as $\sigma$ increases.
\ifshortsim
\begin{figure}
  \centering
  \includegraphics[width=.7\textwidth]{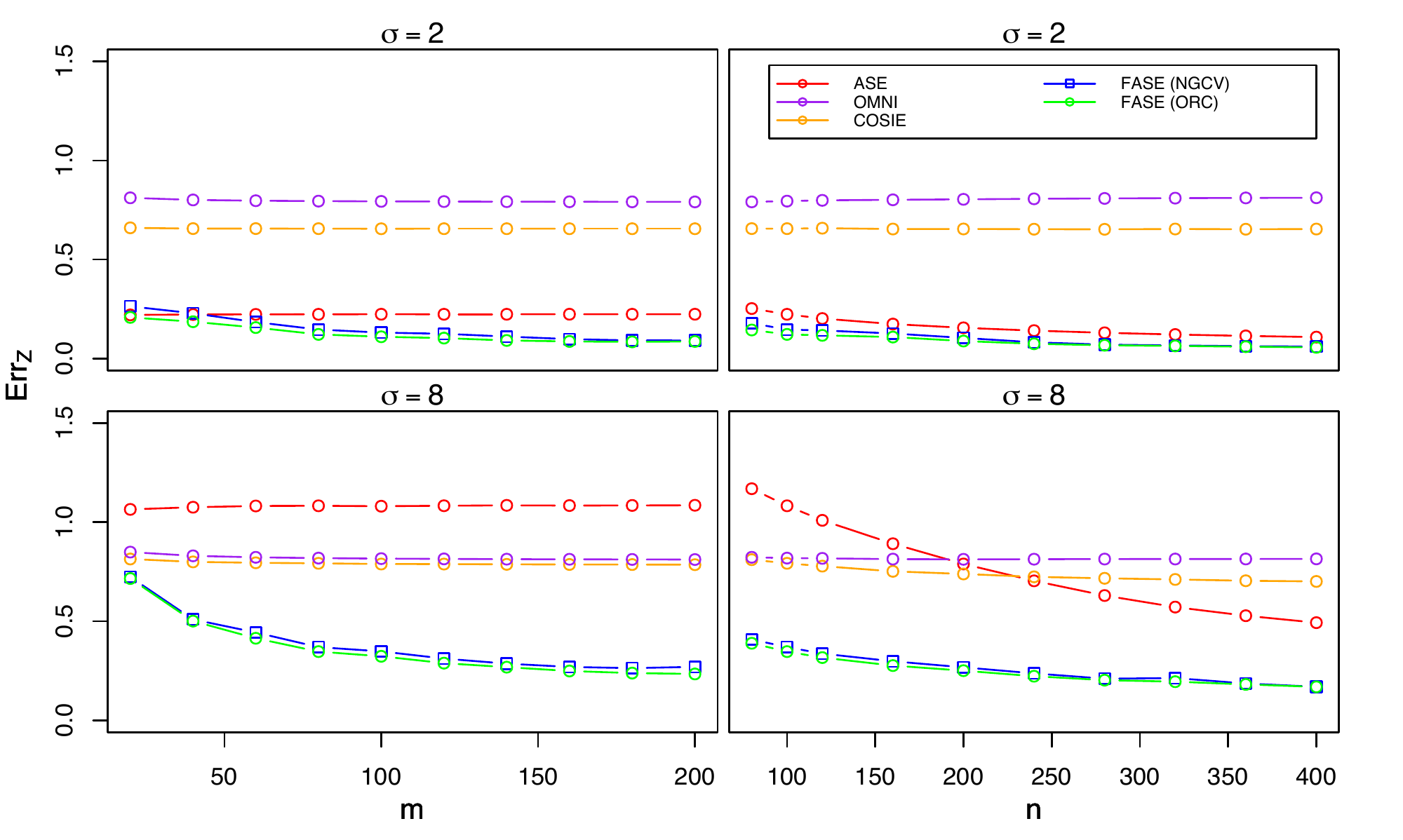}
  \caption{Mean of $\mathrm{Err}_Z$, varying $m$, the number of snapshots (left column) or $n$, then number of nodes (right column). Scenario (ii), nonparametric Gaussian networks. Plots are labeled by edge standard deviation $\sigma$.}
  \label{fig:ii_jasa}
\end{figure}
\else
\begin{figure}
  \centering
  \includegraphics[width=\textwidth]{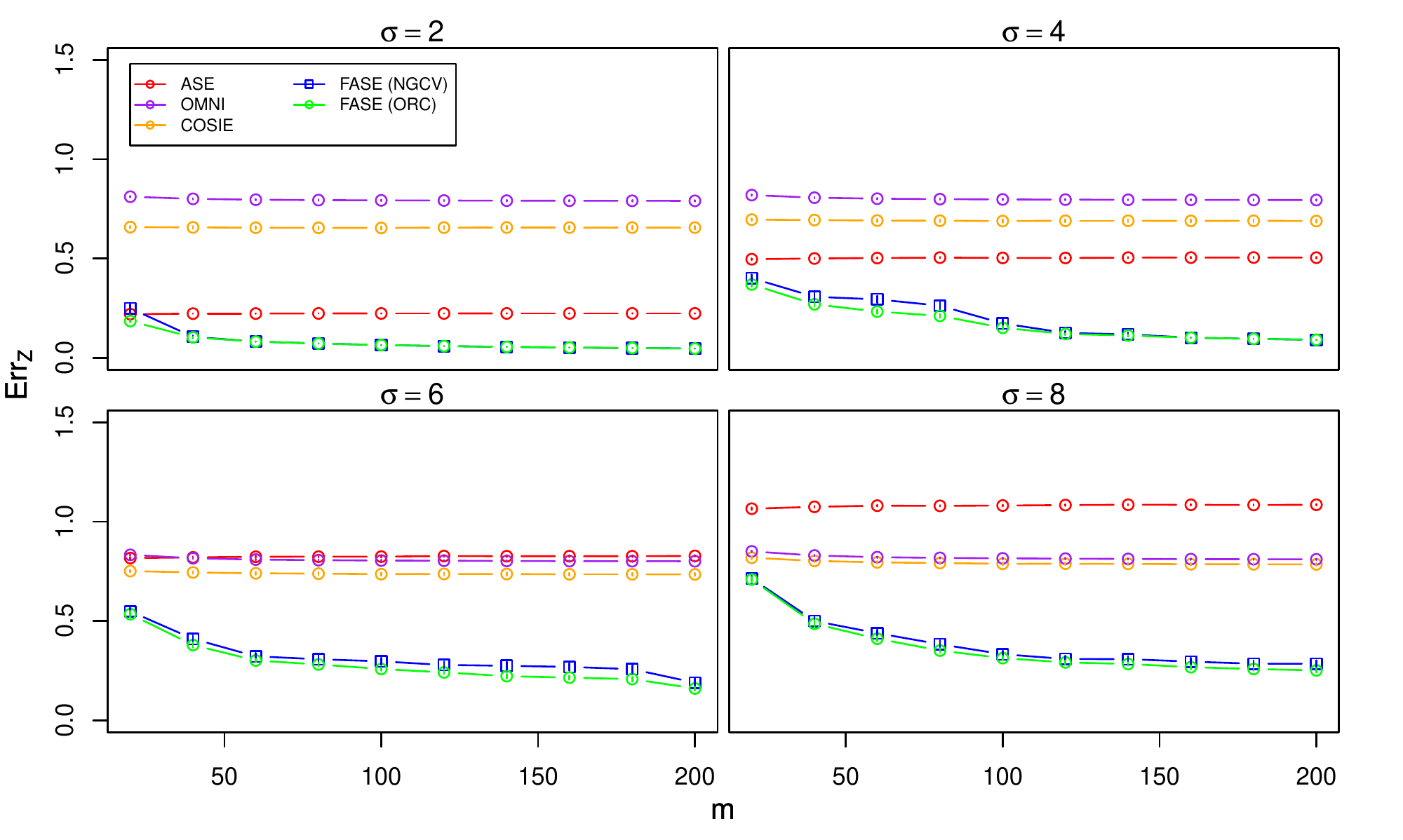}
  \caption{Mean of $\mathrm{Err}_Z$, varying $m$, the number of snapshots. Scenario (ii), nonparametric Gaussian networks. Plots are labeled by edge standard deviation $\sigma$.}
  \label{fig:ii_varym}
\end{figure}
\begin{figure}
  \centering
  \includegraphics[width=\textwidth]{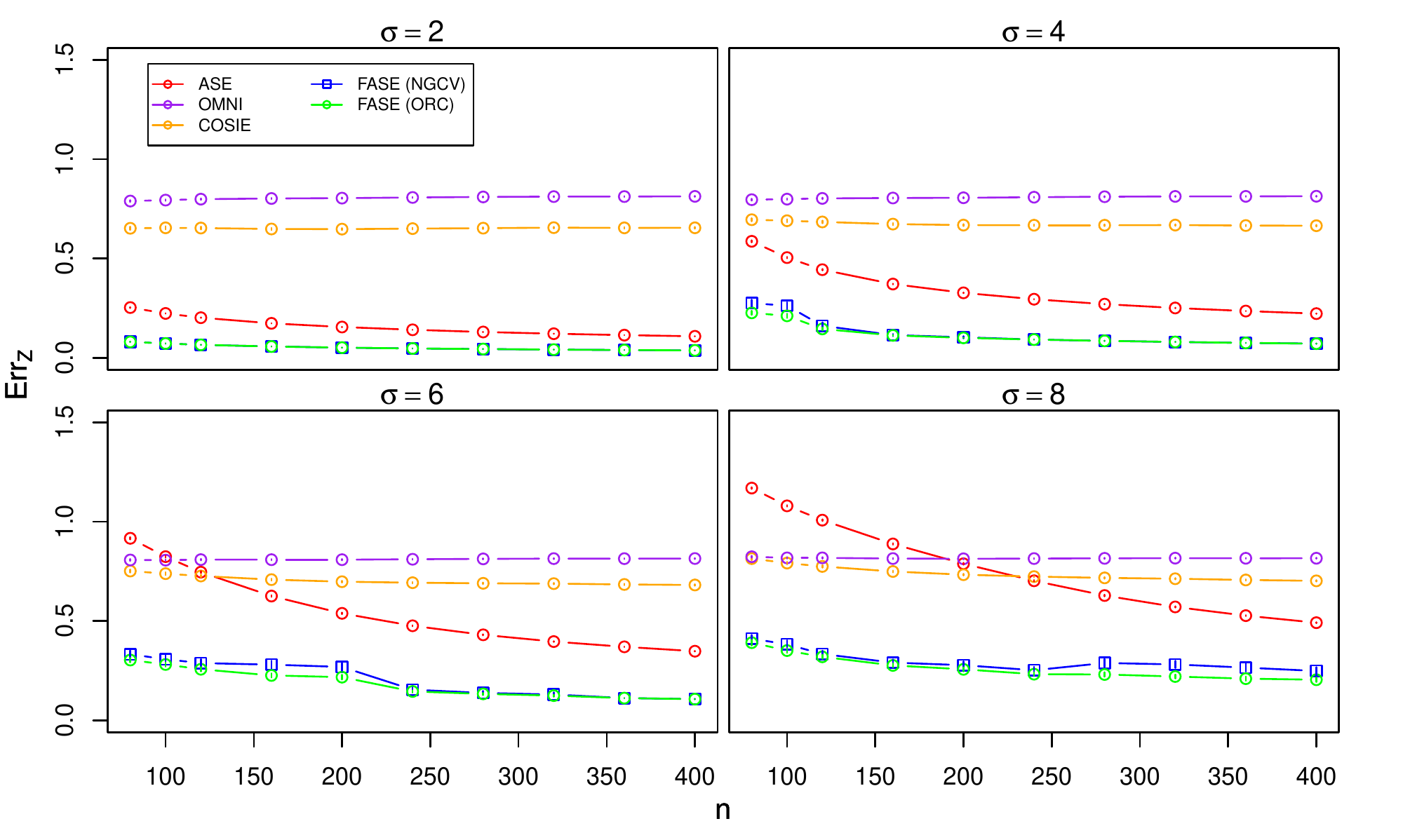}
  \caption{Mean of $\mathrm{Err}_Z$, varying $n$, the number of nodes. Scenario (ii), nonparametric Gaussian networks. Plots are labeled by edge standard deviation $\sigma$.}
  \label{fig:ii_varyn}
\end{figure}
\fi

In
\ifshortsim
  Figure~\ref{fig:iii_jasa},
\else
  Figures~\ref{fig:iii_varym} and \ref{fig:iii_varyn},
\fi
we report results for scenario (iii) generated with edge densities $0.1$, $0.25$ and $0.5$.
In
\ifshortsim
  the left column of Figure~\ref{fig:iii_jasa}
\else
  Figure~\ref{fig:iii_varym}
\fi
we vary the number of snapshots $m \in \{20,40,\ldots,200\}$ for fixed $n=100$ and $d=2$, and in
\ifshortsim
  the right column
\else
  Figure~\ref{fig:iii_varyn}
\fi
we vary the number of nodes $n \in \{80,120,\ldots,400\}$ for fixed $m=80$ and $d=2$.
In all settings, $\mathrm{Err}_Z$ is averaged over 50 independent replications.
Once again, none of the baseline ASE estimators show an improvement with increasing $m$, while FASE does, and only ASE and FASE improve with increasing $n$.
In all settings, FASE performs the best of all methods.
For these RDPG networks, we see that while the more conservative COSIE and OMNI approaches improve as the density decreases, the unbiased ASE approach gets substantially worse, as the signal is decreasing.
Similarly, FASE gets slightly worse for decreasing density, but still always outperforms COSIE and OMNI.
\ifshortsim
\begin{figure}
  \centering
  \includegraphics[width=.75\textwidth]{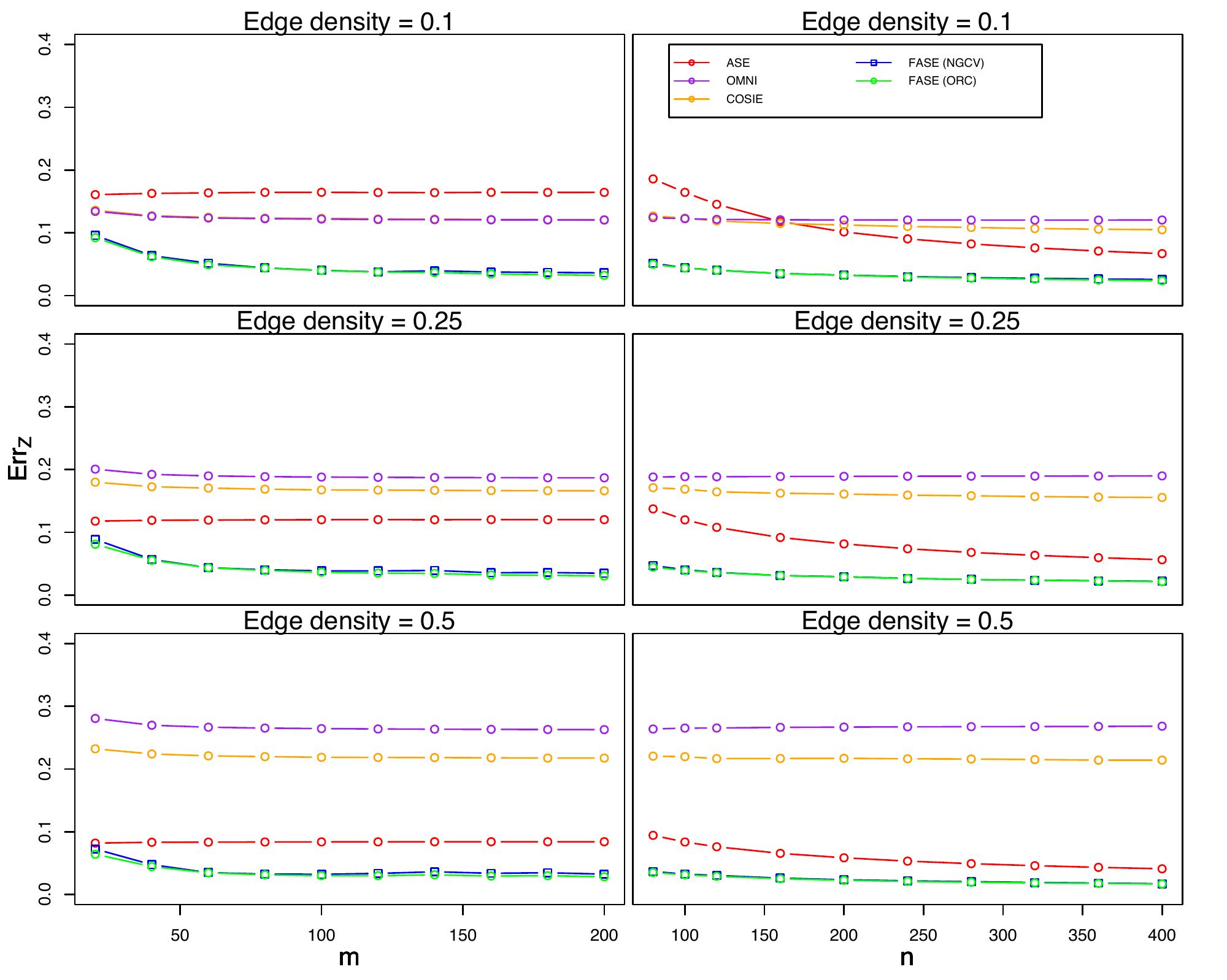}
  \caption{Mean of $\mathrm{Err}_Z$, varying $m$, the number of snapshots (left column), and $n$, the number of nodes (right column). Scenario (iii), parametric RDPG networks. Plots are labeled by edge density.}
  \label{fig:iii_jasa}
\end{figure}
\else
\begin{figure}
  \centering
  \includegraphics[width=\textwidth]{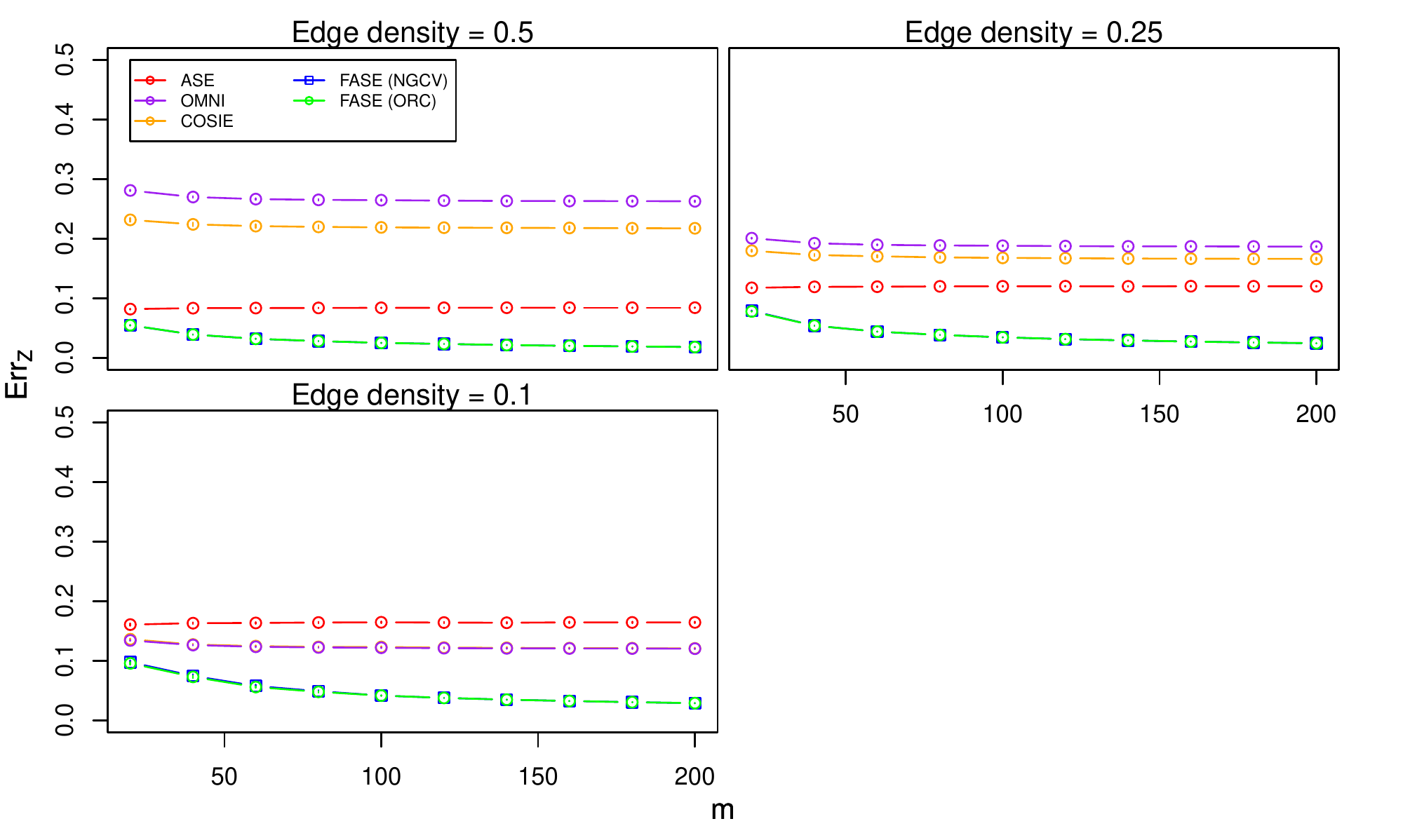}
  \caption{Mean of $\mathrm{Err}_Z$, varying $m$, the number of snapshots. Scenario (iii), parametric RDPG networks. Plots are labeled by edge density.}
  \label{fig:iii_varym}
\end{figure}
\begin{figure}
  \centering
  \includegraphics[width=\textwidth]{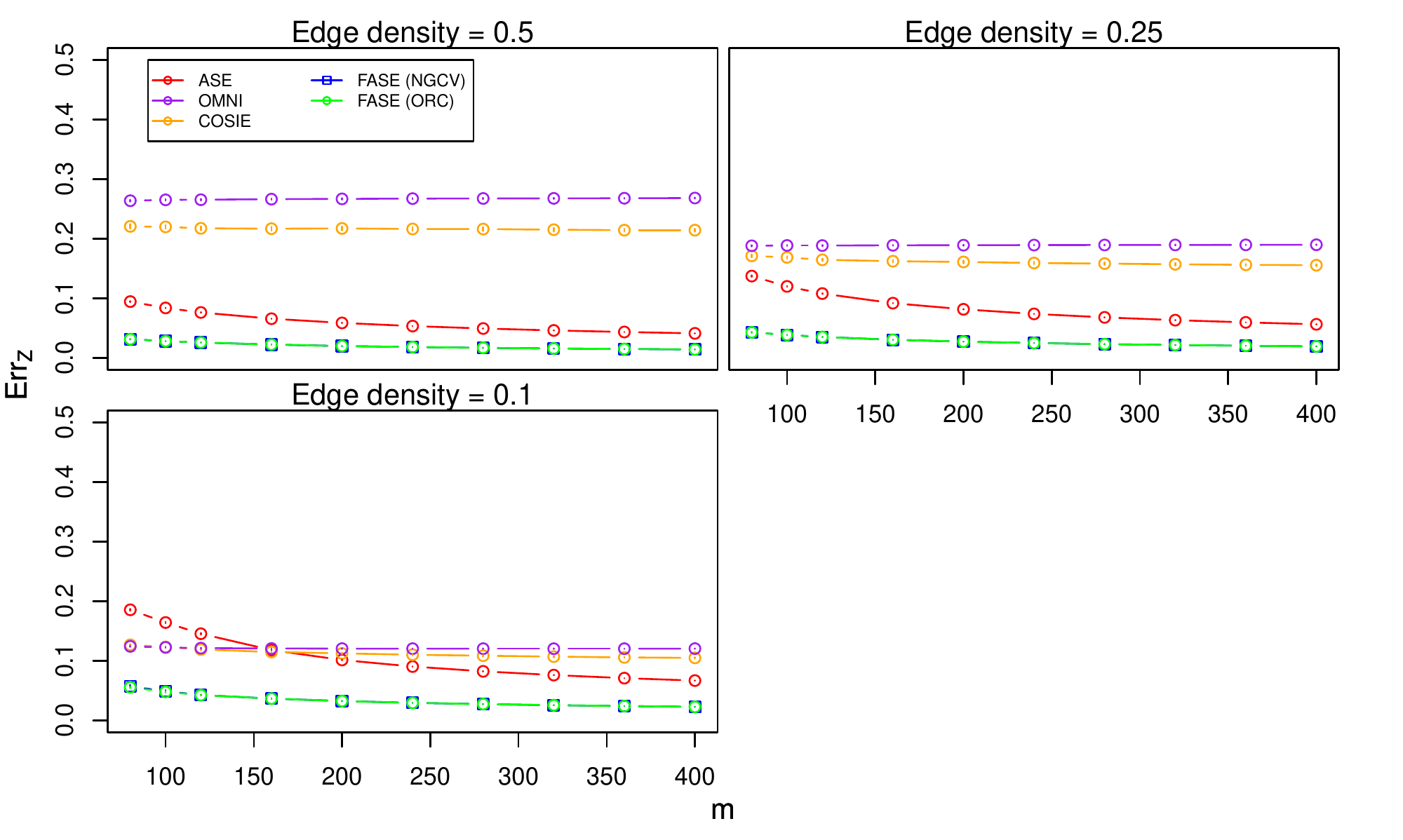}
  \caption{Mean of $\mathrm{Err}_Z$, varying $n$, the number of nodes. Scenario (iii), parametric RDPG networks. Plots are labeled by edge density.}
  \label{fig:iii_varyn}
\end{figure}
\fi

Finally, we report some brief results on convergence and runtime of our FASE estimator.
In these simulations, we set the convergence criterion for Algorithm~\ref{alg:gd_concur} to the relative decrease in the objective function dropping below $10^{-5}$.
In 99\% of replications this occurs in less than 200 iterations, and it always occurs in less than 600 iterations.
As an example benchmark, fitting FASE with $n=m=100$, $d=2$ and $q=10$ to data generated as in scenario (i) takes about 4 seconds on a computer with an Apple M2 Pro Chip and 16GB RAM.

Taken together, we see that among spectral embedding approaches for functional network data, FASE shows state of the art performance for recovery of the underlying latent process structure, up to unknown rotations.
As suggested by Theorem~\ref{thm:main_concur}, in both RDPG and Gaussian settings, we do not see the errors for FASE vanishing to zero, instead they appear to be bounded below by an intrinsic approximation error term.
Even in scenarios (i) and (iii), when we generate the true latent processes from a $B$-spline basis, since we cannot necessarily remove the unknown orthogonal transformation, we essentially revert to a nonparametric regime, and benefit from the fact that FASE only requires smoothness of the latent processes for efficient recovery.

\subsection{Tuning With the NGCV Criterion} \label{subsec:tuning_sims}

We evaluate the performance of our new NGCV model selection criterion on latent process network models generated from scenarios (i), (ii), and (iii) as defined in Section~\ref{subsec:error_sims}.
We compare the quality of selection and the quality of the eventual fitted model compared to an oracle.

To evaluate quality of selection of $d$, we see if our selected model matches the ground truth used to generate the model.
To evaluate the quality of selection of $q$, as there is not always a ground truth parameter, we see if our selected model matches the oracle $q$, denoted by $q_{\mathrm{ORC}}$, and which minimizes the process recovery error up to orthogonal transformation:
\begin{equation*}
  q_{\mathrm{ORC}} = \underset{q_{\min} \leq q' \leq q_{\max}}{\operatorname{argmin}} \mathrm{Err}_Z \left(\widehat{Z}^{(q')}
  \right),
\end{equation*}
where $\widehat{Z}^{(q')}$ is a FASE estimator fit with latent space dimension $d$ and basis dimension $q'$.
In Tables~\ref{tab:tuning_i}-\ref{tab:tuning_iii}, we report the proportion of replications in which the NGCV selection matches the ground truth value for $d$ (d-Prop) and the proportion of replications in which it matches both values in the best pair $(q_{\mathrm{ORC}},d)$ (Prop).
To evaluate the quality of the fitted model, Tables~\ref{tab:tuning_i}-\ref{tab:tuning_iii} also display the ratio between $\mathrm{Err_Z}$ for the FASE estimator fit with $(\hat{q},\hat{d})$ selected according to NGCV, and the FASE estimator fit with the best pair $(q_{\mathrm{ORC}},d)$ (Ratio).

As in Section~\ref{subsec:error_sims}, we consider three scenarios: (i) parametric Gaussian networks with $B$-spline latent processes, (ii) nonparametric Gaussian networks with sinusoidal latent processes, and (iii) parametric RDPG networks with $B$-spline latent processes. In all scenarios, we search for $(q,d)$ pairs over a $6 \times 6$ grid with $q=6,8,\ldots,16$ and $d=1,2,\ldots,6$. We will perform selection either by fitting models over the entire grid, or by coordinate descent (CD), as described in Section~\ref{subsec:tuning}.
Typically, coordinate descent converges in $3$ or $4$ univariate searches, meaning that it fits around $1/2$ to $2/3$ as many models compared to the full grid search over the $6 \times 6$ grid.
The computational improvement of coordinate descent over a full grid search will be more pronounced for larger grids.

For scenario (i), we fix $n=100$, $m=80$, $q=10$, and vary $\sigma=2,4,6,8$ and $d=2,4$. The results, averaged over $50$ replications, are given in Table~\ref{tab:tuning_i}.
\begin{table}
  \centering
    \begin{tabular}{|l|l|l|l|l|l|l|l|}
      \hline
      $d$ & $\sigma$ & $d$-Prop & Prop & Ratio & $d$-Prop & Prop & Ratio \\
      & & (grid) & (grid) & (grid) & (CD) & (CD) & (CD) \\ \hline
      2 & 2 & 0.98 & 0.98 & 1.020 & 0.98 & 0.98 & 1.020 \\ \hline
      2 & 4 & 1.00 & 0.96 & 1.003 & 1.00 & 0.96 & 1.003 \\ \hline
      2 & 6 & 0.96 & 0.58 & 1.015 & 0.94 & 0.58 & 1.021 \\ \hline
      2 & 8 & 1.00 & 0.60 & 1.016 & 0.86 & 0.58 & 1.039 \\ \hline
      4 & 2 & 0.84 & 0.84 & 1.094 & 0.88 & 0.88 & 1.072 \\ \hline
      4 & 4 & 0.92 & 0.92 & 1.024 & 0.96 & 0.92 & 1.013 \\ \hline
      4 & 6 & 0.96 & 0.48 & 1.012 & 0.96 & 0.48 & 1.015 \\ \hline
      4 & 8 & 0.66 & 0.54 & 1.017 & 0.58 & 0.46 & 1.026 \\ \hline
    \end{tabular}
  \caption{Parameter tuning results for scenario (i).}
  \label{tab:tuning_i}
\end{table}
With both grid selection and coordinate descent, even when the selected parameters do not match the oracle, the average error for the selected model is at most 10\% greater than the average oracle error.
Moreover, despite fitting fewer models, the coordinate descent approach almost always agrees with the full grid selection.
In both settings of $d$, selection becomes more challenging for large values of $\sigma$.
For $\sigma=6$, this affects selection of $q$, while for $\sigma=8$ it affects selection of $d$.
As the noise level increases, both the oracle and NGCV tend to select smaller values of $q$.
However, NGCV is more conservative in this respect: its choice of $q$ has already decreased for $\sigma=6$, while the oracle choice does not decrease until $\sigma=8$, which explains why selection of $q$ is better for $\sigma=8$ compared to $\sigma=6$.

For scenario (ii) we fix $n=100$, $m=80$, and vary $\sigma=2,4,6,8$ and $d=2,4$. The results, averaged over $50$ replications, are given in Table~\ref{tab:tuning_ii}.
\begin{table}
  \centering
  \begin{tabular}{|l|l|l|l|l|l|l|l|}
      \hline
      $d$ & $\sigma$ & $d$-Prop & Prop & Ratio & $d$-Prop & Prop & Ratio \\
      & & (grid) & (grid) & (grid) & (CD) & (CD) & (CD) \\ \hline
      2 & 2 & 1.00 & 0.92 & 1.003 & 1.00 & 0.92 & 1.003 \\ \hline
      2 & 4 & 1.00 & 0.82 & 1.005 & 1.00 & 0.82 & 1.005 \\ \hline
      2 & 6 & 0.34 & 0.26 & 1.197 & 0.46 & 0.28 & 1.157 \\ \hline
      2 & 8 & 0.54 & 0.48 & 1.085 & 0.56 & 0.44 & 1.093 \\ \hline
      4 & 2 & 0.98 & 0.94 & 1.024 & 1.00 & 0.94 & 1.002 \\ \hline
      4 & 4 & 0.94 & 0.90 & 1.026 & 0.94 & 0.90 & 1.026 \\ \hline
      4 & 6 & 0.24 & 0.24 & 1.158 & 0.28 & 0.24 & 1.150 \\ \hline
      4 & 8 & 0.54 & 0.48 & 1.057 & 0.64 & 0.52 & 1.043 \\ \hline
    \end{tabular}
  \caption{Parameter tuning results for scenario (ii).}
  \label{tab:tuning_ii}
\end{table}
With both grid selection and coordinate descent, even when the selected parameters do not match the oracle, the average error for the selected model is at most 20\% greater than the average oracle error.
Compared to scenario (i), selection is more difficult in this scenario.
When incorrectly selected, $d$ is typically chosen to be larger than the true value, likely due to the unknown orthogonal rotations.
Wrong selection of $d$ has a large relative effect on the error, especially with lower $\sigma$, as the true $Z$ must be padded with zeros to match the dimensions of the two objects.
However, we can see that especially for grid selection, if we restrict to cases where the ground truth $d$ is selected according to NGCV with grid selection, it is likely that it will also correctly select $q_{\mathrm{ORC}}$.
Conversely, when $d$ is chosen to be larger than the truth, the NGCV criterion tends to compensate by choosing $q$ smaller than $q_{\mathrm{ORC}}$.

For scenario (iii) we fix $n=100$, $m=80$, $q=10$, and vary the edge density in $0.5, 0.25, 0.1$ and $d=2,4$.
Under the Dirichlet-based simulation scheme described in the previous section, we cannot generate RDPG networks with $d=4$ and density $0.5$, so this combination is omitted.
In brief, note that the coordinates for different nodes are generated independently from a Dirichlet distribution on the $d$-dimensional probability simplex, centered at $(1/d \hspace{5pt} \cdots \hspace{5pt} 1/d)^{\tp}$.
For two such variables $X$ and $Y$, $\expect(X^{\tp}Y) = 1/d$.
To reduce the overall network density, we can rescale the positions by a constant $0 < \rho \leq 1$, however $\rho > 1$ will produce many pairs of positions with inner product greater than $1$, outside the parameter space of the Bernoulli edge distribution.
The results, averaged over $50$ replications, are given in Table~\ref{tab:tuning_iii}.
\begin{table}
  \centering
    \begin{tabular}{|l|l|l|l|l|l|l|l|}
      \hline
      $d$ & Density & $d$-Prop & Prop & Ratio & $d$-Prop & Prop & Ratio \\
      & & (grid) & (grid) & (grid) & (CD) & (CD) & (CD) \\ \hline
      2 & 1/2 & 1.00 & 1.00 & 1.000 & 1.00 & 1.00 & 1.000 \\ \hline
      2 & 1/4 & 1.00 & 1.00 & 1.000 & 1.00 & 1.00 & 1.000 \\ \hline
      2 & 1/10 & 0.96 & 0.74 & 1.020 & 0.96 & 0.74 & 1.020 \\ \hline
      4 & 1/4 & 1.00 & 1.00 & 1.000 & 1.00 & 1.00 & 1.000 \\ \hline
      4 & 1/10 & 0.84 & 0.82 & 1.008 & 0.74 & 0.72 & 1.020 \\ \hline
    \end{tabular}
  \caption{Parameter tuning results for scenario (iii).}
  \label{tab:tuning_iii}
\end{table}
In this scenario, both grid selection and coordinate descent give good selection performance, even for smaller edge densities with weaker signal.
Although NGCV typically chooses $q$ smaller than $q_{\mathrm{ORC}}$ for $d=2$ and density $1/10$, we see that this has a small relative effect on the error, as the average selected model error is at most about 2\% greater compared to the average oracle error.

\section{Analysis of International Political Interactions} \label{sec:real_data}

As an application to real functional network data, we apply FASE to data collected by the Integrated Crisis Early Warning System (ICEWS) \citep{lautenschlager15icews}.
In this aggregated data set of international political interactions, we have $m=108$ monthly snapshots of interaction networks on the $n=50$ most active countries from January 2005 to December 2013 in terms of total absolute edge weight.
An undirected edge $[A_k]_{ij}$ describes the total ``weight'' of bilateral interaction between country $i$ and country $j$ in month $k$.
``Weight'' is a signed measure of the intensity and nature of interactions, calculated by the ICEWS.
Weights can be both positive, corresponding to cooperative interactions such as giving aid; or negative, corresponding to hostile interactions such as military action.
To calculate a weight, the ICEWS automatically scrapes and assigns signed weights to news articles, with edge weights calculated by summing all the news articles for a given month.

As the distribution of edge weights is highly skewed, we apply FASE after a log transformation given by
\ifjasa
  $\operatorname{sign}([A_k]_{ij}) \log (1 + \lvert [A_k]_{ij} \rvert)$
\else
\begin{equation*}
  \operatorname{sign}([A_k]_{ij}) \log (1 + \lvert [A_k]_{ij} \rvert)
\end{equation*}
\fi
for $k=1,\ldots,m$ and $1 \leq i < j \leq n$.
We use a cubic $B$-spline basis with equally spaced knots, and select $\hat{q}=5$ and $\hat{d}=8$ using NGCV.
The details of this tuning procedure, including a plot of the grid of NGCV criteria are provided in Appendix D of the supplementary materials.

For interpretability of plots, as a post-processing step we perform a Procrustes alignment of each embedded snapshot to the previous snapshot's embedding.
The resulting plotted latent processes are still in the unidentified class $\mathcal{T}(\widehat{Z})$.
In Figures~\ref{fig:embed_d12} and \ref{fig:embed_d34}, we show an exploratory plot of the FASE at four time points for a subset of the latent dimensions.
The remaining dimensions are plotted in
\ifjmlr
  Appendix D of the supplementary materials.
\else
\ifjasa
  online Appendix E.
\else
  Appendix~\ref{app:real_data}.
\fi
\fi
Figure~\ref{fig:embed_d12} plots the first latent dimension against the second latent dimension, and Figure~\ref{fig:embed_d34} plots the third latent dimension against the fourth latent dimension.
These plots show the FASE estimates at four distinct time snapshots; a detailed view of the estimated latent processes as they evolve in continuous time can be seen in videos available online at \url{github.com/peterwmacd/fase/tree/main/videos}.

\ifjasa
\else
\begin{figure}
  \centering
  \includegraphics[width=\textwidth]{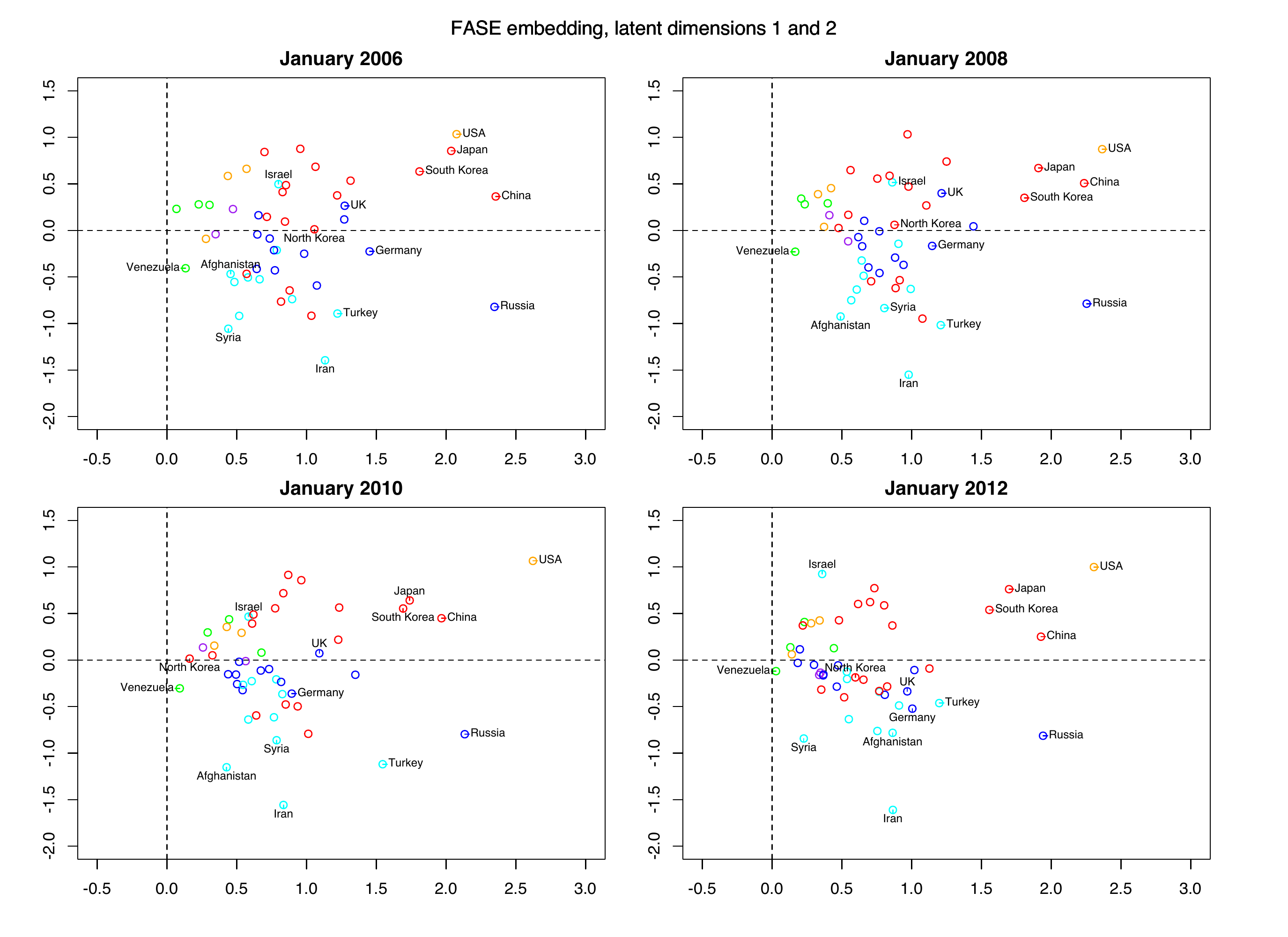}
  \caption{First (horizontal axis) and second (vertical axis) dimensions of FASE evaluated at four times: January 2006, January 2008, January 2010, and January 2012. Points are colored by geographical region. Purple: Africa, Red: Asia-Pacific, Blue: Europe, Cyan: Middle East, Orange: North America, Green: South America.}
  \label{fig:embed_d12}
\end{figure}
\begin{figure}
  \centering
  \includegraphics[width=\textwidth]{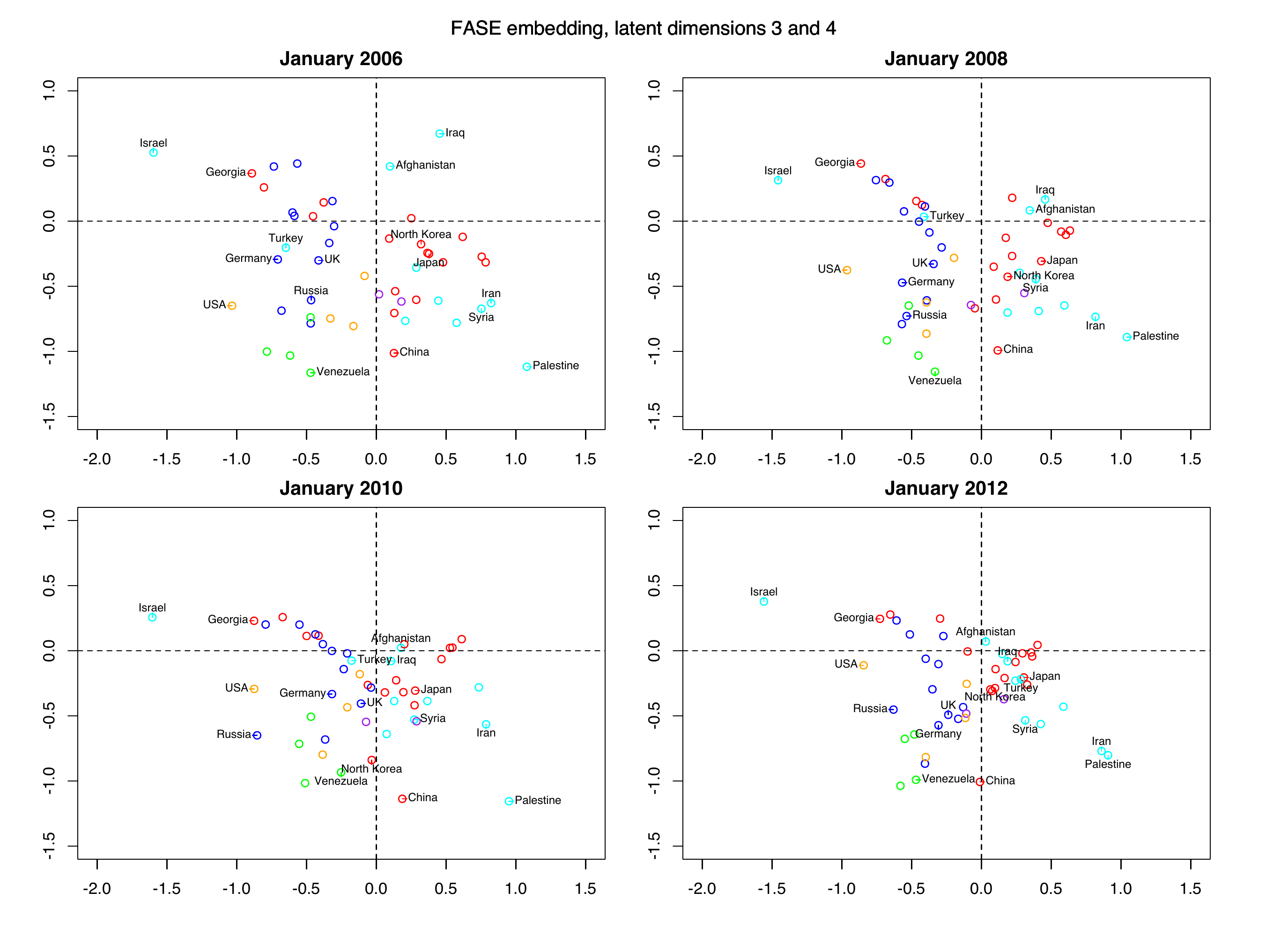}
  \caption{Third (horizontal axis) and fourth (vertical axis) dimensions of FASE evaluated at four times: January 2006, January 2008, January 2010, and January 2012. Points are colored by geographical region. Purple: Africa, Red: Asia-Pacific, Blue: Europe, Cyan: Middle East, Orange: North America, Green: South America.}
  \label{fig:embed_d34}
\end{figure}
\fi
In Figure~\ref{fig:embed_d12}, most countries have positive coordinates in the first latent dimension, corresponding to the total weight and sign of interactions.
Countries like the USA, China and Russia, have large positive values in both dimensions at all four of the plotted times.
Most Asian countries, plotted in red, have positive coordinates in the second latent dimension, while most European countries, plotted in blue, have negative coordinates.
\ifjasa
\newpage
\begin{figure}[!h]
  \centering
  \includegraphics[width=\textwidth]{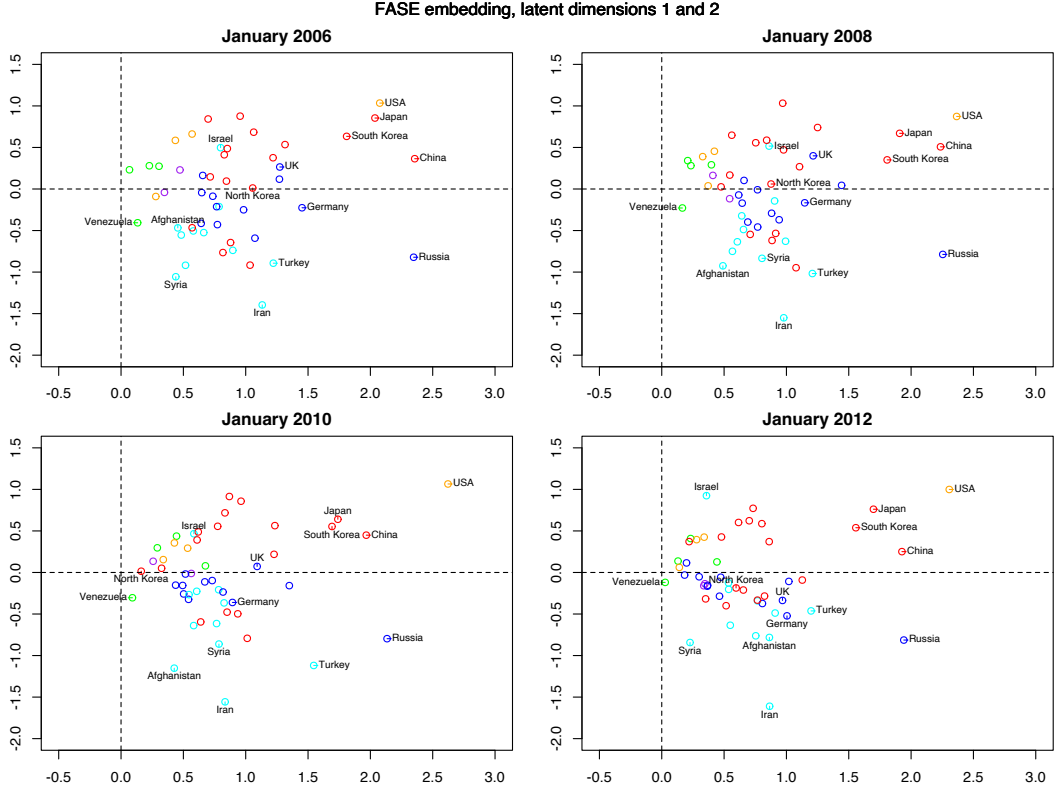}
  \caption{First (horizontal axis) and second (vertical axis) dimensions of FASE evaluated at four times: January 2006, January 2008, January 2010, and January 2012. Points are colored by geographical region. Purple: Africa, Red: Asia-Pacific, Blue: Europe, Cyan: Middle East, Orange: North America, Green: South America.}
  \label{fig:embed_d12}
\end{figure}
\else
\fi
The four Asian nations with large negative coordinates in January 2006, January 2008, and January 2010 are Armenia, Azerbaijan, Georgia and Kazakhstan, four former Soviet republics which are geographically Asian but have more political ties with Europe than with East Asia \citep{engvall17kazakhstan}.
We see some dynamic behavior in these plots as well.
The first latent coordinate for Syria moves substantially between January 2010 and January 2012, possibly a consequence of the Syrian civil war, which began in late 2011
\ifjasa
  \citeyearpar[BBC News,][]{11us}.
\else
\ifjmlr
  \citeyearpar[BBC News,][]{11us}.
\else
  \citep{11us}.
\fi
\fi
\ifjasa
\else
There is also consistent movement in the first latent coordinate among countries in the European Union.
In January 2006, their mean first latent coordinate is about $0.92$, while in January 2012 it is $0.53$, reflecting an overall decrease in cooperative relationships during this time period.
\fi

In Figure~\ref{fig:embed_d34}, we again see a regional split between Asian countries with mostly positive third latent coordinates; and European countries with mostly negative coordinates.
\ifjasa
\newpage
\begin{figure}[!h]
  \centering
  \includegraphics[width=\textwidth]{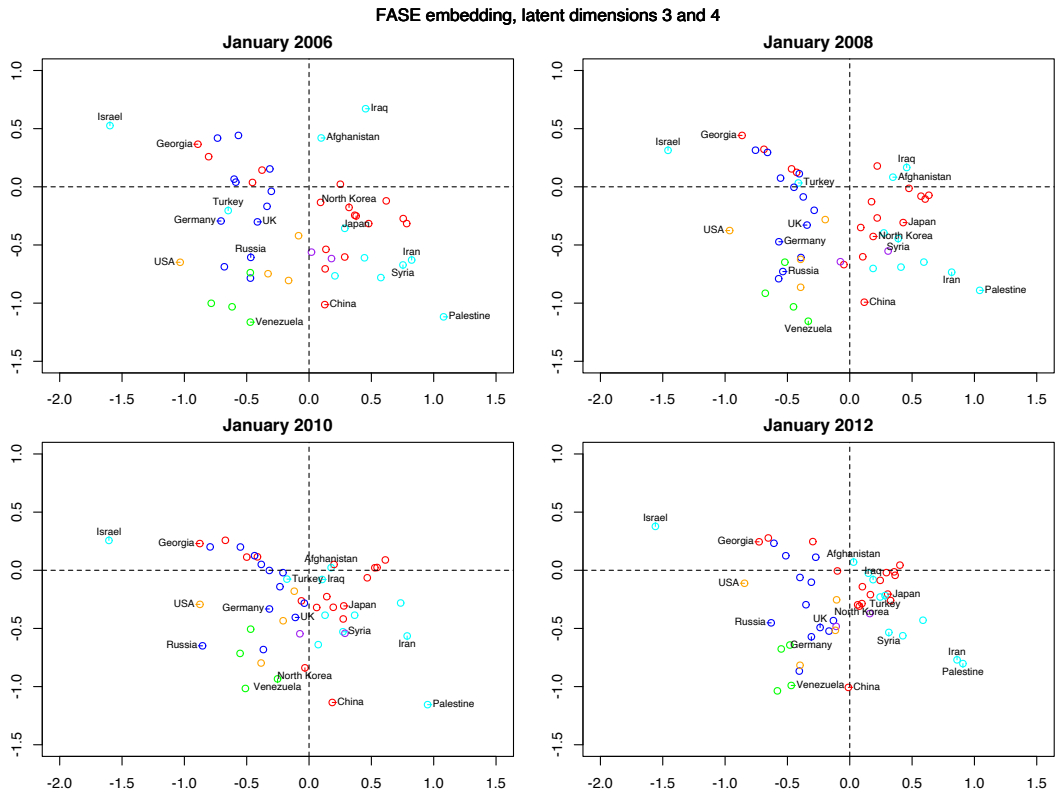}
  \caption{Third (horizontal axis) and fourth (vertical axis) dimensions of FASE evaluated at four times: January 2006, January 2008, January 2010, and January 2012. Points are colored by geographical region. Purple: Africa, Red: Asia-Pacific, Blue: Europe, Cyan: Middle East, Orange: North America, Green: South America.}
  \label{fig:embed_d34}
\end{figure}
\else
\fi
\noindent The top left quadrant and the bottom right quadrant separate countries with respect to the Israel-Palestine conflict, which accounts for the largest magnitude negative edges in this network.
We see that this conflict appears to pit Israel against most other Middle Eastern nations, while Europe and the USA tend towards the Israeli side of the conflict.
Again, there are key dynamic shifts in these plots.
In January 2006, the top right and bottom left quadrants appear to separate countries with respect to conflicts between the USA and Iraq, and between the USA and Afghanistan.
However, by January 2012 all three countries' latent coordinates are again much closer to the bulk of the cloud.

To further evaluate the dynamic behavior in this network, for each node we calculate the total distance traversed by its latent process in the $8$-dimensional latent space. In Figure~\ref{fig:distance_bar}, we show the $20$ countries with the greatest distance traversed.
Due to boundary effects around the beginning and end of the time interval, we restrict to distance traversed between January 2006 and December 2012.

\begin{figure}
  \centering
  \ifshortsim
    \includegraphics[width=.7\textwidth]{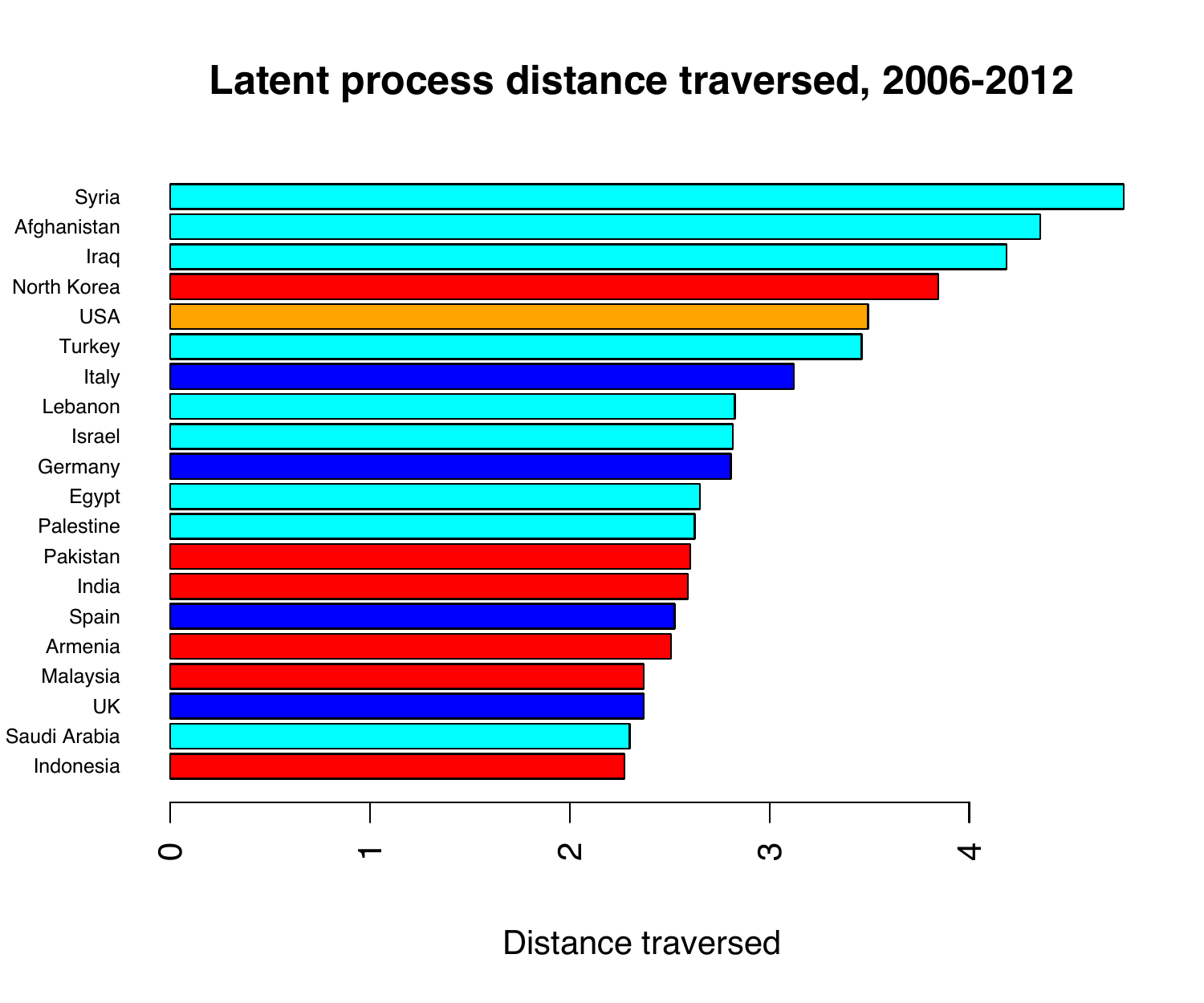}
  \else
    \includegraphics[width=\textwidth]{distance_bar}
  \fi
  \caption{Distance traversed by the estimated latent processes, restricted to the 20 countries with the greatest distance traversed. Bars are colored by geographical region. Red: Asia-Pacific, Blue: Europe, Cyan: Middle East, Orange: North America.}
  \label{fig:distance_bar}
\end{figure}

Taking a closer look at countries with the most dynamic behavior, we see that the civil war in Syria appeared to have implications for its relations with many countries, and its latent position changes substantially in the first, as well as the fifth and seventh latent dimensions during this time period.
Afghanistan and Iraq's coordinates both move in the third latent dimension, as well as the seventh, apparently as a result of improving relations with the USA during this time period.
\ifjasa
\else
Afghanistan's position also moves in the sixth latent dimension, possibly in response to border skirmishes with both Iran
\ifjasa
  \citeyearpar[Reuters,][]{08afghan}
\else
\ifjmlr
  \citeyearpar[Reuters,][]{08afghan}
\else
  \citep{08afghan}
\fi
\fi
and Pakistan
\ifjasa
  \citeyearpar[Reuters,][]{13tensions}
\else
\ifjmlr
  \citeyearpar[Reuters,][]{13tensions}
\else
  \citep{13tensions}
\fi
\fi
during this time period.
\fi
The latent position for North Korea moves substantially in the first and third latent dimensions, in both cases reaching a local minimum around January 2010.
As a result, the latent processes for Syria, Afghanistan, Iraq, and North Korea move the most of the countries in the network, despite not being extremely active in terms of total absolute edge weight.
They are the 21st, 15th, 22nd, and 14th most active countries, respectively.
These findings are consistent with the major world events of that time period.

\section{Discussion} \label{sec:conclusion}

\ifjasa
\else
In this paper, we have introduced a new latent process network model for functional network data collected as either adjacency matrix snapshots or aggregated indexed events.
We provide a fitting algorithm using $B$-spline approximation and gradient descent, leading to the FASE estimator.
We give theoretical guarantees and demonstrate the efficacy of our method on simulated and real data with both weighted and binary edges, comparing it to existing ASE-based approaches from the literature.
\fi

Identifiability remains a challenge for latent process network models of this type.
Without strong conditions on eigenvalue separation, even if the orthogonalized latent processes truly belong to $\operatorname{span}(B)$, because of the unknown orthogonal transformation, we cannot take advantage of their parametric form in estimation.
Despite this, we have provided theoretical guarantees up to orthogonal transformation, and demonstrated that for smooth latent processes, sharing of information across network snapshots can still lead to more efficient recovery of the underlying network structure.

Future directions include extending the model to accommodate some dependence, in particular autoregressive edge variables, and theory for more general basis functions, including periodic bases which can be used to model seasonality in dynamic networks.   Another important direction is developing inference for $Z$ or $\mathcal{W}$, with a view towards finding confidence bands for the latent processes, or testing whether a latent dimension is homogeneous across index values.

\ifjasa
\bigskip
\begin{center}
{\large\bf SUPPLEMENTARY MATERIAL}
\end{center}

\begin{description}

\item[Supplementary appendices:] Mathematical proofs and supporting details. Appendix A: technical proofs. Appendix B: intialization for estimation algorithms. Appendix C: derivation of the NGCV criterion. Appendix D: additional evaluations on synthetic networks. Appendix E: additional analyses of international relations. Appendix F: FASE with smoothing splines. (.pdf file)

\end{description}
\else
\fi

\ifjmlr
  \acks{This research has been partially funded by the U.S. National Sciences Foundation grants DMS-191622 and DMS-2052918, and a pre-doctoral fellowship from the Rackham Graduate School. There are no financial conflicts of interest to declare.}
\else
\fi

\ifjasa
\bibliographystyle{apalike}
\else
\bibliographystyle{plain}
\ifjmlr
\else
\bibliography{dyn_nets0}
\fi
\fi

\ifjasa
\else
\ifjmlr
\else

\appendix

\section{Technical proofs} \label{app:proofs}

\subsection{Proof of Proposition~\ref{prop:snap_gradient}}

\begin{proof}[Proof of Proposition~\ref{prop:snap_gradient}]
  To find the gradient for a given latent dimension $r=1,\ldots,d$, we can rewrite the objective as
  \begin{equation} \label{mle_expanded}
    \sum_{k=1}^m \lVert \left\{ A_k - \sum_{r' \neq r} \bm{W}_{r'} B(x_k) B(x_k)^{\tp} \bm{W}_{r'}^{\tp} \right\} - \bm{W}_{r} B(x_k) B(x_k)^{\tp} \bm{W}_{r}^{\tp} \rVert_F^2.
  \end{equation}
  where the matrix in braces is free of $\bm{W}_{r}$. Thus it is sufficient to analyze \eqref{mle_expanded} in the special case $d=1$, where the objective can be written as
  \begin{align*}
      & \min_{\bm{W}} \left\{ \sum_{k=1}^m \lVert A_k - \bm{W} B(x_k) B(x_k)^{\tp} \bm{W}^{\tp} \rVert_F^2 \right\} \\
    =& \sum_{k=1}^m \operatorname{tr}\left( \left[ A_k - \bm{W} B(x_k)B(x_k)^{\tp} \bm{W}^{\tp} \right]^2 \right) \\
    =& \sum_{k=1}^m \operatorname{tr}\big( A_k^2 - A_k \bm{W} B(x_k)B(x_k)^{\tp} \bm{W}^{\tp} - \bm{W} B(x_k)B(x_k)^{\tp} \bm{W}^{\tp} A_k \\
    &\tabby + \bm{W} B(x_k)B(x_k)^{\tp} \bm{W}^{\tp}\bm{W} B(x_k)B(x_k)^{\tp} \bm{W}^{\tp} \big) \\
    \propto& \sum_{k=1}^m \bigg\{ -\operatorname{tr}\left( A_k \bm{W} B(x_k)B(x_k)^{\tp} \bm{W}^{\tp} \right) - \operatorname{tr} \left( \bm{W} B(x_k)B(x_k)^{\tp} \bm{W}^{\tp} A_k \right) \\
    &\tabby + \operatorname{tr} \left( \bm{W} B(x_k)B(x_k)^{\tp} \bm{W}^{\tp}\bm{W} B(x_k)B(x_k)^{\tp} \bm{W}^{\tp} \right) \bigg\},
  \end{align*}
  where in the final expression we drop the term not depending on $\bm{W}$. Now take a derivative of each term with respect to $\bm{W}$. First,
  \begin{equation*}
    \frac{\partial}{\partial \bm{W}}\operatorname{tr}\left( A_k \bm{W} B(x_k)B(x_k)^{\tp} \bm{W}^{\tp} \right) = 2 A_k \bm{W} B(x_k)B(x_k)^{\tp}
  \end{equation*}
  and the other cross term is the same.
  Then,
  \begin{equation*}
    \frac{\partial}{\partial \bm{W}} \operatorname{tr} \left( \bm{W}B(x_k)B(x_k)^{\tp} \bm{W}^{\tp}\bm{W} B(x_k)B(x_k)^{\tp} \bm{W}^{\tp} \right) = 4 \bm{W} B(x_k)B(x_k)^{\tp} \bm{W}^{\tp}\bm{W} B(x_k)B(x_k)^{\tp}.
  \end{equation*}
  The entire gradient with respect to $\bm{W}$ is
  \begin{equation*}
    - 4 \sum_{k=1}^m \left( A_k - \bm{W} B(x_k)B(x_k)^{\tp} \bm{W}^{\tp} \right) \bm{W} B(x_k)B(x_k)^{\tp}.
  \end{equation*}
  Thus for the general case with $d > 1$ we see that the gradient with respect to $\bm{W}_r$ is the desired
  \begin{equation*}
    - 4 \sum_{k=1}^m \left\{ A_k - \sum_{r'=1}^d \bm{W}_{r'} B(x_k)B(x_k)^{\tp} \bm{W}_{r'}^{\tp} \right\} \bm{W}_r B(x_k)B(x_k)^{\tp}.
  \end{equation*}
\end{proof}

\subsection{Preliminaries for proofs of Theorems}

In this section we will introduce notation as well as some preliminary results which we will use in the proof of Theorems~\ref{thm:main_concur}.

We continue to use matrix and tensor notation introduced in Section~\ref{sec:model}, as well as the matrix nuclear norm denoted by $\lVert \cdot \rVert_*$, and the Frobenius inner product denoted by $\iprod{\cdot}{\cdot}$.
The Frobenius inner product is given by
\begin{equation*}
  \iprod{M}{R} = \operatorname{tr}(M^{\tp}R)
\end{equation*}
and satisfies $\iprod{M}{M} = \lVert M \rVert_F^2$ for matrices $M$ and $R$ of the same dimensions.
Recall that with some abuse of notation, we will use $\lVert \cdot \rVert_F$ and $\iprod{\cdot}{\cdot}$ to denote the vector $\ell_2$ norm and Euclidean inner product of the vectorization of a 3-mode tensor.

Auxiliary results will typically be referenced below mathematical displays in which they are used.
We also use some well known matrix algebra results, including the submultiplicative property of matrix norms, and Cauchy-Schwarz inequality for both the Euclidean and Frobenius inner products.
The matrix norms we consider satisfy
\begin{equation*}
  \frac{1}{\operatorname{rank}(M)} \lVert M \rVert_* \leq \frac{1}{\operatorname{rank}^{1/2}(M)} \lVert M \rVert_F \leq \lVert M \rVert_2 \leq \lVert M \rVert_F \leq \lVert M \rVert_*
\end{equation*}
for any matrix $M$.
We also use a special combined submultiplicative property
\begin{equation*}
  \lVert MR \rVert_F \leq \lVert M \rVert_2 \lVert R \rVert_F
\end{equation*}
for matrices $M$ and $R$ of suitable dimensions.
Next we prove three basic matrix algebra lemmas, and one probability lemma, both of which will be used in the proofs to follow.

\begin{lemma} \label{lemma:unfold}
  Suppose $M$ is an $n \times n$ symmetric matrix, and $X, Y$ are $n \times d$ matrices. Then
  \begin{equation*}
    \iprod{MX}{X-Y} = \frac{1}{2}\iprod{M}{XX^{\tp}-YY^{\tp}} + \frac{1}{2}\iprod{M}{(X-Y)(X-Y)^{\tp}} .
  \end{equation*}
\end{lemma}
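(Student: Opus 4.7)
My plan is to prove the identity by direct expansion and then to use the symmetry of $M$ to combine terms. The key algebraic identity to exploit is $(X-Y)(X-Y)^{\tp} = XX^{\tp} - XY^{\tp} - YX^{\tp} + YY^{\tp}$, which will allow me to rewrite the second term on the right-hand side.

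First, I would rewrite the left-hand side using the trace formulation of the Frobenius inner product:
\begin{equation*}
  \iprod{MX}{X-Y} = \operatorname{tr}\bigl((MX)^{\tp}(X-Y)\bigr) = \operatorname{tr}(X^{\tp}MX) - \operatorname{tr}(X^{\tp}MY).
\end{equation*}
Using the cyclic property of the trace together with symmetry of $M$, I would then recognize $\operatorname{tr}(X^{\tp}MX) = \iprod{M}{XX^{\tp}}$ and $\operatorname{tr}(X^{\tp}MY) = \iprod{M}{YX^{\tp}}$, giving a clean expression for the left-hand side purely in terms of Frobenius inner products against $M$.

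Next I would expand the right-hand side. Linearity of the Frobenius inner product in its second argument reduces the first term to $\tfrac{1}{2}\iprod{M}{XX^{\tp}} - \tfrac{1}{2}\iprod{M}{YY^{\tp}}$, and the identity for $(X-Y)(X-Y)^{\tp}$ reduces the second term to $\tfrac{1}{2}\iprod{M}{XX^{\tp}} - \tfrac{1}{2}\iprod{M}{XY^{\tp}} - \tfrac{1}{2}\iprod{M}{YX^{\tp}} + \tfrac{1}{2}\iprod{M}{YY^{\tp}}$. Adding the two and cancelling the $YY^{\tp}$ terms yields $\iprod{M}{XX^{\tp}} - \tfrac{1}{2}\iprod{M}{XY^{\tp}} - \tfrac{1}{2}\iprod{M}{YX^{\tp}}$.

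Finally, the proof is completed by the observation that $\iprod{M}{XY^{\tp}} = \iprod{M}{YX^{\tp}}$ whenever $M$ is symmetric: indeed, $\iprod{M}{XY^{\tp}} = \operatorname{tr}(MXY^{\tp}) = \operatorname{tr}(Y^{\tp}MX)$ by cyclic invariance, and transposing inside the trace gives $\operatorname{tr}(X^{\tp}M^{\tp}Y) = \operatorname{tr}(X^{\tp}MY) = \iprod{M}{YX^{\tp}}$. With this, the right-hand side collapses to $\iprod{M}{XX^{\tp}} - \iprod{M}{YX^{\tp}}$, matching the left-hand side. There is no real obstacle here; the result is a bookkeeping identity whose only subtlety is careful use of cyclic trace invariance combined with the symmetry hypothesis on $M$.
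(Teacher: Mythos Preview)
Your proof is correct and essentially the same as the paper's: both expand the inner products via the trace, use the identity $(X-Y)(X-Y)^{\tp} = XX^{\tp} - XY^{\tp} - YX^{\tp} + YY^{\tp}$, and invoke the symmetry of $M$ to identify $\iprod{M}{XY^{\tp}} = \iprod{M}{YX^{\tp}}$. The only difference is organizational---the paper works left-to-right by splitting $XX^{\tp} - YX^{\tp}$, whereas you verify both sides equal a common expression---but the mathematical content is identical.
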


\begin{proof}[Proof of Lemma~\ref{lemma:unfold}]
  \begin{align*}
    \iprod{MX}{X-Y} &= \iprod{M}{XX^{\tp}-YX^{\tp}} \\
    &= \iprod{M}{\frac{1}{2}(XX^{\tp} - YY^{\tp}) + \frac{1}{2}(XX^{\tp} + YY^{\tp}) - YX^{\tp}} \\
    &= \frac{1}{2}\iprod{M}{XX^{\tp}-YY^{\tp}} + \frac{1}{2}\iprod{M}{XX^{\tp} + YY^{\tp} - YX^{\tp} - XY^{\tp}} \\
    &= \frac{1}{2}\iprod{M}{XX^{\tp}-YY^{\tp}} + \frac{1}{2}\iprod{M}{(X-Y)(X-Y)^{\tp}} ,
  \end{align*}
  where the second to last equality uses the symmetry of $M$.
\end{proof}

\begin{lemma} \label{lemma:b_block}
  Suppose an $n \times q$ matrix $W$ satisfies $\lVert W B(x_k) \rVert_2 \leq \gamma$ for all $k=1,\ldots,m$, where $B(x)$ and $\bm{B}$ are defined as in Section~\ref{sec:theory}. Then under Assumption~\ref{assump:bs}, the $q \times nm$ block matrix
  \begin{equation*}
    M = \begin{pmatrix}
      B(x_1)B(x_1)^{\tp}W^{\tp} & \cdots & B(x_m)B(x_m)^{\tp}W^{\tp}
    \end{pmatrix}
  \end{equation*}
  satisfies $\lVert M \rVert_2 \leq \gamma ( C_B m / q )^{1/2}$.
\end{lemma}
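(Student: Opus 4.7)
The plan is to bound $\lVert M \rVert_2$ by analyzing $MM^{\tp}$, which is a $q \times q$ positive semidefinite matrix, since $\lVert M \rVert_2^2 = \lambda_{\max}(MM^{\tp})$. Writing out the block product, the cross terms drop out and one gets
\begin{equation*}
  MM^{\tp} = \sum_{k=1}^m B(x_k)B(x_k)^{\tp} W^{\tp} W B(x_k) B(x_k)^{\tp}.
\end{equation*}

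The key observation is that the middle factor $B(x_k)^{\tp} W^{\tp} W B(x_k) = \lVert W B(x_k) \rVert_2^2$ is a \emph{scalar}, since $W B(x_k)$ is a column vector. By the hypothesis this scalar is at most $\gamma^2$. Hence each summand is a rank-one PSD matrix bounded above (in the Loewner order) by $\gamma^2 B(x_k) B(x_k)^{\tp}$, giving
\begin{equation*}
  MM^{\tp} \preceq \gamma^2 \sum_{k=1}^m B(x_k) B(x_k)^{\tp} = \gamma^2\, \bm{B}^{\tp}\bm{B}.
\end{equation*}

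Taking the largest eigenvalue and invoking Assumption~\ref{assump:bs}(B), one obtains $\lVert MM^{\tp} \rVert_2 \leq \gamma^2 \lambda_{\max}(\bm{B}^{\tp}\bm{B}) \leq \gamma^2 C_B m / q$, so that $\lVert M \rVert_2 \leq \gamma (C_B m/q)^{1/2}$. There is no real obstacle here; the only thing to be careful about is correctly identifying that the ``sandwiched'' inner factor is a scalar, which both eliminates the dependence on $W$ beyond the uniform bound and reduces the problem to the spectrum of the design Gram matrix $\bm{B}^{\tp}\bm{B}$ already controlled by Assumption~\ref{assump:bs}.
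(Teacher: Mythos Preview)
Your proof is correct. The paper takes a slightly different but equivalent route: it factors
\[
  M = \bm{B}^{\tp}\, D, \qquad
  D = \begin{pmatrix}
    B(x_1)^{\tp} W^{\tp} & & \\
    & \ddots & \\
    & & B(x_m)^{\tp} W^{\tp}
  \end{pmatrix} \in \mathbb{R}^{m \times nm},
\]
and then applies submultiplicativity, $\lVert M \rVert_2 \leq \lVert \bm{B}^{\tp} \rVert_2 \lVert D \rVert_2$, together with the fact that the operator norm of a block-diagonal matrix is the maximum over block norms, giving $\lVert D \rVert_2 = \max_k \lVert W B(x_k) \rVert_2 \leq \gamma$. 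Your approach instead computes $MM^{\tp}$ directly and uses a Loewner-order bound; this is really the same computation viewed from the other side, since the paper's factorization gives $MM^{\tp} = \bm{B}^{\tp} D D^{\tp} \bm{B}$ with $DD^{\tp} = \operatorname{diag}(\lVert WB(x_k)\rVert_2^2) \preceq \gamma^2 I_m$, which is exactly your inequality $MM^{\tp} \preceq \gamma^2 \bm{B}^{\tp}\bm{B}$. Your version is arguably a touch more self-contained because it makes the scalar collapse explicit and does not rely on the block-diagonal norm fact, while the paper's factorization is a bit quicker to write down; either way the argument bottoms out in Assumption~\ref{assump:bs}(B).
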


\begin{proof}[Proof of Lemma~\ref{lemma:b_block}]
  Rewrite
  \begin{equation*}
    M = \bm{B}^{\tp} \begin{pmatrix}
      B(x_1)^{\tp}W^{\tp} & 0 & \\
      0 & \ddots & 0 \\
      & 0 & B(x_m)^{\tp}W^{\tp}
  \end{pmatrix},
  \end{equation*}
  where the second factor is an $m \times nm$ block diagonal matrix. Then by Assumption~\ref{assump:bs}, part (B), $\lVert \bm{B} \rVert_2 \leq (C_B m / q)^{1/2}$, and $\lVert WB(x_k) \rVert_2 \leq \gamma$ by assumption.
\end{proof}

\begin{lemma} \label{lemma:proc_alignment}
  Define vectors $x,y,z \in \mathbb{R}^n$. Suppose (without loss of generality) $x^{\tp}z > 0$.
  If
  \begin{equation*}
    \min_{s \in \{-1,1\}}\lVert sx - y \rVert_2 + \lVert y - z \rVert_2 < \lVert x \rVert_2,
  \end{equation*}
  then $x^{\tp}y > 0$.
\end{lemma}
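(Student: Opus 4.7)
The plan is to argue by contradiction. Suppose $x^{\tp}y \leq 0$. I will use the hypothesis to extract a Procrustes-like alignment constraint on $z$ via the triangle inequality, note that the assumption $x^{\tp}z > 0$ forces the sign choice in that constraint, and then propagate that forced sign choice back to a constraint on $y$ that is incompatible with $x^{\tp}y \leq 0$.

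The first step is to unpack the minimum. Since $\lVert y - z \rVert_2$ does not depend on $s$, the hypothesis produces some $s^{*} \in \{-1,1\}$ with $\lVert s^{*} x - y \rVert_2 + \lVert y - z \rVert_2 < \lVert x \rVert_2$. The triangle inequality then gives $\lVert s^{*} x - z \rVert_2 < \lVert x \rVert_2$, so the sign $s^{*}$ that works for $y$ must also beat $\lVert x \rVert_2$ when compared against $z$.

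Next I rule out $s^{*} = -1$. Expanding,
\begin{equation*}
\lVert -x - z \rVert_2^2 \;=\; \lVert x \rVert_2^2 + 2 x^{\tp} z + \lVert z \rVert_2^2 \;>\; \lVert x \rVert_2^2
\end{equation*}
because $x^{\tp} z > 0$ by hypothesis. This contradicts $\lVert s^{*} x - z \rVert_2 < \lVert x \rVert_2$, so we must have $s^{*} = 1$. In particular, $\lVert x - y \rVert_2 \leq \lVert x - y \rVert_2 + \lVert y - z \rVert_2 < \lVert x \rVert_2$, and squaring both sides yields $\lVert x \rVert_2^2 - 2 x^{\tp} y + \lVert y \rVert_2^2 < \lVert x \rVert_2^2$, i.e.\ $2 x^{\tp} y > \lVert y \rVert_2^2 \geq 0$. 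This forces $x^{\tp} y > 0$, contradicting the assumption $x^{\tp} y \leq 0$, completing the proof.

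The main obstacle is really just recognizing the role of each hypothesis: the strict inequality involving $\lVert x \rVert_2$ is tight enough to rule out the wrong sign via the strict positivity of $x^{\tp} z$, and the second summand $\lVert y - z \rVert_2$ acts as slack that is absorbed by the triangle inequality. Once the sign $s^{*} = 1$ is pinned down, the conclusion on $x^{\tp} y$ follows from a one-line expansion of $\lVert x - y \rVert_2^2$.
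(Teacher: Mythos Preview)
Your proof is correct, and it takes a somewhat different route from the paper's. The paper never pins down the value of $s^{*}$; instead it introduces ratios $c_1 = \lVert s^{*}x - y \rVert_2 / \lVert x \rVert_2$ and $c_2 = \lVert y - z \rVert_2 / \lVert x \rVert_2$, uses Cauchy--Schwarz to get the lower bound $\lvert x^{\tp}y \rvert \geq (1-c_1)\lVert x \rVert_2^2$ and the upper bound $x^{\tp}(z-y) \leq c_2\lVert x \rVert_2^2$, and then derives a contradiction from $x^{\tp}y \leq 0$ by comparing these against $c_1 + c_2 < 1$. Your argument is more elementary: you use the triangle inequality to transfer the bound to $\lVert s^{*}x - z \rVert_2 < \lVert x \rVert_2$, expand the square to force $s^{*}=1$ from $x^{\tp}z > 0$, and then expand $\lVert x - y \rVert_2^2 < \lVert x \rVert_2^2$ to read off $x^{\tp}y > \lVert y \rVert_2^2/2 \geq 0$ directly. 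This avoids Cauchy--Schwarz entirely and yields the slightly stronger conclusion $2x^{\tp}y > \lVert y \rVert_2^2$.

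One minor stylistic point: your opening line ``Suppose $x^{\tp}y \leq 0$'' is never actually used --- once you establish $s^{*}=1$, the expansion of $\lVert x - y \rVert_2^2$ gives $x^{\tp}y > 0$ unconditionally, so the argument is direct rather than by contradiction. You could drop that line without loss.
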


\begin{proof}
  Define $s^* = \operatorname{argmin}_{s \in \{-1,1\}}\lVert sx - y \rVert_2$.

  Write $c_1 = \lVert s^*x - y \rVert_2 / \lVert x \rVert_2$ and $c_2 = \lVert y - z \rVert_2 / \lVert x \rVert_2$. $c_1,c_2 \in (0,1)$ and satisfy $c_1 + c_2 < 1$.
  Then
  \begin{equation*}
    \lvert x^{\tp} y \rvert = \lvert x^{\tp}(y - s^*x + s^*x) \rvert \geq (1 - c_1) \lVert x \rVert_2^2.
  \end{equation*}
  Suppose, with the goal of obtaining a contradiction, that $x^{\tp}y \leq 0$. Since $x^{\tp}z > 0$ this implies
  \begin{equation*}
    x^{\tp}z - x^{\tp}y > (1 - c_1)\lVert x \rVert_2^2.
  \end{equation*}
  On the other hand, by assumption
  \begin{equation*}
    x^{\tp}z - x^{\tp}y \leq c_2 \lVert x \rVert_2^2 < (1-c_1)\lVert x \rVert_2^2,
  \end{equation*}
 which is a contradiction.
\end{proof}

\begin{lemma} \label{lemma:probability}
  Suppose $\{A_k\}_{k=1}^m$ are generated from a latent process network model, with independent sub-Gaussian edges with parameter at most $\sigma$, $B(x)$ satisfies Assumption~\ref{assump:bs}, and $n$, $q$ are such that $nq\log q \geq nq \log 5 + (n+q) \log 9$.
  Define the set
  \begin{equation*}
    \mathcal{B} = \{ W : \lVert W B(x_k) \rVert_2 \leq \gamma \tabby \forall k=1,\ldots,m \} \subseteq \mathbb{R}^{n \times q}
  \end{equation*}
  for some $\gamma > 0$.
  Then there is a constant $c_{\mathrm{prob}}$ such that the event
  \begin{equation*}
    \underset{W \in \mathcal{B}}{\bigcap} \left\{ \left\lVert \frac{1}{m} \sum_{k=1}^m \left\{ A_k - \sum_{r=1}^d Z_r(x_k)Z_r(x_k)^{\tp} \right\} W B(x_k)B(x_k)^{\tp} \right\rVert_2
    \leq c_{\mathrm{prob}} \gamma \left( \frac{\sigma^2 q^2 n \log q}{m} \right)^{1/2} \right\}
  \end{equation*}
  denoted by $\mathcal{E}$, satisfies $\prob( \mathcal{E} ) \geq 1 - 2\exp(-n/2)$.
\end{lemma}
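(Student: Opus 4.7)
The strategy is a standard concentration plus $\epsilon$-net argument, exploiting that the random quantity is linear in $W$ and bilinear in the unit vectors parametrizing the operator norm.

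\emph{Step 1: Pointwise sub-Gaussian tail.} Set $E_k = A_k - \sum_r Z_r(x_k) Z_r(x_k)^{\tp}$, so each $E_k$ is symmetric with independent mean-zero sub-Gaussian entries of parameter $\sigma$, and write $T(W) = \frac{1}{m} \sum_k E_k W B(x_k) B(x_k)^{\tp}$. By the variational characterization, $\lVert T(W)\rVert_2 = \sup_{u \in S^{n-1}, v \in S^{q-1}} u^{\tp} T(W) v$, and for fixed $u, v, W$ the scalar $u^{\tp} T(W) v = \frac{1}{m}\sum_k (B(x_k)^{\tp} v) \iprod{u (W B(x_k))^{\tp}}{E_k}$ is a mean-zero sub-Gaussian linear combination of the independent entries of $E_1, \ldots, E_m$. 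Using the symmetry of each $E_k$, its squared sub-Gaussian parameter is bounded by a universal constant times $\frac{\sigma^2}{m^2} \sum_k (B(x_k)^{\tp} v)^2 \lVert W B(x_k)\rVert_2^2 \leq \frac{C_B \sigma^2 \gamma^2}{qm}$, combining $W \in \mathcal{B}$ with $\sum_k (B(x_k)^{\tp} v)^2 = v^{\tp} \bm{B}^{\tp} \bm{B} v \leq C_B m/q$ from Assumption~\ref{assump:bs}(B). A Hoeffding-type tail bound then gives $\prob(|u^{\tp} T(W) v| > t) \leq 2\exp(-c_1 t^2 qm / (\sigma^2 \gamma^2 C_B))$ for an absolute constant $c_1$.

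\emph{Step 2: Reduction via nets.} I introduce the norm $\lVert W\rVert_{\mathcal{B}} := \max_k \lVert WB(x_k)\rVert_2$ on $\mathbb{R}^{n \times q}$, which is a genuine norm because Assumption~\ref{assump:bs}(B) implies $\bm{B}$ has full column rank. Under this norm $\mathcal{B}$ is the ball of radius $\gamma$ in an $nq$-dimensional normed space, so it admits a $\gamma/2$-net $\mathcal{N}_W$ with $|\mathcal{N}_W| \leq 5^{nq}$. Since $T$ is linear in $W$, the absorption inequality $\sup_{W \in \mathcal{B}} \lVert T(W)\rVert_2 \leq \max_{W' \in \mathcal{N}_W} \lVert T(W')\rVert_2 + \frac{1}{2} \sup_{W \in \mathcal{B}} \lVert T(W)\rVert_2$ immediately yields $\sup_{W \in \mathcal{B}} \lVert T(W)\rVert_2 \leq 2 \max_{W' \in \mathcal{N}_W} \lVert T(W')\rVert_2$ with no stochastic Lipschitz constant. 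Analogously I take $1/4$-nets $\mathcal{N}_u$, $\mathcal{N}_v$ of $S^{n-1}$, $S^{q-1}$ with $|\mathcal{N}_u| \leq 9^n$, $|\mathcal{N}_v| \leq 9^q$, reducing the bilinear supremum at the cost of another constant factor.

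\emph{Step 3: Union bound.} Choose $t = c_0 \sigma \gamma q \sqrt{n \log q / m}$ with $c_0$ large. The pointwise tail at this $t$ is $2 \exp(-c_0^2 c_1 q^3 n \log q / C_B)$. Taking a union bound over $\mathcal{N}_u \times \mathcal{N}_v \times \mathcal{N}_W$ gives an overall failure probability at most $2 \cdot 9^n \cdot 9^q \cdot 5^{nq} \exp(-c_0^2 c_1 q^3 n \log q / C_B)$. The hypothesis $n q \log q \geq nq \log 5 + (n+q) \log 9$ is precisely calibrated so that the log of the net-size product is at most $nq \log q$, and the entire expression collapses to $2\exp(nq \log q - c_0^2 c_1 q^3 n \log q / C_B)$, which is at most $2\exp(-n/2)$ once $c_0$ is chosen large enough (since $q^3 n \log q$ dominates both $nq\log q$ and $n/2$). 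The constant net-extension factors from Step 2 are absorbed into $c_{\mathrm{prob}}$.

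\emph{Main obstacle.} The principal subtlety is choosing the right norm for the net on $\mathcal{B}$. Using the Frobenius norm would force a Lipschitz analysis of $W \mapsto \lVert T(W)\rVert_2$ with a random constant involving $\max_k \lVert E_k\rVert_2$, cluttering the bookkeeping with an auxiliary high-probability event. Using the intrinsic $\mathcal{B}$-norm makes $\mathcal{B}$ a ball and lets linearity of $T$ absorb the net error deterministically, but requires verifying this is a true norm (via the full column rank of $\bm{B}$) and matching the covering-number dimension to the exponent supplied by concentration; the hypothesis on $nq \log q$ versus $nq \log 5 + (n+q) \log 9$ is exactly the calibration that makes this match work.
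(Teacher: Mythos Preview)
Your proof is correct and follows essentially the same architecture as the paper's: pointwise sub-Gaussian concentration for $u^{\tp}T(W)v$, $\epsilon$-nets over $W$ and over the unit spheres, absorption via linearity of $T$ in $W$, and a union bound calibrated by the hypothesis $nq\log q \geq nq\log 5 + (n+q)\log 9$.

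The one noteworthy difference is your choice of norm for the $W$-net. The paper embeds $\mathcal{B}$ into a Frobenius ball $\mathcal{B}^+ = \{W:\lVert W\rVert_F \leq \sqrt{C_Bq}\,\gamma/c_B\}$ and nets $\mathcal{B}^+$, yielding a net of size $(5\sqrt{q})^{nq}$ and forcing the cruder pointwise bound $\lVert W'B(x_k)\rVert_2 \leq \lVert W'\rVert_F \lesssim \sqrt{q}\,\gamma$ at net points (hence sub-Gaussian parameter $\asymp \sigma^2\gamma^2 q/m$). You instead net $\mathcal{B}$ directly in its intrinsic norm $\max_k\lVert WB(x_k)\rVert_2$, which gives a smaller net $5^{nq}$ and lets you keep $\lVert W'B(x_k)\rVert_2\leq \gamma$ at net points (parameter $\asymp \sigma^2\gamma^2/(qm)$). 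Your route is cleaner and in fact would support a bound sharper by a factor of $q$ than the lemma claims; both arguments suffice for the stated conclusion.
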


\begin{proof}[Proof of Lemma~\ref{lemma:probability}]
  We will prove a high probability bound for
  \begin{equation*}
    \sup_{W \in \mathcal{B}} \lVert M(W) \rVert_2,
  \end{equation*}
  where $M(W) = \sum_{k=1}^m \{ A_k - \sum_{r=1}^d Z_r(x_k)Z_r(x_k)^{\tp} \} W B(x_k)B(x_k)^{\tp}$. Define
  \begin{equation*}
    \mathcal{B}^+ = \{ W : \lVert W \rVert_F \leq \frac{\sqrt{C_B q}}{c_B} \gamma \} 
  \end{equation*}
  Note that $\mathcal{B}^+$ is equivalent to a closed Euclidean ball in $\mathbb{R}^{nq}$ of radius $\sqrt{C_B q} \gamma / c_B$.
  Moreover, for any $W \in \mathcal{B}$,
  \begin{align*}
    \lVert W \rVert_F^2 &= \lVert W (\bm{B}^{\tp}\bm{B}) (\bm{B}^{\tp}\bm{B})^{-1} \rVert_F^2 \\
    &\leq \lVert W\bm{B}^{\tp} \rVert_F^2 \lVert \bm{B} (\bm{B}^{\tp}\bm{B})^{-1} \rVert_2^2 \\
    &\leq \left( \sum_{k=1}^m \lVert W B(x_k) \rVert_2^2 \right) \left( \frac{C_B m}{q} \right) \left( \frac{c_B m}{q} \right)^{-2} \\
    &\leq \frac{C_B}{c_B^2} \gamma^2 q. \\
  \end{align*}
  so that $\mathcal{B} \subseteq \mathcal{B}^+$.

  By standard covering results \citep[Proposition 4.2.12]{vershynin18highdimensional}, we can find a $(C_B^{1/2}\gamma / 2c_B)$-net for $\mathcal{B}^+$ (under Frobenius metric), denoted by $\mathcal{L}$,  satisfying
  \begin{equation*}
    \left\lvert \mathcal{L} \right\rvert \leq (4 q^{1/2} + 1)^{nq} \leq \left( 5 q^{1/2} \right)^{nq}.
  \end{equation*}
  Every element $W$ of $\mathcal{B}$ can be written as $W' + E$, where $W' \in \mathcal{L}$, and $\lVert E \rVert_2 \leq C_B^{1/2}\gamma / 2c_B$. Fix $W \in \mathcal{B}$. Then
  \begin{align*}
    \lVert M(W) \rVert_2 &\leq \lVert M(W') \rVert_2 + \lVert M(E) \rVert_2 \\
    &\leq \max_{W' \in \mathcal{L}} \lVert M(W') \rVert_2 + \frac{1}{2} \sup_{W \in \mathcal{B}} \lVert M(W) \rVert_2. \\
    &\leq \max_{W' \in \mathcal{L}} \left\{ 2 \max_{x \in \mathcal{N}, y \in \mathcal{M}} x^{\tp} M(W') y \right\} + \frac{1}{2} \sup_{W \in \mathcal{B}} \lVert M(W) \rVert_2 ,
  \end{align*}
  where $\mathcal{N}$ and $\mathcal{M}$ are $(1/4)$-nets for $\mathcal{S}^{n-1}$ and $\mathcal{S}^{q-1}$ of cardinalities $9^n$ and $9^q$, respectively \citep[Theorem 4.4.5]{vershynin18highdimensional}. Taking a supremum on the left hand side and rearranging,
  \begin{equation} \label{net_approx}
    \sup_{W \in \mathcal{B}} \lVert M(W) \rVert_2 \leq 4 \max_{W' \in \mathcal{L}, x \in \mathcal{N}, y \in \mathcal{M}} x^{\tp} M(W')y.
  \end{equation}

  For fixed $x$, $y$, and $W'$, concentration follows directly from Theorem 4.4.5 in \citet{vershynin18highdimensional}, and the fact that $\lVert B(x_k) \rVert_2 \leq 1$ for all $k$, resulting in the sub-Gaussian tail bound
  \begin{equation*}
    \prob \left\{ x^{\tp} M(W') y \geq t \right\} \leq 2 \exp \left( \frac{-c'_{\mathrm{prob}}t^2}{\sigma^2 m q \gamma^2} \right)
  \end{equation*}
  for a constant $c'_{\mathrm{prob}}$ (which depends on $C_B$ and $c_B$). We now take a union bound over the elements in the net, obtaining
  \begin{equation*}
    \prob \left[ \max_{W' \in \mathcal{L},x \in \mathcal{N},y \in \mathcal{M}} \left\{ x^{\tp} M(W') y \right\} \geq t \right] \leq 2 \cdot \left( 5 q^{1/2} \right)^{nq} 9^{n + q} \exp \left( \frac{-c'_{\mathrm{prob}} t^2}{\sigma^2 m q \gamma^2} \right).
  \end{equation*}
  By assumption, $n$ and $q$ satisfy $nq \log q \geq nq \log 5 + (n + q) \log 9$. Set
  \begin{equation*}
    c_{\mathrm{prob}} = 4 \left( \frac{2}{c'_{\mathrm{prob}}} \right)^{1/2},
  \end{equation*}
  and $t^* = c_{\mathrm{prob}} \sigma \gamma (nm q^2 \log q)^{1/2}$.
  Then
  \begin{equation*}
        \prob \left[ \max_{W' \in \mathcal{L},x \in \mathcal{N},y \in \mathcal{M}} \left\{ x^{\tp} M(W') y \right\} \geq \frac{t^*}{4} \right] \leq 2 \exp \left( -\frac{n}{2} \right).
  \end{equation*}
 In combination with \eqref{net_approx}, this completes the proof.
\end{proof}

  \subsection{Proof of Theorem~\ref{thm:main_concur}} \label{subsec:pf_thm2}

  In this section, we prove Theorem~\ref{thm:main_concur}.
  Recall $c_B$, $C_B$, $\gamma_Z$, $\kappa$ defined in the main body of the paper, and define
  \begin{align*}
    c_{\mathrm{SNR}} &= \frac{\gamma_Z^2}{\sigma} \left( \frac{m}{q^5 n \log q}\right)^{1/2}  , \\
    c_{\mathrm{init}} &= \frac{\lVert \widehat{\mathcal{W}}^0 - \mathcal{W}^{*,0} \rVert_F^2}{\gamma_Z^2}  , \\
    c_{\mathrm{approx},2} &= \frac{q}{\gamma_Z^2} \operatorname{sup}_{h \geq 0} \neweps_{\mathrm{approx},2}^{(h)} \\
    c_{\mathrm{approx},\infty} &= \frac{1}{\gamma_Z^2} \operatorname{sup}_{h \geq 0} \neweps_{\mathrm{approx},\infty}^{(h)}.
  \end{align*}

  We start from two necessary lemmas.

  \begin{lemma} \label{lemma:op_norm2}
    Suppose the assumptions of Theorem~\ref{thm:main_concur} hold. Fix $h \geq 0$ and let
    \begin{equation*}
      c_{\mathrm{prev}} = \frac{1}{\gamma_Z^2} \lVert \widehat{\mathcal{W}}^h - \mathcal{W}^{*,h} \rVert^2_F.
    \end{equation*}
    Then
    \begin{equation*}
      \lVert \widehat{\bm{W}}_r^h B(x_k) \rVert_2 \leq c_W \gamma_Z
    \end{equation*}
    uniformly over $r=1,\ldots,d$ and $k=1,\ldots,m$ for a constant $c_W = (c_{\mathrm{prev}}^{1/2} + c_{\mathrm{approx},\infty}^{1/2} + \kappa^{1/2})$.
  \end{lemma}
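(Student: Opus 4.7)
The plan is a direct triangle-inequality argument that decomposes $\widehat{\bm{W}}_r^h B(x_k)$ using two intermediate quantities — the coordinate target and the aligned snapshot target — with each of the three summands in $c_W$ matched to exactly one piece. Specifically, I would start from
\begin{equation*}
  \lVert \widehat{\bm{W}}_r^h B(x_k) \rVert_2 \leq \lVert (\widehat{\bm{W}}_r^h - \bm{W}_r^{*,h}) B(x_k) \rVert_2 + \lVert \bm{W}_r^{*,h} B(x_k) \rVert_2,
\end{equation*}
where $\bm{W}_r^{*,h}$ denotes the $r$-th mode-3 slice of the coordinate target $\mathcal{W}^{*,h}$ defined in \eqref{def_target}.

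For the first piece I would use submultiplicativity together with Assumption~\ref{assump:bs}(A), which gives $\lVert B(x_k) \rVert_2 \leq \lVert B(x_k) \rVert_1 = 1$. Then $\lVert (\widehat{\bm{W}}_r^h - \bm{W}_r^{*,h}) B(x_k) \rVert_2 \leq \lVert \widehat{\bm{W}}_r^h - \bm{W}_r^{*,h} \rVert_F$, and since any single mode-3 slice has smaller Frobenius norm than the whole tensor, this in turn is at most $\lVert \widehat{\mathcal{W}}^h - \mathcal{W}^{*,h} \rVert_F = c_{\mathrm{prev}}^{1/2} \gamma_Z$ by the definition of $c_{\mathrm{prev}}$.

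For the second piece, observe that $\bm{W}_r^{*,h} B(x_k)$ is the $r$-th column of the $n \times d$ matrix $\mathcal{W}^{*,h} \bar{\times}_2 B(x_k)$, so its Euclidean norm is at most the operator norm of that matrix. Insert $Z(x_k) Q_k^{*,h}$ and apply the triangle inequality, bounding the operator norm of the difference by its Frobenius norm:
\begin{equation*}
  \lVert \bm{W}_r^{*,h} B(x_k) \rVert_2 \leq \lVert \mathcal{W}^{*,h} \bar{\times}_2 B(x_k) - Z(x_k) Q_k^{*,h} \rVert_F + \lVert Z(x_k) Q_k^{*,h} \rVert_2.
\end{equation*}
By the definition of $\neweps^{(h)}_{\mathrm{approx},\infty}$ and $c_{\mathrm{approx},\infty}$, the first term is at most $c_{\mathrm{approx},\infty}^{1/2} \gamma_Z$. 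By orthogonal invariance of the operator norm, $\lVert Z(x_k) Q_k^{*,h} \rVert_2 = \lambda_{\max}(Z(x_k)^{\tp} Z(x_k))^{1/2}$, and by \eqref{Z_conditioning} this is at most $\kappa^{1/2} \gamma_Z$.

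Summing the three contributions yields precisely $c_W = c_{\mathrm{prev}}^{1/2} + c_{\mathrm{approx},\infty}^{1/2} + \kappa^{1/2}$, uniformly in $r$ and $k$. There is no real obstacle here: the lemma is essentially a bookkeeping step whose purpose is to produce a uniform operator-norm bound on $\widehat{\bm{W}}_r^h B(x_k)$ at each iteration, so that in subsequent steps of the proof of Theorem~\ref{thm:main_concur} the probabilistic bound from Lemma~\ref{lemma:probability} can be applied on a set $\mathcal{B}$ whose radius $\gamma = c_W \gamma_Z$ is explicit and tracked through the induction.
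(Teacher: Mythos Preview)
Your proposal is correct and follows essentially the same three-term triangle-inequality decomposition as the paper's own proof, which writes $\lVert \widehat{\bm{W}}_r^h B(x_k) \rVert_2 \leq \lVert (\widehat{\bm{W}}_r^h - \bm{W}_r^{*,h}) B(x_k) \rVert_2 + \lVert \bm{W}_r^{*,h} B(x_k) - Z(x_k)Q_k^{*,h} \rVert_2 + \lVert Z(x_k)Q_k^{*,h} \rVert_2$ and bounds each piece by the corresponding summand in $c_W$. If anything, you are more explicit than the paper about the intermediate justifications (e.g.\ $\lVert B(x_k) \rVert_2 \leq 1$, the passage from a single slice to the full tensor Frobenius norm, and the column-to-operator-norm step), but the argument is the same.
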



  \begin{proof}[Proof of Lemma~\ref{lemma:op_norm2}]
    \begin{align*}
      \lVert \widehat{\bm{W}}_r^h B(x_k) \rVert_2 \rVert_2 &\leq \lVert (\widehat{\bm{W}}_r^h - \bm{W}_r^{*,h} ) B(x_k) \rVert_2 + \lVert \bm{W}_r^{*,h} B(x_k) - Z(x_k)Q_k^{*,h}\rVert_2 + \lVert Z(x_k)Q_k^{*,h} \rVert_2 \\
      &\leq c_{\mathrm{prev}}^{1/2} \gamma_Z + c_{\mathrm{approx},\infty}^{1/2} \gamma_Z + \kappa^{1/2} \gamma_Z.
    \end{align*}
  \end{proof}

  \begin{lemma} \label{lemma:onestep2}
    Suppose the assumptions of Theorem~\ref{thm:main_concur} hold, and that $\mathcal{E}$ occurs.
    Fix $h \geq 0$, define $c_{\mathrm{prev}}$ as in Lemma~\ref{lemma:op_norm2}, and suppose
    \begin{equation} \label{onestep2_prev}
      c_{\mathrm{prev}} \leq \frac{c_B}{16 C_B} .
    \end{equation}
    Then for positive constants
    \begin{equation} \label{onestep2_constants}
      \rho = c_B/8, \tabby c_{\mathrm{step}} = \max \left\{ c_{\mathrm{prob}}^2 c_W \left( \frac{2d}{c_B} + \frac{1}{16} \right), 4c_{\mathrm{approx},2} + 6\kappa \right\},
    \end{equation}
    we have
    \begin{equation*}
      \lVert \widehat{\mathcal{W}}^{h+1} - \mathcal{W}^{*,h+1} \rVert_F^2 \leq \left( 1 - \eta' \rho \right) \lVert \widehat{\mathcal{W}}^{h} - \mathcal{W}^{*,h} \rVert_F^2 + c_{\mathrm{step}} \eta' \left( \frac{\sigma^2 q^5 n \log q}{\gamma_Z^2 m} + q \neweps_{\mathrm{approx},2}^{(h)} \right).
    \end{equation*}
  \end{lemma}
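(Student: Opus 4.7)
\textbf{Proof plan for Lemma~\ref{lemma:onestep2}.} The plan is to run the standard nonconvex gradient descent analysis, expanding the one-step update and then lower bounding the restricted strong convexity of $\ell$ along the direction $\widehat{\mathcal{W}}^h - \mathcal{W}^{*,h}$, while upper bounding the quadratic term from the gradient. First I would invoke the optimality that defines $\mathcal{W}^{*,h+1}$: since $\mathcal{R}_{\bm{B},\mathcal{Z}}(\widehat{\mathcal{W}}^{h+1})$ is the element of $\mathcal{T}^m(\mathcal{Z})$ best aligned to $\widehat{\mathcal{W}}^{h+1}$, and $\mathcal{W}^{*,h}$ is the basis pullback of another element of $\mathcal{T}^m(\mathcal{Z})$, it follows that
\begin{equation*}
  \lVert \widehat{\mathcal{W}}^{h+1} - \mathcal{W}^{*,h+1} \rVert_F^2 \leq \lVert \widehat{\mathcal{W}}^{h+1} - \mathcal{W}^{*,h} \rVert_F^2.
\end{equation*}
Substituting $\widehat{\mathcal{W}}^{h+1} = \widehat{\mathcal{W}}^h - \eta_h \nabla\ell(\widehat{\mathcal{W}}^h)$ and expanding gives the usual decomposition
\begin{equation*}
  \lVert \widehat{\mathcal{W}}^h - \mathcal{W}^{*,h} \rVert_F^2 - 2\eta_h \iprod{\nabla \ell(\widehat{\mathcal{W}}^h)}{\widehat{\mathcal{W}}^h - \mathcal{W}^{*,h}} + \eta_h^2 \lVert \nabla \ell(\widehat{\mathcal{W}}^h) \rVert_F^2,
\end{equation*}
and the task reduces to making the cross term sufficiently negative and showing the squared gradient term is absorbed by the step-size choice $\eta_h \equiv \eta' q/(m\gamma_Z^2)$.

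Next, using Proposition~\ref{prop:snap_gradient}, I would rewrite each slice of $\nabla \ell$ and add and subtract the ``target'' mean $\Theta_k^{*,h} = Z(x_k)Q_k^{*,h}(Z(x_k)Q_k^{*,h})^{\tp}$, thereby splitting the cross term into a stochastic piece involving $A_k - \Theta_k^{*,h}$, a bias piece involving $\Theta_k^{*,h} - \bm{W}_r^{*,h}B(x_k)B(x_k)^{\tp}(\bm{W}_r^{*,h})^{\tp}$ (controlled by $\neweps_{\mathrm{approx},2}^{(h)}$ and $\neweps_{\mathrm{approx},\infty}^{(h)}$), and a deterministic curvature piece. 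Applying Lemma~\ref{lemma:unfold} slice by slice converts the curvature piece into the nonnegative quantity $\tfrac{1}{2}\sum_k \lVert \widehat{\Theta}_k^h - \bar{\Theta}_k^{*,h} \rVert_F^2$ plus manageable cubic cross terms, where $\bar{\Theta}_k^{*,h}$ is the reconstruction of $\Theta_k^{*,h}$ from $\bm{W}^{*,h}$. The stochastic piece is then bounded on the event $\mathcal{E}$ of Lemma~\ref{lemma:probability}, which applies because Lemma~\ref{lemma:op_norm2} together with the hypothesis \eqref{onestep2_prev} ensures $\widehat{\bm{W}}_r^h \in \mathcal{B}$ with $\gamma = c_W \gamma_Z$.

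The hard part will be extracting the contraction coefficient $1 - \eta'\rho$. Converting the curvature lower bound from snapshot space to coordinate space requires inverting through the design matrix: Assumption~\ref{assump:bs}(B) gives $\lambda_{\min}(\bm{B}^{\tp}\bm{B}) \geq c_B m/q$, so $\lVert \bm{W}_r^{*,h}B(x_k) - \widehat{\bm{W}}_r^hB(x_k) \rVert_F^2$ summed over $k$ is bounded below by $(c_B m/q)\lVert \widehat{\mathcal{W}}^h - \mathcal{W}^{*,h} \rVert_F^2$, producing the factor $\rho = c_B/8$ after the step size $\eta' q/(m\gamma_Z^2)$ and a factor of $\gamma_Z^2$ from the conditioning on $Z$ are absorbed. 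The restriction \eqref{onestep2_prev} is exactly what guarantees that the cubic cross terms arising from Lemma~\ref{lemma:unfold} and from expanding $\widehat{\Theta}_k^h - \bar{\Theta}_k^{*,h}$ in terms of $\widehat{\mathcal{W}}^h - \mathcal{W}^{*,h}$ remain smaller than the quadratic curvature term; tracking $C_B/c_B$ along this comparison produces the $c_B/(16C_B)$ threshold.

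Finally, for the $\eta_h^2 \lVert \nabla \ell \rVert_F^2$ term, I would bound each slice of $\nabla \ell$ by writing it as a product of a residual block and a block matrix of the form in Lemma~\ref{lemma:b_block}; that lemma, combined with Lemma~\ref{lemma:op_norm2}, yields $\lVert \nabla \ell \rVert_F^2 = O((m/q)^2 \gamma_Z^4 \cdot [\text{stat} + \text{bias} + \text{curvature}])$. The step size $\eta_h^2 \sim q^2/(m^2\gamma_Z^4)$ converts this into terms of order $\eta'(\sigma^2 q^5 n \log q / (m\gamma_Z^2) + q\neweps_{\mathrm{approx},2}^{(h)})$, contributing to $c_{\mathrm{step}}$, and allowing the curvature contribution from the squared-gradient term to be absorbed into the linear curvature term from the cross term (provided $\eta'$ is sufficiently small, as reflected in the constants). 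Collecting all pieces with the constants displayed in \eqref{onestep2_constants} gives the stated one-step contraction.
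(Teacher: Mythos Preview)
Your overall architecture matches the paper's: the optimality of $\mathcal{W}^{*,h+1}$, the expansion of the update, the use of Lemma~\ref{lemma:unfold}, Lemma~\ref{lemma:probability} on $\mathcal{E}$, and Lemma~\ref{lemma:b_block} for the squared-gradient term are all exactly what the paper does. The gap is in the step you flag as ``the hard part,'' and your proposed route does not close it.

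You want to pass from the snapshot-space curvature $\sum_k \lVert \widehat\Theta_k^h - \bar\Theta_k^{*,h}\rVert_F^2$ to the coordinate-space quantity $\lVert \widehat{\mathcal{W}}^h - \mathcal{W}^{*,h}\rVert_F^2$ via two steps: first pick up a factor $\gamma_Z^2$ ``from the conditioning on $Z$,'' then use $\lambda_{\min}(\bm B^{\tp}\bm B)\geq c_Bm/q$. The second step is fine, but the first is not: there is no inequality of the form $\lVert XX^{\tp}-YY^{\tp}\rVert_F^2 \gtrsim \gamma_Z^2\lVert X-Y\rVert_F^2$ without a Procrustes minimization on the right, and the hypothesis $c_{\mathrm{prev}}\leq c_B/(16C_B)$ does not force the optimal $Q_k$ to be the identity when $d>1$ (take $X=YR_\theta$ for a small rotation). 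Once you insert the required $\min_{Q_k}$, the rotated targets $Z^{*,h}(x_k)Q_k=\{\mathcal{W}^{*,h}\bar\times_2 B(x_k)\}Q_k$ are neither of the form $W'B(x_k)$ for a single $W'$, nor of the form $Z(x_k)Q_k'$, so neither the design-matrix bound nor the defining optimality of $\mathcal{W}^{*,h}$ applies.

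The paper avoids this trap by centering the curvature term at the \emph{true} $Z(x_k)Z(x_k)^{\tp}$ rather than at $\bar\Theta_k^{*,h}$, so that $T_{\mathrm{mean}}=\sum_k\lVert \widehat\Theta_k^h - Z(x_k)Z(x_k)^{\tp}\rVert_F^2$. Then, for each $k$, Lemma~28 of \cite{ma20universal} gives $\min_{Q_k}\lVert \widehat{\mathcal{W}}^h\bar\times_2 B(x_k) - Z(x_k)Q_k\rVert_F^2 \leq \{2(\sqrt2-1)\gamma_Z^2\}^{-1}\lVert \widehat\Theta_k^h - Z(x_k)Z(x_k)^{\tp}\rVert_F^2$. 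Because the minimizing sequence $\{Q_k^{\mathrm{Proc},h}\}$ produces a bona fide element of $\mathcal{T}^m(\mathcal{Z})$, its basis pullback is a feasible competitor in the definition of $\mathcal{W}^{*,h}$, which yields $\lVert \widehat{\mathcal{W}}^h-\mathcal{W}^{*,h}\rVert_F^2 \leq \lVert \widehat{\mathcal{W}}^h - \mathcal{P}_{\bm B}(\mathcal{Z}\mathcal{Q}^{\mathrm{Proc},h})\rVert_F^2$; only then does the design-matrix eigenvalue bound plus a projection argument finish the chain. This Procrustes-plus-optimality maneuver (display~\eqref{tmean_lb} in the paper) is the missing idea in your plan.
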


  \begin{proof}[Proof of Lemma~\ref{lemma:onestep2}]
    We define the following terms which we will see later in the proof:
    \begin{align*}
      &T_{\mathrm{mean}} = \sum_{k=1}^m \left\lVert \left\{ \sum_{r=1}^d \widehat{\bm{W}}_r^{h}B(x_k)B(x_k)^{\tp}(\widehat{\bm{W}}_r^h)^{\tp} \right\} - Z(x_k)Z(x_k)^{\tp} \right\rVert_F^2, \\
      &T_{\mathrm{cross}} = \bigg\vert \sum_{r=1}^d \sum_{k=1}^m \bigg\langle \left\{ \sum_{r'=1}^d \widehat{\bm{W}}_{r'}^{h}B(x_k)B(x_k)^{\tp} (\widehat{\bm{W}}_{r'}^h)^{\tp} \right\} - Z(x_k)Z(x_k)^{\tp}, \\
      &\hspace{1in} (\widehat{\bm{W}}_r^h - \bm{W}^{*,h}_r) B(x_k)B(x_k)^{\tp} (\widehat{\bm{W}}_r^h - \bm{W}^{*,h}_r)^{\tp} \bigg\rangle \bigg\vert, \\
      &T_{\mathrm{op}} = 2 \left\lvert \sum_{r=1}^d \iprod{\sum_{k=1}^m \left\{ A_k - Z(x_k)Z(x_k)^{\tp} \right\} \widehat{\bm{W}}_r^h B(x_k)B(x_k)^{\tp} }{\widehat{\bm{W}}_r^h - \bm{W}^{*,h}_r} \right\rvert, \\
      &T_{\mathrm{approx}} = 2 \bigg\vert \sum_{k=1}^m \bigg\langle \left\{ \sum_{r=1}^d \widehat{\bm{W}}_r^{h}B(x_k)B(x_k)^{\tp} (\widehat{\bm{W}}_r^h)^{\tp} \right\} - Z(x_k)Z(x_k)^{\tp}, \\
      &\hspace{1in} \left\{ \sum_{r'=1}^d \bm{W}_{r'}^{*,h} B(x_k)B(x_k)^{\tp}(\bm{W}_{r'}^{*,h})^{\tp} \right\} - Z(x_k)Z(x_k)^{\tp} \bigg\rangle \bigg\vert, \\
      &T_{\mathrm{quad.mean}}^2 = 2 \sum_{r=1}^d \left\lVert \sum_{k=1}^m \left[ \left\{ \sum_{r'=1}^d  \widehat{\bm{W}}_{r'}^{h}B(x_k)B(x_k)^{\tp}(\widehat{\bm{W}}_{r'}^h)^{\tp} \right\} - Z(x_k)Z(x_k)^{\tp} \right] \widehat{\bm{W}}_r^h B(x_k)B(x_k)^{\tp} \right\rVert_F^2, \\
      &T_{\mathrm{quad.op}}^2 = 2 \sum_{r=1}^d \left\lVert \sum_{k=1}^m \left\{ A_k - Z(x_k)Z(x_k)^{\tp} \right\} \widehat{\bm{W}}_r^h B(x_k)B(x_k)^{\tp} \right\rVert_F^2.
    \end{align*}

    Then we have
    \begin{align} \label{summary_terms2}
     &\tabby \tabby \lVert \widehat{\mathcal{W}}^{h+1} -   \mathcal{W}^{*,h+1} \rVert_F^2 \nonumber \\
        = &\sum_{r=1}^d \lVert \widehat{\bm{W}}_r^{h+1} - \bm{W}^{*,h+1}_r \rVert^2_F \nonumber \\
        \leq &\sum_{r=1}^d \lVert \widehat{\bm{W}}_r^{h+1} - \bm{W}^{*,h}_r \rVert^2_F \nonumber \\
       =  &\sum_{r=1}^d \lVert \widehat{\bm{W}}_r^h - \bm{W}^{*,h}_r + \eta_h \sum_{k=1}^m \left( A_k - \sum_{r'=1}^d \left\{ \widehat{\bm{W}}_{r'}^{h}B(x_k)B(x_k)^{\tp}\left[\widehat{\bm{W}}_{r'}^{h}\right]^{\tp} \right\} \right) \widehat{\bm{W}}_r^{h} B(x_k)B(x_k)^{\tp} \rVert^2_F \nonumber \\
       \leq &\sum_{r=1}^d \lVert \widehat{\bm{W}}_r^h - \bm{W}^{*,h}_r \rVert^2_F + \eta_h^2T_{\mathrm{quad.mean}}^2 + \eta_h^2T_{\mathrm{quad.op}}^2 \nonumber \\
       & + 2\eta_h \sum_{r=1}^d \iprod{\sum_{k=1}^m \left( A_k - \sum_{r'=1}^d \left\{ \widehat{\bm{W}}_{r'}^{h}B(x_k)B(x_k)^{\tp}\left[\widehat{\bm{W}}_{r'}^{h}\right]^{\tp} \right\} \right) \widehat{\bm{W}}_r^{h} B(x_k)B(x_k)^{\tp}}{\widehat{\bm{W}}_r^{h} - \bm{W}^{*,h}_r} \nonumber \\
       \leq &\lVert \widehat{\mathcal{W}}^{h+1} - \mathcal{W}^{*,h+1} \rVert_F^2 + \eta_h^2T_{\mathrm{quad.mean}}^2 + \eta_h^2T_{\mathrm{quad.op}}^2 + \eta_h T_{\mathrm{op}} \nonumber \\
       & - 2\eta_h \sum_{r=1}^d \iprod{\sum_{k=1}^m \sum_{r'=1}^d \left\{ \left(\widehat{\bm{W}}_{r'}^{h}B(x_k)B(x_k)^{\tp}\left[\widehat{\bm{W}}_{r'}^{h}\right]^{\tp} \right\} - Z(x_k)Z(x_k)^{\tp} \right) \widehat{\bm{W}}_r^{h} B(x_k)B(x_k)^{\tp}}{\widehat{\bm{W}}_r^{h} - \bm{W}^{*,h}_r} \nonumber \\
      \leq &\lVert \widehat{\mathcal{W}}^{h} - \mathcal{W}^{*,h} \rVert_F^2 +   \eta_h^2T_{\mathrm{quad.mean}}^2 + \eta_h^2T_{\mathrm{quad.op}}^2 +
       +
      \eta_hT_{\mathrm{op}} + \eta_hT_{\mathrm{approx}} + \eta_hT_{\mathrm{cross}} - \eta_hT_{\mathrm{mean}},
    \end{align}
    where the first inequality follows from the choice of $\mathcal{W}^{*,h+1}$, and the final inequality uses Lemma~\ref{lemma:unfold}.
    We will next bound each of these terms.

    For $T_{\mathrm{quad.mean}}$, we fix $r \in \{1,\ldots,d\}$ and bound each term by
    \begin{align*}
      T_{\mathrm{quad.mean}} &= \sqrt{2} \left\lVert
      \begin{pmatrix}
        \widehat{\bm{W}}_r^{h}B(x_1)B(x_1)^{\tp}(\widehat{\bm{W}}_r^{h})^{\tp} - Z_r(x_1)Z_r(x_1)^{\tp} \\ \vdots \\ \widehat{\bm{W}}_r^{h}B(x_m)B(x_m)^{\tp}(\widehat{\bm{W}}_r^{h})^{\tp} - Z_r(x_m)Z_r(x_m)^{\tp}
      \end{pmatrix}^{\tp}
      \begin{pmatrix}
        \widehat{\bm{W}}_r^h B(x_1)B(x_1)^{\tp} \\ \vdots \\ \widehat{\bm{W}}_r^h B(x_m)B(x_m)^{\tp}
      \end{pmatrix} \right\rVert_F \\
      &\leq \sqrt{2} \left\lVert
      \begin{pmatrix}
        \widehat{\bm{W}}_r^{h}B(x_1)B(x_1)^{\tp}(\widehat{\bm{W}}_r^{h})^{\tp} - Z_r(x_1)Z_r(x_1)^{\tp} \\ \vdots \\ \widehat{\bm{W}}_r^{h}B(x_m)B(x_m)^{\tp}(\widehat{\bm{W}}_r^{h})^{\tp} - Z_r(x_m)Z_r(x_m)^{\tp}
      \end{pmatrix}^{\tp} \right\rVert_F \left\lVert
      \begin{pmatrix}
        \widehat{\bm{W}}_r^h B(x_1)B(x_1)^{\tp} \\ \vdots \\ \widehat{\bm{W}}_r^h B(x_m)B(x_m)^{\tp}
      \end{pmatrix} \right\rVert_2 \\
      &\leq T_{\mathrm{mean}}^{1/2} \gamma_Z \left( \frac{2 c_W m}{q} \right)^{1/2} ,
    \end{align*}
    to conclude
    \begin{equation*}
      T_{\mathrm{quad.mean}}^2 \leq 2c_W d \frac{m \gamma_Z^2}{q} T_{\mathrm{mean}} .
    \end{equation*}

    For $T_{\mathrm{quad.op}}$, we fix $r \in \{1,\ldots,d\}$ and bound each term by
    \begin{align*}
      T_{\mathrm{quad.op}} &\leq \sqrt{2} m \lVert \frac{1}{m} \sum_{k=1}^m \left\{ A_k - Z_r(x_k)Z_r(x_k)^{\tp}\right\} \widehat{\bm{W}}_r^h B(x_k)B(x_k)^{\tp} \rVert_F \\
      &\leq 2 m q^{1/2}  \lVert \frac{1}{m} \sum_{k=1}^m \left\{ A_k - Z_r(x_k)Z_r(x_k)^{\tp} \right\} \widehat{\bm{W}}_r^h B(x_k)B(x_k)^{\tp} \rVert_2 \\
      &\leq \sqrt{2} m q^{1/2}  c_{\mathrm{prob}} c_W \gamma_1 \left( \frac{\sigma^2 q^2 n \log q}{m}\right)^{1/2},
    \end{align*}
    where the final inequality uses Lemma~\ref{lemma:probability}, to conclude
    \begin{equation*}
      T_{\mathrm{quad.op}}^2 \leq 2c_{\mathrm{prob}}^2 c_W^2 d \sigma^2 q^3 m \gamma_Z^2 n \log q.
    \end{equation*}

    For $T_{\mathrm{cross}}$, we bound
    \begin{align*}
      T_{\mathrm{cross}} &\leq \sum_{r=1}^d \left\{ T_{\mathrm{mean}}^{1/2} \left\lVert \begin{pmatrix}
        (\widehat{\bm{W}}_r^h - \bm{W}_r^{*,h}) B(x_1)B(x_1)^{\tp} (\widehat{\bm{W}}_r^h - \bm{W}_r^{*,h})^{\tp} \\ \vdots \\ (\widehat{\bm{W}}_r^h - \bm{W}_r^{*,h}) B(x_m)B(x_m)^{\tp} (\widehat{\bm{W}}_r^h - \bm{W}_r^{*,h})^{\tp}
      \end{pmatrix} \right\rVert_F \right\} \\
      &\leq \sum_{r=1}^d \left\{ T_{\mathrm{mean}}^{1/2} \lVert \widehat{\bm{W}}_r^h - \bm{W}_r^{*,h} \rVert_F \left\lVert \begin{pmatrix}
        (\widehat{\bm{W}}_r^h - \bm{W}_r^{*,h}) B(x_1)B(x_1)^{\tp}  \\ \vdots \\ (\widehat{\bm{W}}_r^h - \bm{W}_r^{*,h}) B(x_m)B(x_m)^{\tp}
      \end{pmatrix} \right\rVert_2 \right\} \\
      &\leq \sum_{r=1}^d \left\{ \left( \frac{C_B m T_{\mathrm{mean}}}{q} \right)^{1/2} \lVert \widehat{\bm{W}}_r^h - \bm{W}_r^{*,h} \rVert_F^2 \right\} \\
      &= \left( \frac{C_B m T_{\mathrm{mean}}}{q} \right)^{1/2} \lVert \widehat{\mathcal{W}}^{h} - \mathcal{W}^{*,h} \rVert_F^2 \\
      &\leq c_{\mathrm{cross}} T_{\mathrm{mean}} + \frac{c_{\mathrm{prev}}C_Bm\gamma_Z^2}{4 c_{\mathrm{cross}} q} \lVert \widehat{\mathcal{W}}^{h} - \mathcal{W}^{*,h} \rVert_F^2
    \end{align*}
    for any positive constant $c_{\mathrm{cross}}$.
    Specifying $c_{\mathrm{cross}} = 1/8$, we conclude
    \begin{equation*}
      T_{\mathrm{cross}} \leq \frac{1}{8} T_{\mathrm{mean}} + \frac{2 c_{\mathrm{prev}}C_Bm\gamma_Z^2}{4 q} \lVert \widehat{\mathcal{W}}^{h} - \mathcal{W}^{*,h} \rVert_F^2 .
    \end{equation*}

    For $T_{\mathrm{op}}$, applying Lemma~\ref{lemma:probability},
    \begin{align*}
      T_{\mathrm{op}} &\leq \sum_{r=1}^d 2m \lVert \frac{1}{m} \sum_{k=1}^m (A_k - Z_r(x_k)Z_r(x_k)^{\tp}) \widehat{\bm{W}}_r^h B(x_k)B(x_k)^{\tp} \rVert_2 \lVert \widehat{\bm{W}}_r^h - \bm{W}_r^{*,h} \rVert_* \\
      &\leq \sum_{r=1}^d m q^{1/2} \left\{ c_{\mathrm{prob}} c_W \gamma_1 \left( \frac{\sigma^2 q^2 n \log q}{m} \right)^{1/2} \right\} \lVert \widehat{\bm{W}}_r^h - \bm{W}_r^{*,h} \rVert_F \\
      &\leq \frac{c_{\mathrm{prob}}^2 c_W^2 d \sigma^2 q^4 n \log q}{4 c_{\mathrm{op}}} + \frac{c_{\mathrm{op}} m \gamma_1^2}{q} \lVert \widehat{\mathcal{W}}^h - \mathcal{W}^{*,h} \rVert_F^2,
    \end{align*}
    where $\lVert \cdot \rVert_*$ denotes the matrix nuclear norm.
    Specifying $c_{\mathrm{op}}=c_B/8$, we conclude
    \begin{equation*}
      T_{\mathrm{op}} \leq \frac{2 c_{\mathrm{prob}}^2 c_W^2 d \sigma^2 q^4 n \log q}{c_B} + \frac{c_B m \gamma_1^2}{8 q} \lVert \widehat{\mathcal{W}}^h - \mathcal{W}^{*,h} \rVert_F^2.
    \end{equation*}

    For $T_{\mathrm{approx}}$, we require one auxiliary result:
    \begin{equation} \label{approx_slice2}
      \sum_{k=1}^m    \|  \left\{ \sum_{r=1}^d \bm{W}_r^{*,h} B(x_k)B(x_k)^{\tp}(\bm{W}_r^{*,h})^{\tp} \right\} - Z(x_k)Z(x_k)^{\tp} \|_F^2 \leq (4c_{\mathrm{approx},2} + 6\kappa) m \gamma_Z^2 \neweps_{\mathrm{approx},2}^{(h)}.
    \end{equation}
    Then
    \begin{align*}
      T_{\mathrm{approx}} &= 2 \bigg\vert \sum_{k=1}^m \bigg\langle \left\{ \sum_{r=1}^d \widehat{\bm{W}}_r^{h}B(x_k)B(x_k)^{\tp} (\widehat{\bm{W}}_r^h)^{\tp} \right\} - Z(x_k)Z(x_k)^{\tp}, \\
      &\tabby\tabby \left\{ \sum_{r'=1}^d \bm{W}_{r'}^{*,h} B(x_k)B(x_k)^{\tp}(\bm{W}_{r'}^{*,h})^{\tp} \right\} - Z(x_k)Z(x_k)^{\tp} \bigg\rangle \bigg\vert \\
      &\leq T_{\mathrm{mean}}^{1/2} \left\{ \sum_{k=1}^m \lVert \left\{ \sum_{r'=1}^d \bm{W}_{r'}^{*,h} B(x_k)B(x_k)^{\tp}(\bm{W}_{r'}^{*,h})^{\tp} \right\} - Z(x_k)Z(x_k)^{\tp} \rVert_F^2 \right\}^{1/2} \\
      &\leq \{(4c_{\mathrm{approx},2} + 6\kappa) m\}^{1/2} \gamma_Z T_{\mathrm{mean}}^{1/2} (\neweps_{\mathrm{approx},2}^{(h)})^{1/2} \\
      &\leq c'_{\mathrm{approx}} T_{\mathrm{mean}} + \frac{(4c_{\mathrm{approx},2} + 6\kappa) m \gamma_Z^2}{4c'_{\mathrm{approx}}} \neweps_{\mathrm{approx},2}^{(h)},
    \end{align*}
    for any positive constant $c'_{\mathrm{approx}}$. The second inequality uses \eqref{approx_slice2}.

    Specifying $c'_{\mathrm{approx}} = 1/4$, we conclude
    \begin{equation*}
      T_{\mathrm{approx}} \leq \frac{1}{4} T_{\mathrm{mean}} + (4c_{\mathrm{approx},2} + 6\kappa) m \gamma_Z^2 \neweps_{\mathrm{approx},2}^{(h)}.
    \end{equation*}

    For $T_{\mathrm{mean}}$, define an $m$-tuple of orthogonal transformations, $\mathcal{Q}^{\mathrm{Proc},h}$ such that for each $k=1,\ldots,m$,
    \begin{equation*}
      Q_k^{\mathrm{Proc},h} = \operatorname{argmin}_{Q \in \mathcal{O}_d} \lVert \widehat{\mathcal{W}}^h \times_2 B(x_k) - Z(x_k)Q \rVert_F^2
    \end{equation*}
    which has a closed form expression \citep{cape19twotoinfinity}.
    Define an operator
    \begin{equation*}
      \mathcal{P}_{\bm{B}}(Z) = Z \times_2 (\bm{B}^{\tp}\bm{B})^{-1} \bm{B}^{\tp},
    \end{equation*}
    and suppose that for an arbitrary $n \times m \times d$ tensor $\mathcal{Z}$ and an $m$-tuple of orthogonal transformations $\mathcal{Q}$, $\mathcal{Z}\mathcal{Q}$ gives the $n \times m \times d$ tensor which right multiplies each component of $\mathcal{Q}$ by the corresponding $n \times d$ slice of $\mathcal{Z}$.
    Then,
    \begin{align*}
      \lVert \widehat{\mathcal{W}}^h - \mathcal{W}^{*,h} \rVert_F^2 &\leq \lVert \widehat{\mathcal{W}}^h - \mathcal{P}_{\bm{B}}(\mathcal{Z}\mathcal{Q}^{\mathrm{Proc},h})\rVert_F^2 \\
      &= \lVert \left\{ \widehat{\mathcal{W}}^h - \mathcal{P}_{\bm{B}}(\mathcal{Z}\mathcal{Q}^{\mathrm{Proc},h}) \right\} \times_2 \left\{ (\bm{B}^{\tp}\bm{B})^{-1} \bm{B}^{\tp} \right\} \bm{B} \rVert_F^2 \\
      &\leq \frac{q}{c_B m} \lVert \widehat{\mathcal{W}}^h \times_2 \bm{B} - \mathcal{P}_{\bm{B}}(\mathcal{Z}\mathcal{Q}^{\mathrm{Proc},h}) \times_2 \bm{B} \rVert_F^2 \\
      &\leq \frac{q}{c_B m} \sum_{k=1}^m \lVert \widehat{\mathcal{W}}^h \bar{\times}_2 B(x_k) - Z(x_k)Q_k^{\mathrm{Proc},h} \rVert_2^2 \\
      &\leq \frac{q}{c_B m 2 (2^{1/2} - 1)\gamma_Z^2} T_{\mathrm{mean}}. \numberthis \label{tmean_lb}
    \end{align*}
    The first inequality follows from the choice of $\mathcal{W}^{*,h}$, the second inequality follows from Assumption~\ref{assump:bs}, the third inequality follows from a projection argument, and the final inequality follows from \cite{ma20universal}, Lemma 28.

    This display implies that
    \begin{equation*}
      \frac{1}{2} T_{\mathrm{mean}} \geq \frac{c_B (\sqrt{2} - 1) m \gamma_Z^2}{q} \lVert \widehat{\mathcal{W}}^h - \mathcal{W}^{*,h} \rVert_F^2.
    \end{equation*}

    Substituting all these inequalities into \eqref{summary_terms2}, we have that
    \begin{align} \label{summary_terms22}
      &\lVert \widehat{\mathcal{W}}^{h+1} - \mathcal{W}^{*,h+1} \rVert_F^2 \nonumber \\
      \leq &\lVert \widehat{\mathcal{W}}^{h} - \mathcal{W}^{*,h} \rVert_F^2 + \eta_h^2T_{\mathrm{quad.mean}}^2 + \eta_h^2T_{\mathrm{quad.op}}^2 + \eta_hT_{\mathrm{op}} \nonumber \\
      &\tabby + \eta_hT_{\mathrm{approx}} + \eta_hT_{\mathrm{cross}} - \eta_hT_{\mathrm{mean}} \nonumber \\
      \leq &\left(1 + \frac{\eta_h c_B m \gamma_Z^2}{8 q} + \frac{2 \eta_h c_{\mathrm{prev}} C_B m \gamma_Z^2}{q} - \frac{\eta_h c_B (2^{1/2} - 1) m \gamma_Z^2}{q} \right)  \lVert \widehat{\mathcal{W}}^{h} - \mathcal{W}^{*,h} \rVert^2_F \nonumber \\
      &\tabby + \left( \frac{\eta_h}{8} + \frac{\eta_h}{4} + 2 \eta_h^2 c_W d \frac{m \gamma_Z^2}{q} - \frac{1}{2}\eta_h \right) T_{\mathrm{mean}} \nonumber \\
      &\tabby + \frac{2 \eta_h c_{\mathrm{prob}}^2 c_W^2 d \sigma^2 q^4 n \log q }{c_B} + 2 \eta_h^2 c_{\mathrm{prob}}^2 c_W^2 d \sigma^2 q^3
      m \gamma_Z^2 n \log q \nonumber \\
      &\tabby + \eta_h (4c_{\mathrm{approx},2} + 6\kappa) m \gamma_Z^2 \neweps_{\mathrm{approx},2}^{(h)}.
    \end{align}
    We consider the coefficients of the first two terms of \eqref{summary_terms22} separately.
    For the second term, expanding $\eta_h \equiv \eta'q/m \gamma_Z^2$, for a constant
    \begin{equation*}
      \eta' = \left( 32 d \left\{\left( \frac{c_B}{16 C_B} \right)^{1/2} + c_{\mathrm{approx},\infty}^{1/2} + \kappa^{1/2} \right\} \right)^{-1},
    \end{equation*}
    we have coefficient
    \begin{equation*}
      \left( \frac{\eta' q}{\gamma_Z^2 m} \right) \left( \frac{1}{8} + {1}{4} + 2 \eta' c_W d - \frac{1}{2} \right) < 0.
    \end{equation*}
    For the first term, again expanding the definition of $\eta_h$, we have coefficient
    \begin{equation*}
      \left(1 - \eta' \left[ c_B (2^{1/2} - 1) - \frac{c_B}{8} - 2 c_{\mathrm{prev}}C_B \right] \right).
    \end{equation*}
    Since $c_{\mathrm{prev}} \leq c_B/(16C_B)$ by assumption, we lower bound the quantity inside the square brackets by $\rho$.
    Then, expanding the definition of $\eta_h$ in the final three terms of \eqref{summary_terms22}, and choosing a constant $c_{\mathrm{step}}$ as in \eqref{onestep2_constants}, we complete the proof.
  \end{proof}

    To complete the proof of Theorem~\ref{thm:main_concur}, we begin by showing that \eqref{onestep2_prev} and $\mathcal{E}$ (defined in Lemma~\ref{lemma:probability}) hold with high probability for all $h \geq 0$, and thus we can repeatedly apply Lemma~\ref{lemma:onestep2}.
    Suppose \eqref{onestep2_prev} holds for all $0 \leq h' \leq h$.
    Then by repeated application of Lemma~\ref{lemma:onestep2},
    \begin{align*}
      \lVert \widehat{\mathcal{W}}^{h+1} - \mathcal{W}^{*,h+1} \rVert^2_F &\leq \lVert \widehat{\mathcal{W}}_1^{0} - \mathcal{W}^{*,0} \rVert^2_F + \frac{c_{\mathrm{step}}}{\rho} \left(  \frac{\sigma^2 q^5 n \log q}{\gamma_Z^2 m} + q \cdot \max_{0 \leq h' \leq h} \neweps_{\mathrm{approx},2}^{(h')} \right) \\
      &\leq \left( c_{\mathrm{init}} + \frac{c_{\mathrm{step}}}{\rho c_{\mathrm{SNR}}^2} + \frac{c_{\mathrm{step}}c_{\mathrm{approx},2}}{\rho} \right) \gamma_Z^2,
    \end{align*}
    which implies that both the inductive step and the base case hold as long as $c_{\mathrm{init}}$ and $c_{\mathrm{approx}}$ are sufficiently small, and $c_{\mathrm{SNR}}$ is sufficiently large.

    In particular, we require that
    \begin{equation} \label{constant_condition}
      c_{\mathrm{init}} + \frac{c_{\mathrm{step}}}{\rho c_{\mathrm{SNR}}^2} + \frac{c_{\mathrm{step}}c_{\mathrm{approx},2}}{\rho} \leq \frac{c_B}{16 C_B}.
    \end{equation}
    Recall that $\rho = c_B/8$ and by assumption,
    \begin{equation*}
      c_{\mathrm{step}} \leq \max \left\{ c_{\mathrm{prob}}^2 \left\{\left( \frac{c_B}{16 C_B} \right)^{1/2} + c_{\mathrm{approx},\infty}^{1/2} + \kappa^{1/2} \right\} \left( \frac{2d}{c_B} + \frac{1}{16} \right), 4c_{\mathrm{approx},2} + 6\kappa \right\}.
    \end{equation*}
    Thus, it is easy to see that there exist positive constants $\nu_1$, $\nu_2$, $\nu_3$, and $\nu_4$ (written in terms of $c_B$, $C_B$, $\kappa$, $d$, and $c_{\mathrm{prob}}$ but free of $n$, $m$, $q$, $\sigma$ and $\gamma_Z$) such that
    \begin{equation} \label{constant_condition_simple}
      c_{\mathrm{SNR}} \geq \nu_1, \tabby c_{\mathrm{init}} \leq \nu_2, \tabby c_{\mathrm{approx},2} \leq \nu_3, \tabby c_{\mathrm{approx},\infty} \leq \nu_4
    \end{equation}
    is sufficient for \eqref{constant_condition} to hold.

    By Assumptions~\ref{assump:snr}-\ref{assump:init} and Lemma~\ref{lemma:probability}, choose $n$ sufficiently large so that
    $nq\log q \geq nq \log 5 + (n+q) \log 9$,
    and both $\mathcal{E}$ and \eqref{constant_condition_simple} hold with probability at least $1-\xi$.

    Thus, we can repeatedly apply Lemma~\ref{lemma:onestep2} to conclude that for any $h \geq 0$,
    \begin{align*}
      \lVert \widehat{\mathcal{W}}^{h} - \mathcal{W}^{*,h} \rVert^2_F \leq &\left( 1 - \eta' \rho \right)^h    \lVert \widehat{\mathcal{W}}^{0} - \mathcal{W}^{*,0} \rVert^2_F \\
      &\tabby + c_{\mathrm{step}} \eta' \frac{\sigma^2 q^5 n \log q}{\gamma_Z^2 m} \sum_{j=0}^{h} \left( 1 - \eta' \rho \right)^{j} + c_{\mathrm{step}} \eta' q \sum_{j=0}^{h} \neweps_{\mathrm{approx},2}^{(h-j)} \left( 1 - \eta' \rho \right)^{j}.
    \end{align*}
    The first two terms have limits in $h$, and for even $h$, the final term satisfies
    \begin{align*}
      \sum_{j=0}^{h} \neweps_{\mathrm{approx},2}^{(h-j)} \left( 1 - \eta' \rho \right)^{j} &\leq \frac{c_{\mathrm{approx},2} \gamma_Z^2}{q} (1 - \eta'\rho)^{h/2} \sum_{j=0}^{h/2} (1 - \eta'\rho)^j + \left( \sup_{j'  > h/2} \neweps_{\mathrm{approx},2}^{(j')} \right) \sum_{j=0}^{h/2} (1 - \eta'\rho)^j \\
      &\leq \frac{c_{\mathrm{approx},2} \gamma_Z^2}{q} (1 - \eta'\rho)^{h/2} \sum_{j=0}^{h/2} (1 - \eta'\rho)^j + \left( \sup_{j'  > h/2} \neweps_{\mathrm{approx},2}^{(j')} \right) \sum_{j=0}^{h/2} (1 - \eta'\rho)^j \\
      &\leq \frac{c_{\mathrm{approx},2} \gamma_Z^2}{\eta' \rho q} (1 - \eta'\rho)^{h/2} + \frac{1}{\eta' \rho} \sup_{j'  > h/2}\neweps_{\mathrm{approx},2}^{(j')}. 
    \end{align*}
    Thus for a constant $C_2' = c_{\mathrm{step}}/\rho + 2$,
    \begin{equation*}
      \limsup_{h \rightarrow \infty} \lVert \widehat{\mathcal{W}}^{h} - \mathcal{W}^{*,h} \rVert^2_F \leq C_2' \left(  \frac{\sigma^2 q^5 n \log q}{\gamma_Z^2 m} + q \limsup_{h \rightarrow \infty} \neweps_{\mathrm{approx},2}^{(h)} \right).
    \end{equation*}
    Then to complete the proof of \eqref{main_concur_concl},
    \begin{align*}
      &\limsup_{h \rightarrow \infty} \frac{1}{m} \sum_{k=1}^m \lVert \widehat{Z}^h(x_k) - Z(x_k)Q^{*,h}_k \rVert_F^2 \\
      = &\limsup_{h \rightarrow \infty} \frac{1}{m} \sum_{k=1}^m \lVert \widehat{\mathcal{W}}^h \bar{\times}_2 B(x_k) - \mathcal{W}^{*,h} \bar{\times}_2 B(x_k) + \mathcal{W}^{*,h} \bar{\times}_2 B(x_k) - Z(x_k) Q^{*,h}_k \rVert_F^2 \\
      \leq &\limsup_{h \rightarrow \infty} \left\{ \frac{2}{m} \lVert (\widehat{\mathcal{W}}^h - \mathcal{W}^{*,h}) \times_2 \bm{B} \rVert_F^2 + 2 \neweps_{\mathrm{approx},2}^{(h)} \right\} \\
      \leq &C_2 \left( \frac{\sigma^2 q^4 n \log q}{\gamma_Z^2 m} + \limsup_{h \rightarrow \infty} \neweps_{\mathrm{approx},2}^{(h)} \right)
    \end{align*}
    for a constant $C_2 = 2C_BC_2' + 2$, as desired.

    Recall that for arbitrary $\xi > 0$, this inequality of random variables holds with probability at least $1-\xi$ for any $n \geq N(\xi)$.
    By assumption,
    \begin{equation*}
      \limsup_{h \rightarrow \infty} \neweps_{\mathrm{approx},2}^{(h)} = O_p(\alpha_n),
    \end{equation*}
    and thus, by definition we conclude that
    \begin{equation*}
      \limsup_{h \rightarrow \infty} \frac{1}{m} \sum_{k=1}^m \lVert \widehat{Z}^h(x_k) - Z(x_k)Q^{*,h}_k \rVert_F^2 = O_p\left(  \frac{\sigma^2 q^4 n \log q}{\gamma_Z^2 m} + \alpha_n \right),
    \end{equation*}
    completing the proof.

\subsection{Proof sketch of Corollary~\ref{cor:main_concur1}}

The proof of Corollary~\ref{cor:main_concur1} proceeds almost identically to Theorem~\ref{thm:main_concur}, specifying $d=1$.

The difference is in the derivation of the upper bound on $T_{\mathrm{mean}}$ in Lemma~\ref{lemma:onestep2}, display \eqref{tmean_lb}, where we must prove that the optimal (Procrustes) alignment is given by the identity transformation for each $k=1,\ldots,m$.

Recall that
\begin{equation*}
  c_{\mathrm{approx},\infty} = \frac{1}{\gamma_Z^2} \max_{k=1,\ldots,m} \lVert \bm{W}_1^* B(x_k) - Z_1(x_k) \rVert_2^2.
\end{equation*}

Following \cite{cape19twotoinfinity}, the optimal one-dimensional alignment is given by
\begin{equation*}
  \operatorname{sign}\{Z_1(x_k)^{\tp}\widehat{\bm{W}}_1^h B(x_k) \}.
\end{equation*}
Thus by Lemma~\ref{lemma:proc_alignment}, it is sufficient to show that
\begin{equation} \label{proc_sufficient}
  2 \lVert \widehat{\bm{W}}_1^h B(x_k) - Z_1(x_k) \rVert_2 \leq \lVert Z_1(x_k) \rVert_2.
\end{equation}
By definition, $\lVert Z_1(x_k) \rVert_2 \geq \gamma_Z$, and by assumption,
\begin{equation*}
  2 \lVert \widehat{\bm{W}}_1^h B(x_k) - Z_1(x_k) \rVert_2 \leq 2 \gamma_Z( c_{\mathrm{prev}}^{1/2} + c_{\mathrm{approx},\infty}^{1/2}).
\end{equation*}
Recall that by an assumption of Lemma~\ref{lemma:onestep2}, $c_{\mathrm{prev}} \leq c_B/16C_B \leq 1/16$.
Thus a sufficient condition for \eqref{proc_sufficient} is $c_{\mathrm{approx},\infty} \leq 1/16$.

We can add this condition on $c_{\mathrm{approx},\infty}$ to the statement of Lemma~\ref{lemma:onestep2}, and the remainder of the proof still goes through with possibly stronger condition
\begin{equation*}
  c_{\mathrm{approx},\infty} \leq \nu_4'
\end{equation*}
in \eqref{constant_condition_simple}.
Then since \eqref{constant_condition_simple} still holds asymptotically under Assumption~\ref{assump:approxes1}, we complete the proof of Corollary~\ref{cor:main_concur1}.

\subsection{Proof of Proposition~\ref{prop:init_concur}}

In this section we prove the main theoretical result stated in Section~\ref{subsec:thm_init} regarding the estimation performance of our local averaged initializer (see Section~\ref{subsec:initializers_new}).

Recall that we assumed partition sets of equal integer size for local averaging.
For $\ell=1,\ldots,L$, let $k^*(\ell)$ denote the median index in $T_{\ell}$.

We first state generic perturbation bound on the ASE error up to unknown rotation.
\begin{lemma} \label{lemma:ase_lemma}
  Let $P=VV^{\tp}$, where $V \in \mathbb{R}^{n \times d}$, and $E \in \mathbb{R}^{n \times n}$ a symmetric matrix.
  Then
  \begin{equation*}
    \min_{Q \in \mathcal{O}_d}\lVert \operatorname{ASE}_d(P+E) - VQ \rVert_F^2 \leq \frac{10 d \lVert E \rVert_2^2}{\lambda_d(P)}.
  \end{equation*}
\end{lemma}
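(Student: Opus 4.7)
The plan is to reduce the lemma to two standard ingredients: a perturbation bound on the rank-$d$ truncation of $P+E$, and a Procrustes-type conversion from closeness of outer products to closeness of the matrices themselves up to orthogonal transformation.

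First, I would note that $\operatorname{ASE}_d(P+E)$ depends on $V$ only through $P = VV^{\tp}$, so replacing $V$ by $X := \operatorname{ASE}_d(P)$ does not change the minimum over $Q \in \mathcal{O}_d$ (since $X = VO$ for some $O \in \mathcal{O}_d$, and we can reparameterize $Q \mapsto OQ$). Writing $\widehat{X} = \operatorname{ASE}_d(P+E)$, this reduces the target to bounding $\min_{Q \in \mathcal{O}_d} \lVert \widehat{X} - XQ \rVert_F^2$, where both $XX^{\tp} = P$ and $\widehat{X}\widehat{X}^{\tp}$ is the best rank-$d$ approximation (in spectral and Frobenius norm) of $P+E$.

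Second, I would derive the key operator-norm bound
\begin{equation*}
  \lVert \widehat{X}\widehat{X}^{\tp} - P \rVert_2 \leq 2\lVert E \rVert_2.
\end{equation*}
Since $P$ itself has rank at most $d$, the Eckart--Young property of $\widehat{X}\widehat{X}^{\tp}$ as the best rank-$d$ approximation of $P+E$ gives $\lVert \widehat{X}\widehat{X}^{\tp} - (P+E) \rVert_2 \leq \lVert P - (P+E) \rVert_2 = \lVert E \rVert_2$, and the triangle inequality finishes the step. Because $\widehat{X}\widehat{X}^{\tp} - P$ has rank at most $2d$, this upgrades to
\begin{equation*}
  \lVert \widehat{X}\widehat{X}^{\tp} - P \rVert_F^2 \leq 2d \cdot (2\lVert E \rVert_2)^2 = 8d\,\lVert E \rVert_2^2.
\end{equation*}

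Third, I would invoke a standard Procrustes inequality (as in Tu et al. or the analogous \cite{ma20universal} Lemma already used at display \eqref{tmean_lb}), which states that for any two $n \times d$ matrices $M_1, M_2$,
\begin{equation*}
  \min_{Q \in \mathcal{O}_d} \lVert M_1 - M_2 Q \rVert_F^2 \leq \frac{1}{2(\sqrt{2}-1)\,\lambda_d(M_2 M_2^{\tp})} \lVert M_1 M_1^{\tp} - M_2 M_2^{\tp} \rVert_F^2.
\end{equation*}
Applied with $M_1 = \widehat{X}$ and $M_2 = X$ (so $\lambda_d(M_2 M_2^{\tp}) = \lambda_d(P)$), this combines with the previous display to yield
\begin{equation*}
  \min_{Q \in \mathcal{O}_d} \lVert \widehat{X} - XQ \rVert_F^2 \leq \frac{8d\,\lVert E \rVert_2^2}{2(\sqrt{2}-1)\,\lambda_d(P)} = \frac{4(\sqrt{2}+1)d\,\lVert E \rVert_2^2}{\lambda_d(P)} \leq \frac{10 d \lVert E \rVert_2^2}{\lambda_d(P)}.
\end{equation*}

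The main obstacle is really just locating and citing the Procrustes inequality with a constant tight enough to produce $10$ on the right-hand side; everything else is a one-line application of Eckart--Young, Weyl, and a rank-$2d$ norm comparison. No probabilistic argument is needed here since $E$ is a deterministic symmetric matrix.
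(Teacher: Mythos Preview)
Your proposal is correct and follows essentially the same route as the paper: apply the Tu et al.\ Procrustes inequality (the paper cites it as Lemma~5.4 of \cite{tu16low}) to reduce to bounding $\lVert [P+E]_{(d)} - P \rVert_F^2$, then use the rank-$2d$ norm comparison together with Eckart--Young/Weyl to get $\lVert [P+E]_{(d)} - P \rVert_2 \leq 2\lVert E \rVert_2$. The only cosmetic difference is that the paper phrases the truncation bound via $\lambda_{d+1}(P+E) \leq \lVert E \rVert_2$ (Weyl) rather than via the best-rank-$d$ property directly, and it does not bother with your preliminary replacement of $V$ by $\operatorname{ASE}_d(P)$ since the Procrustes minimum over $Q$ already absorbs that freedom.
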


\begin{proof}
  By Lemma 5.4 in \cite{tu16low}, we have
  \begin{equation*}
    \min_{Q \in \mathcal{O}_d}\lVert \operatorname{ASE}_d(P+E) - VQ \rVert_F^2  \leq \frac{1}{2(\sqrt{2}-1)\lambda_d(P)} \lVert [P+E]_{(d)} - P \rVert_F^2
  \end{equation*}
  where $[\cdot]_{(d)}$ denotes rank truncation.
  \begin{align*}
    \lVert [P+E]_{(d)} - P \rVert_F &\leq \sqrt{2d} \lVert [P+E]_{(d)} - P \rVert_2 \\
    &\leq \sqrt{2d} \{ \lVert [P+E]_{(d)} - (P+E) \rVert_2 + \lVert E \rVert_2 \} \\
    &\leq \sqrt{2d} \{ \lambda_{d+1}(P+E) + \lVert E \rVert_2 \} \\
    &\leq 2\sqrt{2d} \lVert E \rVert_2
  \end{align*}
  where the first step uses the relationship between the operator and Frobenius norms for low rank matrices, and the final step uses Weyl's inequality.
  Thus,
  \begin{equation*}
    \min_{Q \in \mathcal{O}_d}\lVert \operatorname{ASE}_d(P+E) - VQ \rVert_F^2  \leq \frac{8d}{2(\sqrt{2}-1)\lambda_d(P)} \lVert E \rVert_2^2
  \end{equation*}
  as desired, since $8/\{2(\sqrt{2}-1)\} \leq 10$.
\end{proof}

The following lemma provides a high probability bound on the errors of the local averages used for initialization.
\begin{lemma} \label{lemma:init_prob}
  Under the setting of Proposition~\ref{prop:init_concur}, with probability at least $1 - 4L\exp(-n)$,
  \begin{equation} \label{init_prob}
    \max_{1 \leq \ell \leq L} \left\lVert \frac{1}{M} \sum_{k \in \tilde{T}_{\ell}} (A_k - \mathbb{E}A_k) \right\rVert_2 \leq c_{\mathrm{prob}} \sigma \sqrt{\frac{n}{M}}
  \end{equation}
\end{lemma}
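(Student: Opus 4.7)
The plan is to bound the operator norm for each fixed $\ell$ by a standard $\varepsilon$-net argument, and then union bound over $\ell=1,\ldots,L$. Fix $\ell$ and write $M_\ell = \frac{1}{M}\sum_{k\in \tilde T_\ell}(A_k - \mathbb{E} A_k)$. By independence of edges across snapshots and of distinct entries $(i,j)$ with $i\le j$, each upper-triangular (including diagonal) entry of $M_\ell$ is a mean of $M$ independent centered sub-Gaussian random variables with parameter at most $\sigma$, hence itself sub-Gaussian with parameter at most $\sigma/\sqrt{M}$. The matrix $M_\ell$ is symmetric, so its operator norm satisfies
\begin{equation*}
  \lVert M_\ell \rVert_2 = \sup_{x\in \mathcal{S}^{n-1}} \lvert x^\tp M_\ell x \rvert .
\end{equation*}

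The next step is to discretize the sphere. Let $\mathcal{N}\subset \mathcal{S}^{n-1}$ be a $(1/4)$-net of cardinality at most $9^n$ (cf.\ Theorem 4.4.5 of \citet{vershynin18highdimensional}), so that $\lVert M_\ell \rVert_2 \le 2\max_{x\in \mathcal{N}} \lvert x^\tp M_\ell x\rvert$. For each fixed $x\in \mathcal{N}$, $x^\tp M_\ell x$ is a quadratic form in independent sub-Gaussian variables and by the standard Hanson--Wright--type bound (or, more simply here, noting $x^\tp M_\ell x = \sum_{i\le j} c_{ij}^{(x)} [M_\ell]_{ij}$ with $\sum c_{ij}^{(x)\,2}\le 2$), it is sub-Gaussian with parameter $C\sigma/\sqrt{M}$, giving
\begin{equation*}
  \prob\left\{ \lvert x^\tp M_\ell x \rvert \ge t \right\} \;\le\; 2\exp\!\left( -\frac{c\,M t^2}{\sigma^2}\right).
\end{equation*}
Taking a union bound over $\mathcal{N}$ and setting $t \asymp \sigma\sqrt{n/M}$ (so that the exponent dominates the $n\log 9$ entropy term), one obtains a bound $\lVert M_\ell\rVert_2 \le c_{\mathrm{prob}} \sigma\sqrt{n/M}$ with probability at least $1 - 4\exp(-n)$ for a suitable absolute constant $c_{\mathrm{prob}}$.

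Finally, a union bound over $\ell = 1,\ldots,L$ gives the stated probability $1 - 4L\exp(-n)$. The main subtlety, rather than an obstacle, is keeping the constant $c_{\mathrm{prob}}$ consistent with the one appearing in Lemma~\ref{lemma:probability}, so that the initialization bound in Proposition~\ref{prop:init_concur} can reuse the same symbol; this is a matter of choosing $c_{\mathrm{prob}}$ large enough to absorb both the $\log 9$ factor from the net and the sub-Gaussian constant in the quadratic-form tail bound. Nothing here uses the latent process structure beyond the sub-Gaussian edge assumption and independence across $k$.
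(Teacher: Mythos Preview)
Your argument is correct and essentially the same as the paper's: both reduce to the standard operator-norm bound for a symmetric matrix with independent sub-Gaussian entries of parameter $\sigma/\sqrt{M}$, followed by a union bound over $\ell$. The only difference is cosmetic: the paper invokes Corollary~4.4.8 of \cite{vershynin18highdimensional} as a black box for each fixed $\ell$, whereas you unpack the underlying $\varepsilon$-net argument explicitly.
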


\begin{proof}
  For $\ell=1,\ldots,L$, let
  \begin{equation*}
    \tilde{E}_{\ell} = \frac{1}{M} \sum_{k \in \tilde{T}_{\ell}} (A_k - \mathbb{E}A_k).
  \end{equation*}
  these matrices are mutually independent over $\ell$, and have independent subgaussian edges with parameter at most $\sigma / \sqrt{M}$.

  Fix $\ell$. By \cite{vershynin18highdimensional}, Corollary 4.4.8, we have with probability at least $1 - 4 \exp(-n)$,
  \begin{equation*}
    \lVert \tilde{E}_{\ell} \rVert_2 \leq c_{\mathrm{prob}} \sigma \sqrt{\frac{n}{M}}
  \end{equation*}
  for some universal constant $c_{\mathrm{prob}}$.

  Then by a union bound, we have that \eqref{init_prob} holds with probability at least $1 - 4L \exp(-n)$, as desired.
\end{proof}

\begin{proof}[Proof of Proposition~\ref{prop:init_concur}]

  First, we want to bound the error of $\widehat{Z}^0_{\ell}$ as an estimator of $Z(x_{k^*(\ell)})$ up to an unknown rotation.
  By the Lemma~\ref{lemma:ase_lemma}, this requires control of the operator norm error
  \begin{equation} \label{local_opnorm}
    \left\lVert \frac{1}{M} \sum_{k \in \tilde{T}_{\ell}} A_k - \Theta(x_{k^*(\ell)}) \right\rVert_2.
  \end{equation}
  for $\ell=1,\ldots,L$.

  The random part of the operator norm error can be controlled with high probability by Lemma~\ref{lemma:init_prob}.
  By triangle inequality and the relationship between norms, the deterministic part is bounded above by
  \begin{equation*}
    \leq \max_{k' \in T_{\ell}} \lVert \Theta(x_{k'}) - \Theta(x_{k^*(\ell)}) \rVert_2.
  \end{equation*}
  \begin{align*}
    &\lVert \Theta(x_{k'}) - \Theta(x_{k^*(\ell)}) \rVert_2 \\
    = &\lVert Z(x_{k'})Z(x_{k'})^{\tp} - Z(x_{k^*(\ell)})Z(x_{k^*(\ell)})^{\tp}\rVert_2 \\
    = &\lVert Z(x_{k'})Z(x_{k'})^{\tp} - Z(x_{k'})Z(x_{k^*(\ell)})^{\tp} + Z(x_{k'})Z(x_{k^*(\ell)})^{\tp} - Z(x_{k^*(\ell)})Z(x_{k^*(\ell)})^{\tp}\rVert_2 \\
    \leq &2 \gamma_Z \left( \lVert Z(x_{k'}) - Z(x_k^*(\ell)) \rVert_F^2 \right)^{1/2} \\
    \leq &2 \gamma_Z \left( \sum_{i,r} \left\{ z_{i,r}(x_{k'}) - z_{i,r}(x_k^*(\ell))\right\}^2 \right)^{1/2} \\
    \leq &\frac{2 K_1 \gamma_Z \sqrt{nd}}{L}
  \end{align*}
  uniformly over $k' \in T_{\ell}$.

  Combining the random and deterministic parts, we get \eqref{local_opnorm} is bounded above by
  \begin{equation*}
    \frac{2 K_1 \gamma_Z \sqrt{nd}}{L} + \frac{c_{\mathrm{prob}} \sigma \sqrt{nL}}{\sqrt{m}}
  \end{equation*}
  with high probability over all $\ell = 1,\ldots,L$.

  Applying Lemma~\ref{lemma:ase_lemma}, we get that for each $\ell=1,\ldots,L$, there exists $\tilde{Q}_{\ell}$ such that
  \begin{equation*}
    \lVert \widehat{Z}_{\ell}^0 - Z(x_{k^*(\ell)})\tilde{Q}_{\ell} \rVert_F 
    \leq \frac{2 K_1 d \sqrt{10 n}}{L} + \frac{c_{\mathrm{prob}} \sigma \sqrt{10 d L n}}{\gamma_Z \sqrt{m}} .
  \end{equation*}

  %
  %
  %

  Now fix an arbitrary $k' \in T_{\ell}$ and note that by the Lipschitz condition,
  \begin{equation*}
    \lVert  Z(x_{k'})\tilde{Q}_{\ell} - Z(x_{k^*(\ell)})\tilde{Q}_{\ell} \rVert_F \leq \frac{K_1 \sqrt{nd}}{L},
  \end{equation*}
  and thus
  \begin{equation*}
    \max_{k' \in \tilde{T}_{\ell}} \lVert \hat{Z}^0_{\ell} - Z(x_{k'})\tilde{Q}_{\ell} \rVert_F \leq \frac{(2\sqrt{10d} + 1)K_1 \sqrt{d n}}{L} + \frac{c_{\mathrm{prob}} \sigma \sqrt{10 d L n}}{\gamma_Z \sqrt{m}}.
  \end{equation*}

  Recall that for arbitrary $k \in \{1,\ldots,m\}$, we defined an initializer for the unknown process snapshots as
  \begin{equation*}
    \widehat{Z}^0(x_k) = \{\widehat{Z}^0_{\ell} : k \in T_{\ell}\},
  \end{equation*}
  which suggests a natural set of initial coordinates
  \begin{equation*}
    \widehat{\mathcal{W}}^0 = \widehat{\mathcal{Z}}^0 \times_2 (\bm{B}^{\tp}\bm{B})^{-1} \bm{B}^{\tp} \in \mathbb{R}^{n \times q \times d}.
  \end{equation*}
  Analogously define an initial target process which is aligned to the initializer,
  \begin{equation*}
    \widetilde{\mathcal{Z}}^0 = \{ Z(x_k)\tilde{Q}_{\ell} : k \in T_{\ell}\}_{k=1}^m.
  \end{equation*}

  \begin{align*}
    \lVert \widehat{\mathcal{W}}^0 - \mathcal{W}^{*,0} \rVert_F^2 &\leq \lVert (\widehat{\mathcal{Z}}^0 - \widetilde{\mathcal{Z}}^{0}) \times_2 (\bm{B}^{\tp}\bm{B})^{-1} \bm{B}^{\tp} \rVert_F^2 \\
    &\leq \frac{C_B q}{c_B^2 m} \sum_{\ell=1}^q \sum_{k' \in \tilde{T}_{\ell}} \lVert \hat{Z}^0_{\ell} - Z(x_{k'})\tilde{Q}_{\ell} \rVert_F^2 \\
    &\leq \left\{ \frac{C_B q}{c_B^2} \left( \frac{(2\sqrt{10d} + 1)K_1 \sqrt{d n}}{\gamma_Z L} + \frac{c_{\mathrm{prob}} \sigma \sqrt{10 d L n}}{\gamma_Z^2 \sqrt{m}} \right)^2 \right\} \gamma_Z^2,
  \end{align*}
  where the first inequality follows by definition of the target coordinates, since it minimizes the error over rotations of the true processes.

  This nonasymptotic bound which holds with probability at least $1 - 4
  L \exp(-n)$, as desired.
\end{proof}

\ifjasa
\section{Initialization for estimation algorithms} \label{subsec:initializers}

To initialize gradient descent, we propose a family of kernel smoothed embedding algorithms.
The basic idea is to get local estimates of $Z(x)$ at a user-specified grid of indices.
Then each component function is projected into the $\operatorname{span}(B)$ to recover initial estimates for the basis coordinates. As in Section~\ref{subsec:basis_approx}, we will work with prespecified $d$, and $q$-dimensional basis $B$.

We start with the formal definition of adjacency spectral embedding (ASE) from \citep{tang13universally}. For an $n \times n$ symmetric matrix $P$ with eigendecomposition $Y \Lambda Y^{\tp}$, define
\ifjasa
  $\operatorname{ASE}_d(P) = Y_d \Lambda_d^{1/2}$,
\else
\begin{equation*}
  \operatorname{ASE}_d(P) = Y_d \Lambda_d^{1/2} \in \mathbb{R}^{n \times d},
\end{equation*}
\fi
where $Y_d \in \mathbb{R}^{n \times d}$ and $\Lambda_d \in \mathbb{R}^{d \times d}$ correspond to the $d$ largest eigenvectors and eigenvalues of $P$. If the diagonal entries of $\Lambda$ are distinct, then the ASE is uniquely defined up to sign flips of each column.

Our initialization algorithm takes the network snapshots $\{A_k\}_{k=1}^m$ and snapshot indices $\{x_k\}_{k=1}^m$ as input, and further depends on a user specified kernel $K(y)$, and grid of indices $\{\tilde{x}_1,\ldots,\tilde{x}_{q'}\} \subset \mathcal{X}$ for some $q'$ satisfying $q \leq q' \leq m$.
We assume that $K(y)$ is nonnegative and normalized with respect to the snapshot indices, so that, for any $y \in \mathcal{X}$,
\ifjasa
  $\sum_{k=1}^m K(y - x_k) = 1$.
\else
\begin{equation*}
  \sum_{k=1}^m K(y - x_k) = 1 \, .
\end{equation*}
\fi
Moreover, define
\ifjasa
  a $q \times q'$ matrix $\bm{B}_{\mathrm{init}} = [B(\tilde{x}_1) \cdots B(\tilde{x}_{q'}) ]^{\tp}$,
\else
\begin{equation*}
  \bm{B}_{\mathrm{init}} = \begin{pmatrix} B(\tilde{x}_1) & \cdots & B(\tilde{x}_{q'}) \end{pmatrix}^{\tp},
\end{equation*}
\fi
and assume that the indices in the grid are well spaced such that $\bm{B}_{\mathrm{init}}^{\tp}\bm{B}_{\mathrm{init}}$ is invertible.
With this notation in hand, we may state our initialization algorithm.
\begin{algorithm} \label{alg:init_alg}
  \ifjasa
  \onehalfspacing
  \else
  \doublespacing
  \fi
  \caption{Kernel smoothed embedding initialization algorithm.}
  \begin{tabbing}
  \enspace For $\ell=1$ to $\ell=q'$ \\
  \qquad Set $\widehat{Z}_{\mathrm{init}}(\tilde{x}_{\ell}) = \operatorname{ASE}_d \left( \sum_{k=1}^m K(\tilde{x}_{\ell} - x_k) A_k \right)$ \\
  \qquad If $\ell>1$ then \\
  \qquad\qquad Set $O_{\ell} = \operatorname{argmin}_{O \in \mathcal{O}_d} \lVert \widehat{Z}_{\mathrm{init}}(\tilde{x}_{\ell})O - \widehat{Z}_{\mathrm{init}}(\tilde{x}_{\ell-1}) \rVert^2_F$ \\
  \qquad\qquad $\widehat{Z}_{\mathrm{init}}(\tilde{x}_{\ell}) \leftarrow \widehat{Z}_{\mathrm{init}}(\tilde{x}_{\ell})O_{\ell}$ \\
  \enspace Set $\mathcal{Z}_{\mathrm{init}} = \left\{\widehat{Z}_{\mathrm{init}}(\tilde{x}_{\ell}) \right\}_{\ell=1}^{q'} \in \mathbb{R}^{n \times q' \times d}$ \\
  \enspace Set $\widehat{\mathcal{W}}^0 = \mathcal{Z}_{\mathrm{init}} \times_2 (\bm{B}_{\mathrm{init}}^{\tp}\bm{B}_{\mathrm{init}})^{-1} \bm{B}_{\mathrm{init}}^{\tp}$ \\
  \enspace Output $\widehat{\mathcal{W}}^0$
\end{tabbing}
\end{algorithm}

Algorithm~\ref{alg:init_alg} includes an alignment step which orthogonally transforms the columns of each embedding to minimize the discrepancies between consecutive embeddings.
This helps to produce $\mathcal{Z}_{\mathrm{init}}$ which is close to a relatively smooth in $x$ representative of $\mathcal{T}(Z)$, increasing the potential to share information across network snapshots.
The optimal transformations each solve a so-called Procrustes problem, which has a closed form solution \citep{cape19twotoinfinity}.

In our implementation of Algorithm~\ref{alg:init_alg} in Sections~\ref{sec:simulations} and \ref{sec:real_data}, we choose $q'=q$ and use an equally spaced grid $\{\tilde{x}_1,\ldots,\tilde{x}_q\} \subset \mathcal{X}$, and a normalized rectangular kernel.
With equally spaced snapshot indices on $\mathcal{X}$, this implies that each local embedding is the $d$-dimensional ASE of the local average of the closest snapshots to a given grid point.
\else
\fi

\section{Derivation of NGCV criterion} \label{app:tuning}

In this appendix we will derive the specific form of the NGCV criterion \eqref{ngcv}.
First, recall the standard generalized cross validation (GCV) criterion for linear regression, which is derived based on leave one out cross validation \citep{golub79generalized}.
Suppose we have univariate responses $y_i$ and $p$-dimensional predictors $\bm{x}_i$ for $i=1,\ldots,n$, with $n \times p$ design matrix $\bm{X}$.
The least squares regression coefficients for this problem are $\hat{\beta} = (\bm{X}^{\tp}\bm{X})^{-1}\bm{X}^{\tp}\bm{y}$ and the {\em hat matrix} is given by $\bm{H} = \bm{X}(\bm{X}^{\tp}\bm{X})^{-1}\bm{X}^{\tp}$.
Then the generalized cross validation criterion \citep{golub79generalized} is
\begin{equation*}
  \sum_{i=1}^n \left( \frac{y_i - \bm{x}_i^{\tp}\hat{\beta}}{1 - [\bm{H}]_{ii}}\right)^2.
\end{equation*}
A common approximation is to replace each of the diagonal elements of $\bm{H}$ with their mean, resulting in the criterion
\begin{equation*}
  \left\{ 1 - \frac{\operatorname{tr}(\bm{H})}{n} \right\}^{-2} \left\{ \frac{1}{n} \sum_{i=1}^n (y_i - \bm{x}_i^{\tp}\hat{\beta})^2 \right\}.
\end{equation*}
In least squares, $\operatorname{tr}(\bm{H}) = p$, so we get the further simplification
\begin{equation*}
  \left( 1 - \frac{p}{n} \right)^{-2} \left\{ \frac{1}{n} \sum_{i=1}^n (y_i - \bm{x}_i^{\tp}\hat{\beta})^2 \right\}.
\end{equation*}
which can be calculated directly from the mean of squared residuals and the dimensions of the linear regression problem.

This simplified GCV criterion will be the building block of the NGCV criterion.
Recall from \eqref{snap_opt} the form of $\ell(\mathcal{W})$, which we will expand in the following way in terms of individual basis coordinates $\bm{w}_{i,r}$ for $i=1,\ldots,n$ and $r=1,\ldots,d$:
\begin{equation*}
  \ell(\mathcal{W}) = \sum_{i=1}^n \sum_{j=1}^n \sum_{k=1}^m \left\{ [A_k]_{ij} - \sum_{r=1}^d \bm{w}_{i,r}^{\tp} B(x_k) B(x_k)^{\tp} \bm{w}_{j,r} \right\}^2.
\end{equation*}
Note that we can view this as a summation of $2n$ least squares objectives, where the first $n$ fix $i$, the snapshot row, and the second $n$ fix $j$, the snapshot column, and sum over the other two indices
However, to avoid double counting matrix entries, we suppose that each $[A_k]_{ij}$ is assigned in a balanced way to either the row $i$ problem or the column $j$ problem.
As a result, each problem is a least squares problem with $nm/2$ observations.

In truth, the basis coordinates are shared between these problems and each problem involves the coordinates for all the nodes.
Without loss of generality, consider the row 1 problem.
For simplicity, suppose the problem only optimizes over the $qd$ total coordinates $\{\bm{w}_{1,r}\}_{r=1}^d$, treating the others as fixed. Thus, plugging in the fitted coordinates $\widehat{W}$ and summing over the $nm/2$ observations for this problem, we can calculate the simplified GCV criterion as
\begin{equation*}
  \left(1 - \frac{2qd}{nm} \right)^{-2} \left[ \frac{2}{nm} \sum_j \sum_k \left\{ [A_k]_{1j} - \sum_{r=1}^d \hat{\bm{w}}_{1,r}^{\tp} B(x_k) B(x_k)^{\tp} \hat{\bm{w}}_{j,r} \right\}^2 \right].
\end{equation*}
Each of the component squared errors in $\ell(\mathcal{W})$ appears in exactly one of the $2n$ problems.
Thus taking a mean of GCV's over these problems we get the overall criterion
\begin{equation*}
  \left(1 - \frac{2qd}{nm} \right)^{-2} \left\{ \frac{1}{mn^2} \ell(\widehat{W}) \right\},
\end{equation*}
which is equivalent to the NGCV criterion \eqref{ngcv}.

\section{Additional evaluation on synthetic networks} \label{app:one_rotation}

\subsection{Recovery up to a single unknown orthogonal transformation}

In this appendix we evaluate FASE on synthetic functional network data in terms of latent process recovery up to a single unknown orthogonal transformation.
As additional post-processing, we perform a sequential Procrustes alignment for a collection of snapshot process estimates, similar to the alignment procedure used in Algorithm~\ref{alg:init_alg}, and used as post-processing for the FASE estimate in Section~\ref{sec:real_data}.
We will use this procedure to unambiguously select representatives of the unidentified classes $\mathcal{T}(Z)$ for the ground truth latent processes, and $\mathcal{T}(\widehat{Z})$ for the FASE estimator.
Formally, suppose we have latent processes $\widetilde{Z}$, evaluated at indices $\{y_1,\ldots,y_{m'}\} \subset \mathcal{X}$ and stored in an $n \times m' \times d$ tensor.
Then the sequential Procrustes alignment procedure $\operatorname{Proc}_{m'}$ sets $\widetilde{O}_1 = I_d$, then for $k=2,\ldots,m'$, replaces the $k$th $n\ \times d$ slice $\widetilde{Z}(y_k)$ by $\widetilde{Z}(y_k)\widetilde{O}_k$, where
\begin{equation*}
  \widetilde{O}_k = \operatorname{argmin}_{O \in \mathcal{O}_d} \lVert \widetilde{Z}(y_k)O - \widetilde{Z}(y_{k-1})\widetilde{O}_{k-1} \rVert_F^2.
\end{equation*}
For simplicity, we will set $m'=m$ and compute $\operatorname{Proc}_{m}$ using the same snapshot times used to generate the data.
As the FASE estimator is well-defined for any $x \in \mathcal{X}$, we can evaluate this same sequential Procrustes alignment for arbitrarily fine grids.
We stress that this alignment procedure is completely internal to its argument $\widetilde{Z}$, and does not require oracle knowledge of any ground truth $Z$.

We will compare FASE to the same baseline approaches for the same scenarios and settings as in Section~\ref{sec:simulations},
\ifjasa
  but with $\sigma \in \{2,4,6,8\}$, and
\else
  but
\fi
with a new error metric given by
\begin{equation*}
  \mathrm{Err}^*_Z(\widehat{Z}) = \min_{Q_0 \in \mathcal{O}_d} \left\{
  \frac{1}{ndm} \sum_{k=1}^m \left\lVert \operatorname{Proc}_m\{\widehat{Z}\}(x_k) -  \operatorname{Proc}_m\{Z\}(x_k)Q_0 \right\rVert_F^2 \right\}^{1/2}.
\end{equation*}
In contrast to $\mathrm{Err}_Z$ defined in Section~\ref{sec:simulations}, this metric only requires optimization over a single orthogonal transformation-valued argument.
We also plot $\mathrm{Err}_Z$ for the oracle version of FASE (FASE (ORC, ErrZ)) as an achieveable lower bound for $\mathrm{Err}^*_Z$ for the same oracle FASE estimator.

Many of the overall conclusions from these plots are the same as in Section~\ref{sec:simulations}, although we will highlight some key differences in performance seen as a result of switching error metrics.

In Figures~\ref{fig:i_varym_onerot} and \ref{fig:i_varyn_onerot}, we report results for scenario (i).
In all settings, FASE shows a modest difference between $\mathrm{Err}^*_Z$ and the lower bound $\mathrm{Err}_Z$.
The difference is most pronounced for small $m$ and $n$, where the larger estimation error is magnified by the sequential Procrustes alignment, and for large $m$, as the domain of the optimization in the definition of $\mathrm{Err}_Z$ grows relative to the analogous domain in the definition of $\mathrm{Err}_Z^*$.

In Figure~\ref{fig:error_scatter}, we show the relationship between the two error metrics for scenario (i) and the setting with $\sigma=2$, $n=100$ and $m=20$.
We can see that in the majority of cases for FASE (left panel), both metrics perform well, with comparable error that exceeds the best results of ASE (right panel).
Moreover, in terms of $\mathrm{Err}_Z$, FASE outperforms ASE in about 90\% of iterations.
However, the corresponding $\mathrm{Err}_Z^*$ can be much larger, a phenomenon that occurs only once for ASE.
In these cases, the FASE estimate also tends to have slighly higher $\mathrm{Err}_Z$, suggesting it has attempted to ``smooth out'' a discontinuity in the sequence of aligning orthogonal transformations used to evaluate $\mathrm{Err}_Z$, and converged to a local minimum of the objective which cannot take full advantage of the parametric form of the true processes.
This phenomenon appears to occur infrequently for sufficiently large values of $m$ or $n$.

\begin{figure}
  \centering
  \includegraphics[width=\textwidth]{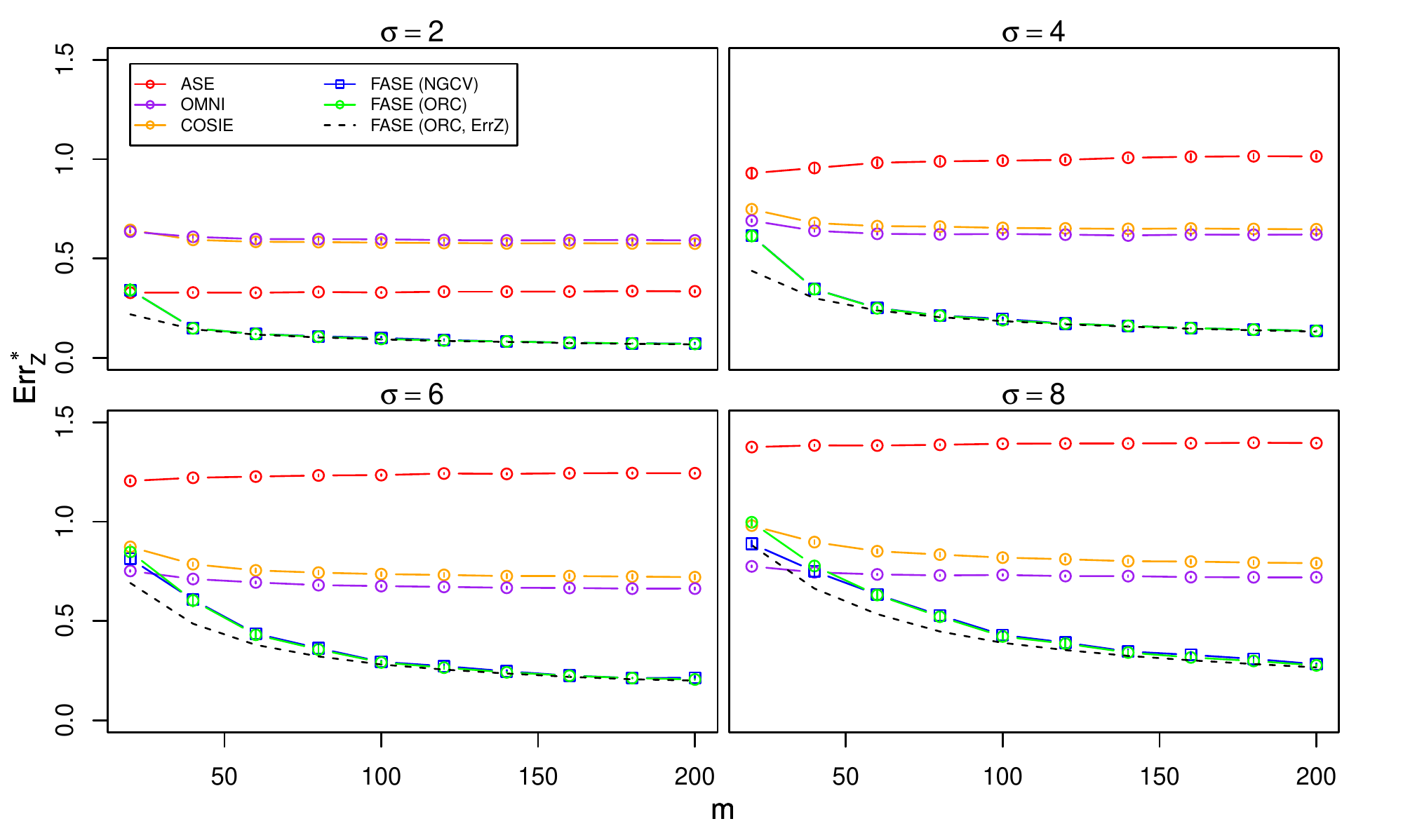}
  \caption{Mean of $\mathrm{Err}_Z^*$, varying $m$, the number of snapshots. Scenario (i), parametric Gaussian networks. Plots are labeled by edge standard deviation $\sigma$.}
  \label{fig:i_varym_onerot}
\end{figure}

\begin{figure}
  \centering
  \includegraphics[width=\textwidth]{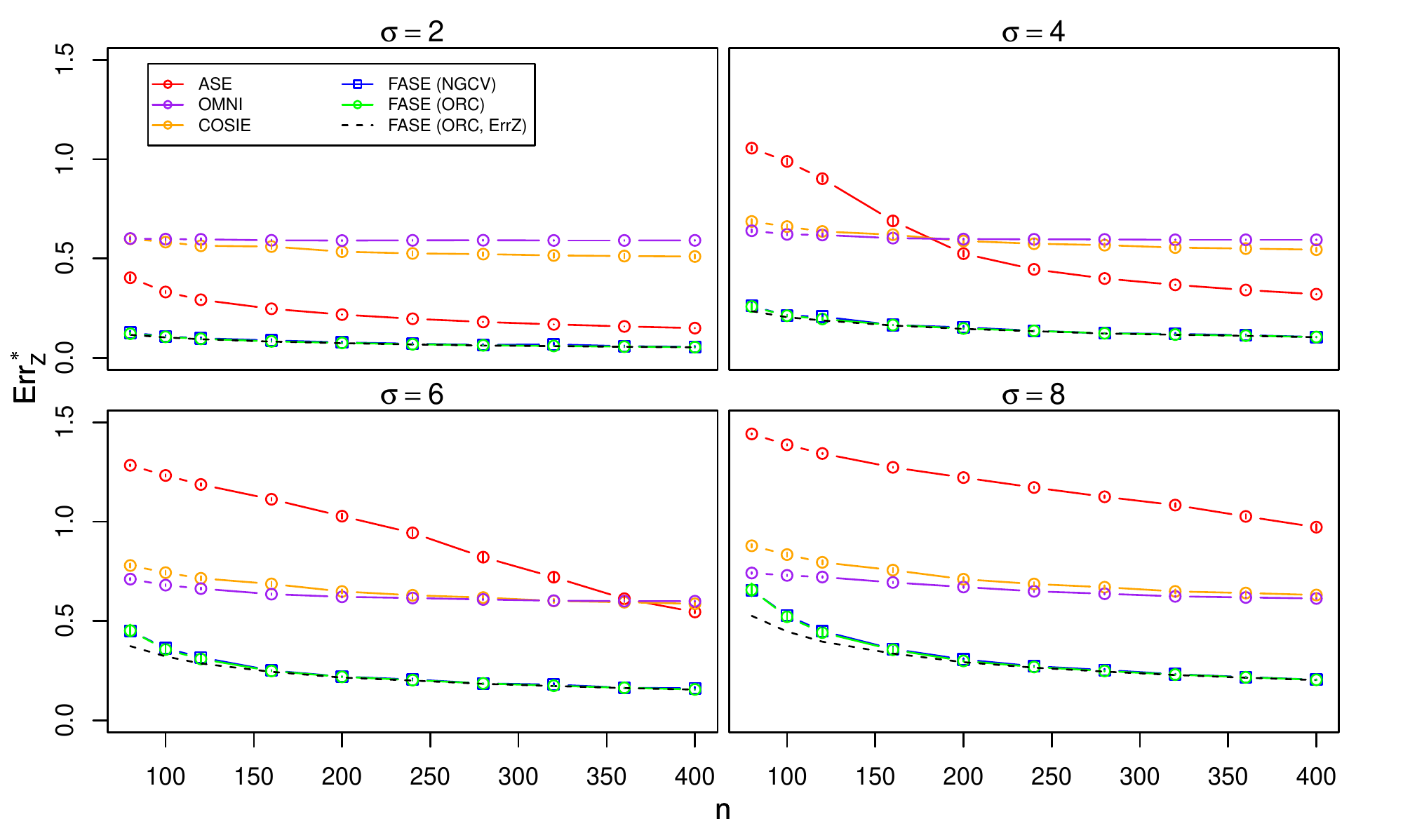}
  \caption{Mean of $\mathrm{Err}_Z^*$, varying $n$, the number of nodes. Scenario (i), parametric Gaussian networks. Plots are labeled by edge standard deviation $\sigma$.}
  \label{fig:i_varyn_onerot}
\end{figure}

\begin{figure}
  \centering
  \includegraphics[width=\textwidth]{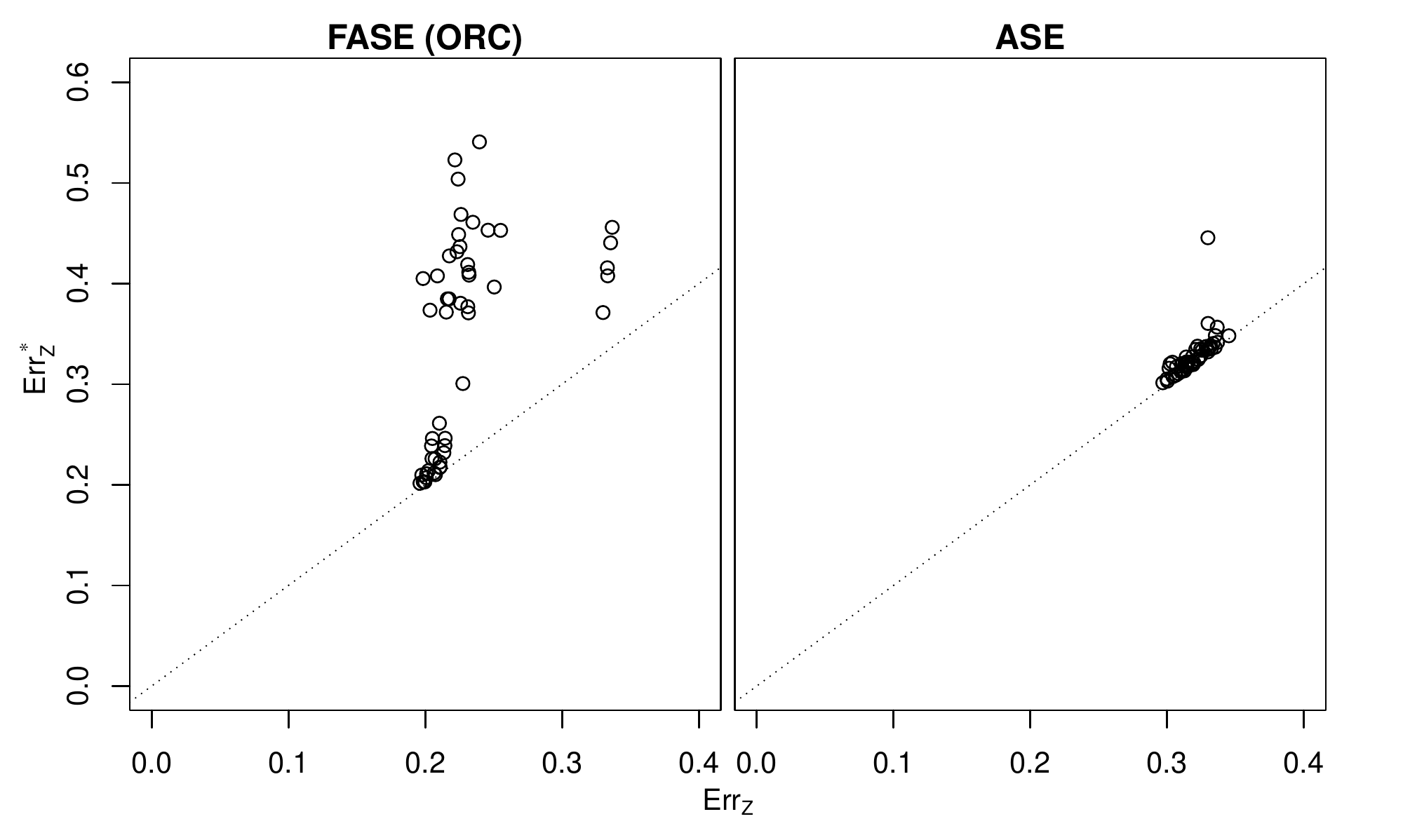}
  \caption{Scatter plot of $\mathrm{Err}_Z$ against $\mathrm{Err}_Z^*$ for FASE (ORC) (left panel) and ASE (right panel). Scenario (i), $\sigma=2$, $n=100$, $m=20$. Dotted lines denote $x=y$.}
  \label{fig:error_scatter}
\end{figure}

In Figures~\ref{fig:ii_varym_onerot} and \ref{fig:ii_varyn_onerot}, we report results for scenario (ii).
In this nonparametric scenario, switching error metrics has a more substantial effect on the performance of FASE.
While we still see a decrease in $\mathrm{Err}_Z^*$ when increasing either $m$ or $n$, there are now high signal settings for $\sigma \leq 4$ in which FASE does not clearly dominate ASE, even for relatively large values of $m$ and $n$.

\begin{figure}
  \centering
  \includegraphics[width=\textwidth]{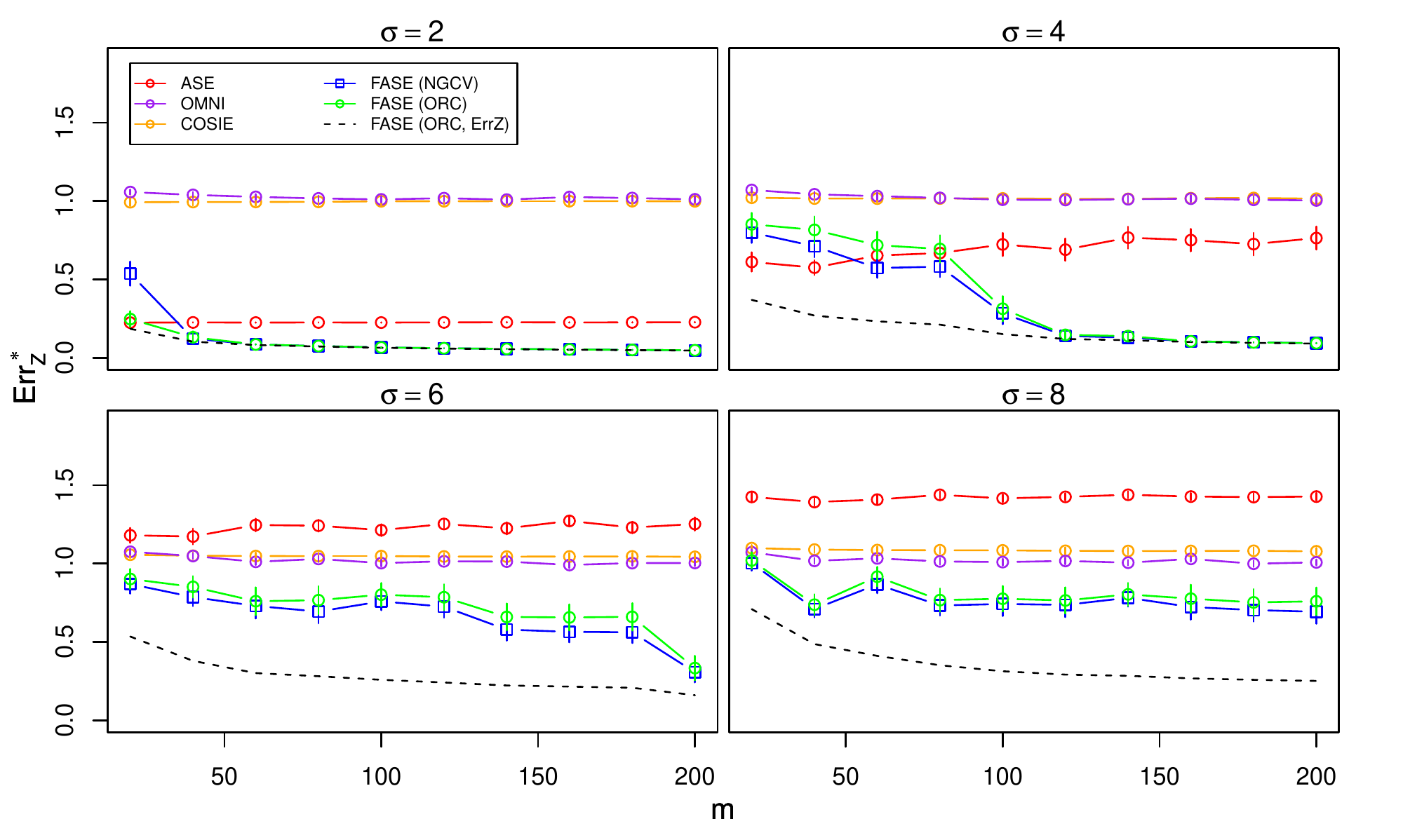}
  \caption{Mean of $\mathrm{Err}_Z^*$, varying $m$, the number of snapshots. Scenario (ii), nonparametric Gaussian networks. Plots are labeled by edge standard deviation $\sigma$.}
  \label{fig:ii_varym_onerot}
\end{figure}

\begin{figure}
  \centering
  \includegraphics[width=\textwidth]{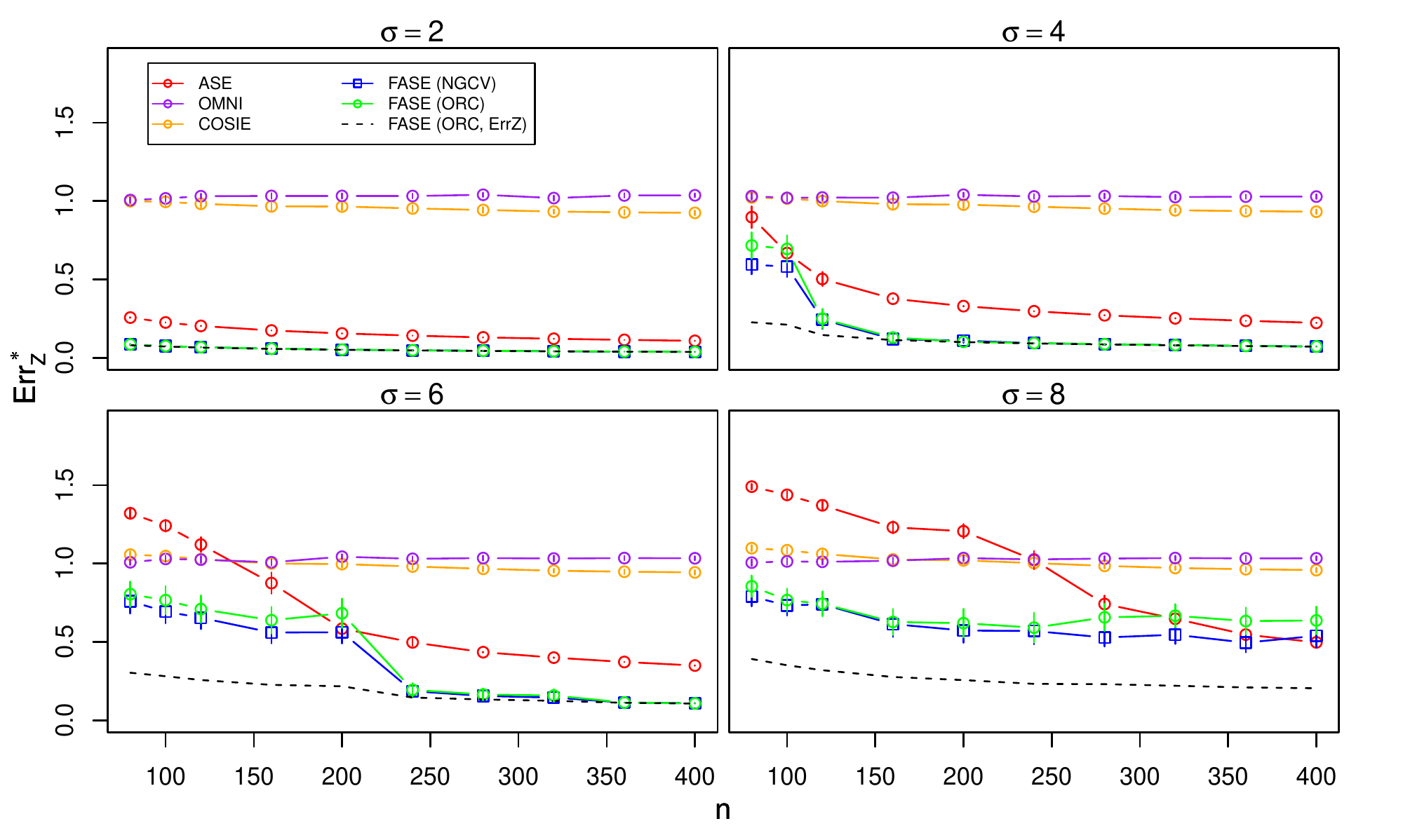}
  \caption{Mean of $\mathrm{Err}_Z^*$, varying $n$, the number of nodes. Scenario (ii), nonparametric Gaussian networks. Plots are labeled by edge standard deviation $\sigma$.}
  \label{fig:ii_varyn_onerot}
\end{figure}

In Figure~\ref{fig:iii_varyall_onerot}, we report results for scenario (iii).
In this scenario, for sufficiently large $n$, and most values of $m$, there appears to be very little difference in the two error metrics.
Especially for functional networks with edge density $1/2$, $\mathrm{Err}_Z^*$ can increase as $m$ increases.
This phenomenon is considered above for scenario (i), see Figure~\ref{fig:error_scatter} and the related discussion.

\begin{figure}
  \centering
  \includegraphics[width=\textwidth]{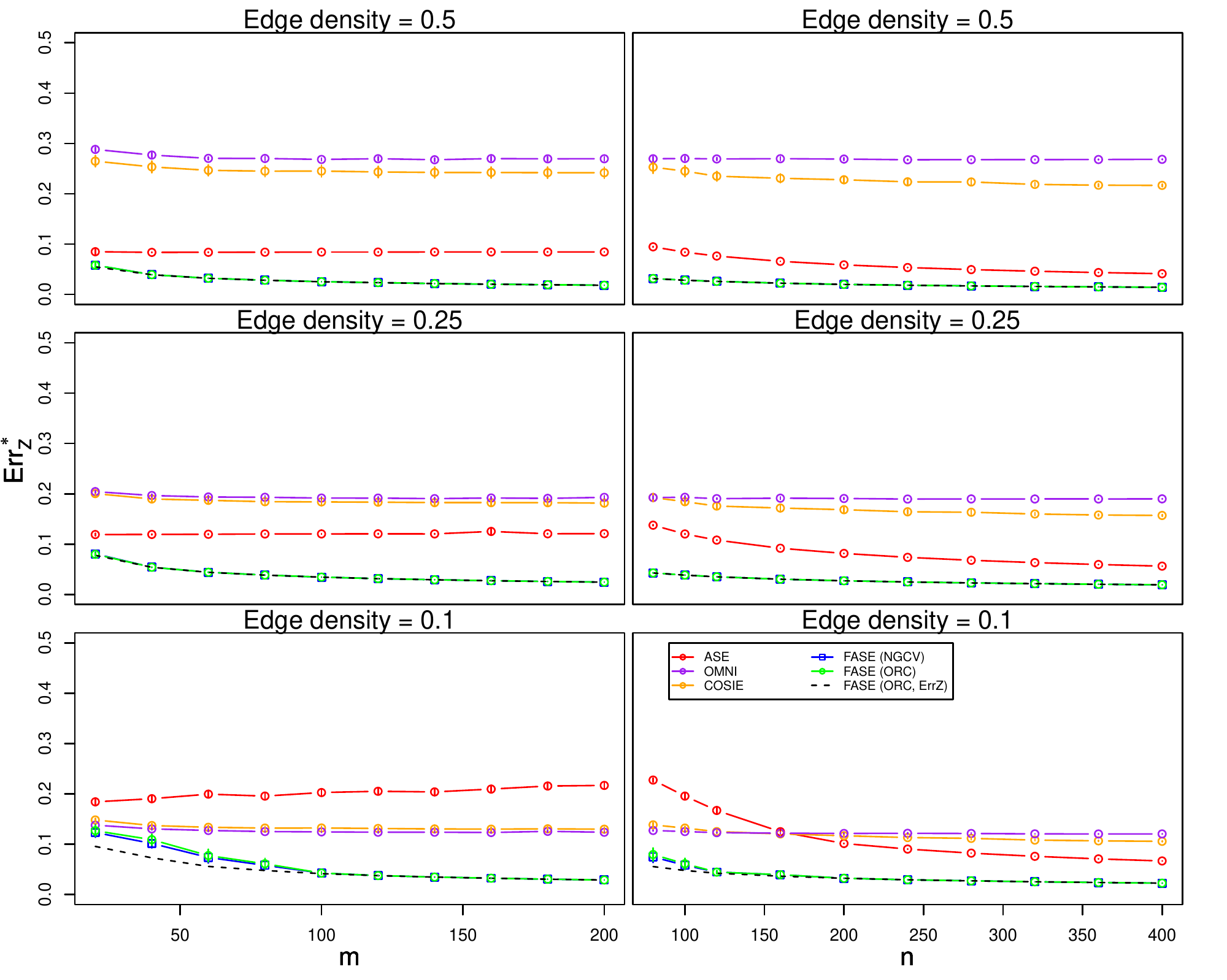}
  \caption{Mean of $\mathrm{Err}_Z^*$, varying $m$, the number of snapshots (left column), and $n$, the number of nodes (right column). Scenario (iii), parametric RDPG networks. Plots are labeled by edge density.}
  \label{fig:iii_varyall_onerot}
\end{figure}


\subsection{Interpolation for missing snapshots} \label{subsec:impute}

In this section we evaluate our FASE estimator against other baselines in the literature as a tool for interpolation to predict edges that were not observe.   We generate data similar to scenario (ii) in the manuscript as follows.
We set $n=100$, $d=2$, and generate latent  processes according to
\begin{equation*}
  z_{i,r}(x) = \frac{3 \sin [C\pi (2x - U_{i,r})]}{1 + 5[x + B_{i,r}(1-2x)]} + G_{i,r}
\end{equation*}
for a constant $C$, where $U_{i,r} \iid \mathrm{Unif}[0,1]$, $B_{i,r} \iid \mathrm{Bernoulli}(1/2)$, and $G_{i,r} \iid \mathcal{N}(0,1/4)$.
Setting $C=2$, the processes go through two complete cycles in the index space $\mathcal{X}=[0,1]$, as in Section~\ref{sec:simulations}.
We also consider smoother processes which only go through one complete cycle by setting $C=1$.

We initially generate $m=100$ equally spaced snapshots on index space $\mathcal{X}=[0,1]$.
We then uniformly select a random snapshot index $x_{k}^* \in [0.25,0.5]$ (to avoid boundary effects) and remove it along with $M$ snapshots immediately before and after $x_{k}^*$, for $M=0,1,...,10$.
That is, we treat the $2M+1$ network snapshots closest to the selected snapshot in the index space as missing.

For all of the dynamic network embedding methods we consider, our goal will be estimating the expected adjacency matrix of the central unobserved snapshot, evaluated in terms of the RMSE
\begin{equation*}
  \mathrm{Err}_{\Theta-\mathrm{mid}}(\widehat{Z}) = \frac{1}{n} \lVert \widehat{Z}(x_k^*)\widehat{Z}(x_k^*)^{\tp} - \Theta(x_k^*) \rVert_F.
\end{equation*}

For FASE, estimation of $Z$ for an unobserved snapshot index is simple given the basis design and estimated coordinates.
To compare to ASE, we consider estimation based on an embedding of the closest observed snapshot, either the next smallest (ASE (below)) or the next largest (ASE (above)).
To compare to OMNI and COSIE, which produce embeddings nearly constant in the index space, we average the estimated embeddings for the next smallest and next largest observed snapshots.

Note that in contrast to the examples in the main paper, the snapshot indices for this partially missing data are not equally spaced.
Thus to choose the basis for the  FASE estimator, rather than specifying equally spaced knots in index space, we place them at equally spaced quantiles of the observed snapshot times.
We also consider both oracle and NGCV-selected parameters for FASE, where $d$ is selected form $\{1,2,3,4\}$, and $q$ from $\{6,8,10,12,14,16\}$.

The results are presented in Figure~\ref{fig:impute_zf1} for processes which complete one cycle, and Figure~\ref{fig:impute_zf2} for processes which complete two cycles.

\begin{figure}
  \centering
  \includegraphics[width=\textwidth]{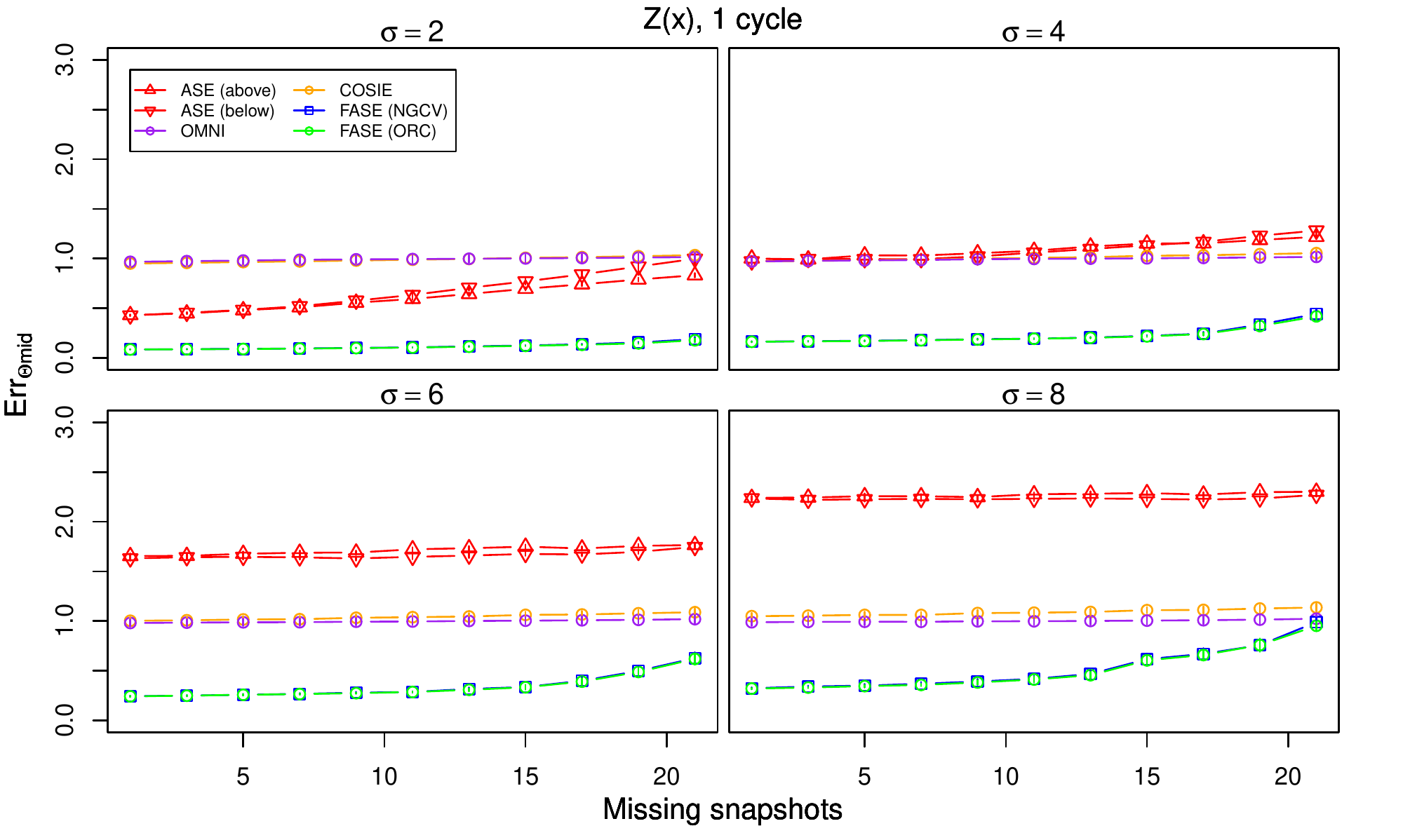}
  \caption{Prediction performance for central unobserved snapshot, latent processes complete one cycle in $[0,1]$.}
  \label{fig:impute_zf1}
\end{figure}

\begin{figure}
  \centering
  \includegraphics[width=\textwidth]{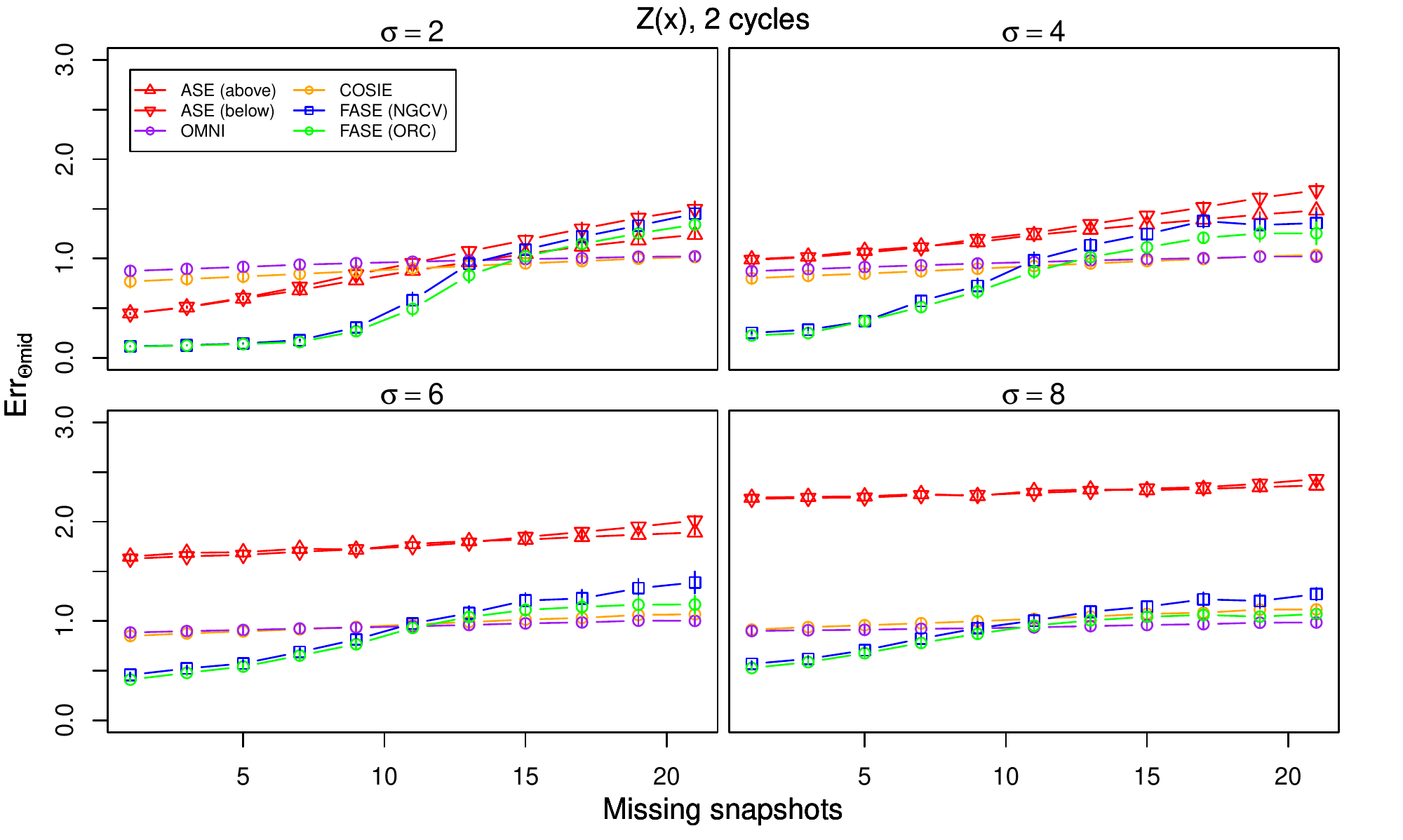}
  \caption{Prediction prediction performance for central unobserved snapshot, latent processes complete two cycles in $[0,1]$.}
  \label{fig:impute_zf2}
\end{figure}

We see that in this nonparametric setting, FASE performs well for interpolation of unobserved snapshots relative to competing dynamic network embeddings.
The relative performance of the different approaches is similar to the network recovery study in Section~\ref{sec:simulations}.
Comparing the results in Figures~\ref{fig:impute_zf1} and \ref{fig:impute_zf2}, FASE performs better when the underlying processes are smoother in the index space.
In particular, FASE gives the best performance of all methods for all settings in Figure~\ref{fig:impute_zf2}.
In Figure~\ref{fig:impute_zf2}, in almost all cases, FASE outperforms the ASE approaches, especially as $\sigma$ increases.
FASE outperforms or is competitive with COSIE and OMNI in most settings.
Both OMNI and COSIE produce nearly constant embeddings in this setting, meaning their bias is quite insensitive to the number of missing snapshots, while the bias of FASE is sensitive to the number of missing snapshots, and the smoothness of the underlying latent processes.

The performance of the NGCV-tuned FASE and the oracle FASE are generally comparable, except in Figure~\ref{fig:impute_zf2} with many missing snapshots and $\sigma \geq 6$. In this case, the oracle approach can choose a smaller $q$ to optimize for imputation, in contrast to the NGCV tuning, which prioritizes model fit on the observed data.

\section{Additional analyses of international relations} \label{app:real_data}

In this appendix we include some additional details of the analysis of international political interactions described in Section~\ref{sec:real_data}.
As described briefly in Section~\ref{sec:real_data}, we tune the model parameters $d$ and $q$ by finding a FASE estimate for each pair in a grid, and evaluating NGCV.
In particular, we vary $d$ between $1$ and $10$, incrementing by $1$, and vary $q$ between $4$ and $12$, incrementing by $1$.
The results are shown in Figure~\ref{fig:ngcv_heatmap}.
Importantly, we note that the NGCV criterion reaches a minimum on the interior of the grid, supporting the use of a functional embedding on this data, rather than one which is constant over time.

\begin{figure}
  \centering
  \includegraphics[width=\textwidth]{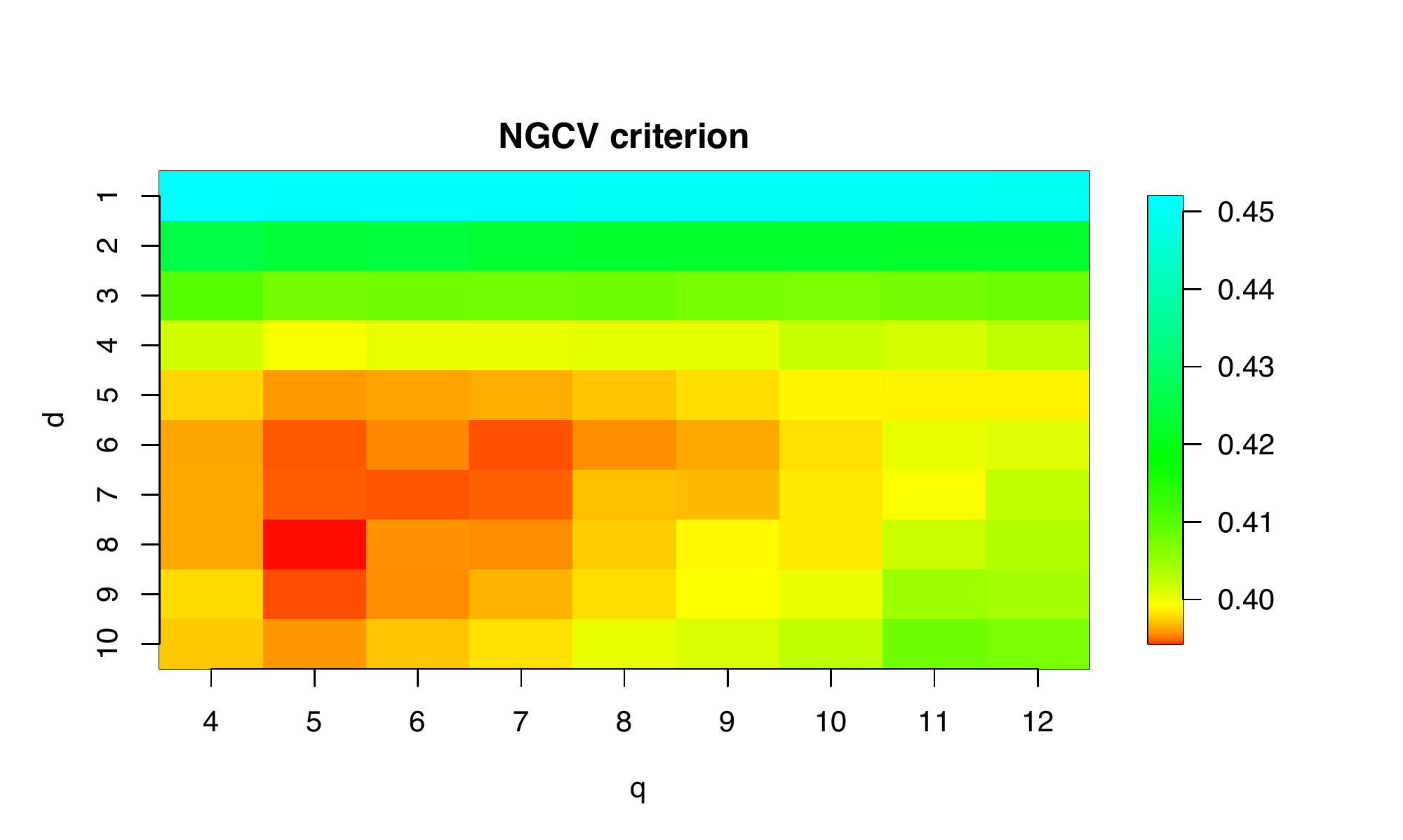}
  \caption{NGCV for FASE estimate, evaluated for each $(d,q)$ pair.}
  \label{fig:ngcv_heatmap}
\end{figure}

After selecting $\hat{d}=8$ and $\hat{q}=5$, we calculate our final estimator $\widehat{Z}$ and apply the sequential Procrustes alignment procedure described in Appendix~\ref{app:one_rotation}.
To unambiguously label the latent dimensions from $1$ to $8$, we evaluate an average magnitude
\begin{equation*}
  \frac{1}{m} \sum_{k=1}^m \sum_{i=1}^n \left\{ \hat{z}_{i,r}(x_k) \right\}^2
\end{equation*}
for each $r=1,\ldots,8$.
The largest average magnitude (dimension $1$) is about $46.3$, dimensions $2-4$ have smaller average magnitude between $12.9$ and $14.8$, and the remaining dimensions $5-8$ have average magnitudes between $6.7$ and $11.1$.

The first four estimated latent dimensions are plotted in Section~\ref{sec:real_data}, and we plot the remaining four here, and make some brief remarks on the embeddings.
Figure~\ref{fig:embed_d56} plots the fifth latent dimension against the sixth latent dimension, and Figure~\ref{fig:embed_d78} plots the seventh latent dimension against the eighth latent dimension.   The full evolution of the estimated latent processes can be seen in videos available online at \url{github.com/peterwmacd/fase/tree/main/videos}.

\begin{figure}
  \centering
  \includegraphics[width=\textwidth]{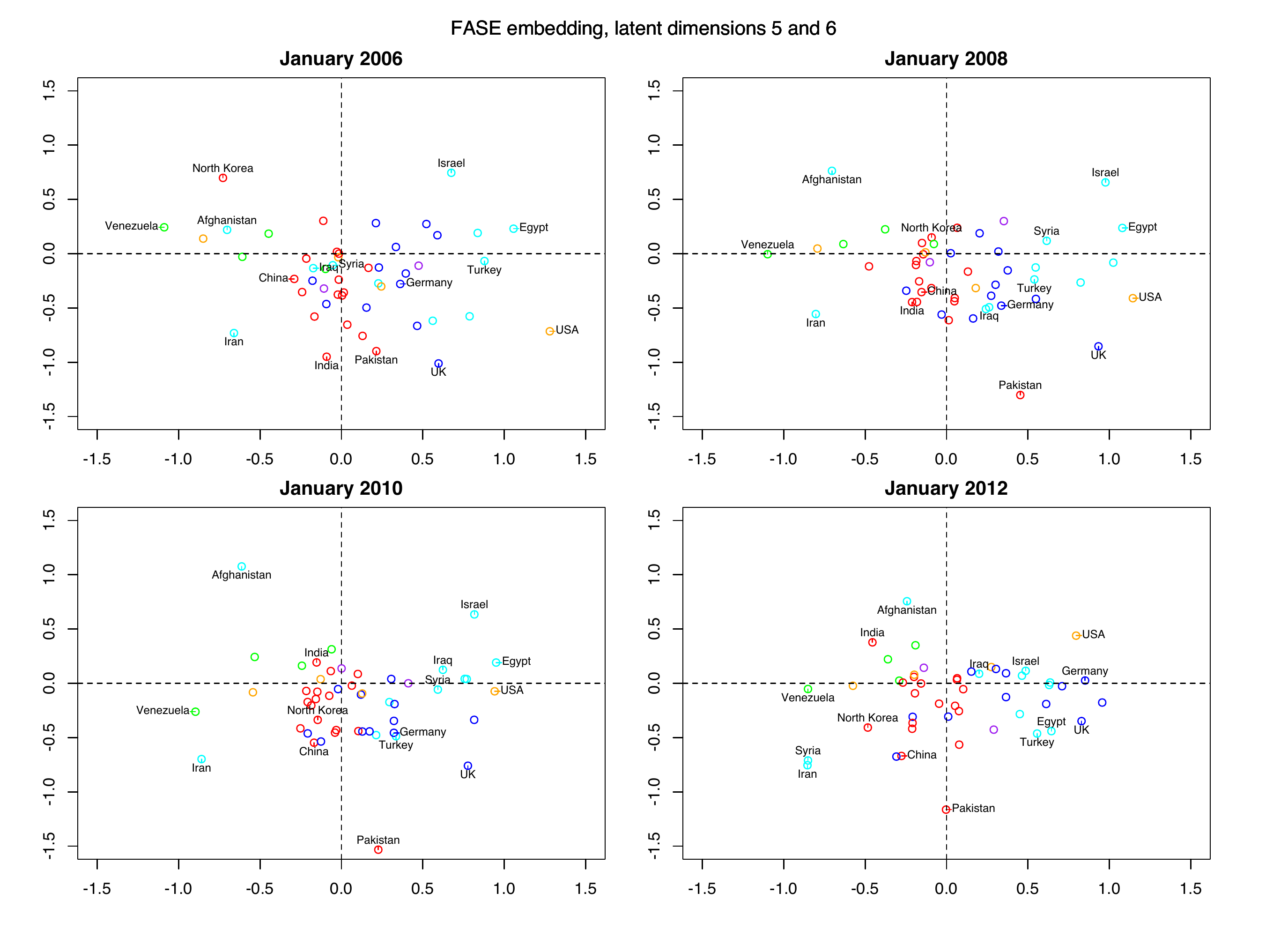}
  \caption{Fifth (horizontal axis) and sixth (vertical axis) dimensions of FASE evaluated at four times: January 2006, January 2008, January 2010, and January 2012. Points are colored by geographical region. Purple: Africa, Red: Asia-Pacific, Blue: Europe, Cyan: Middle East, Orange: North America, Green: South America.}
  \label{fig:embed_d56}
\end{figure}

\begin{figure}
  \centering
  \includegraphics[width=\textwidth]{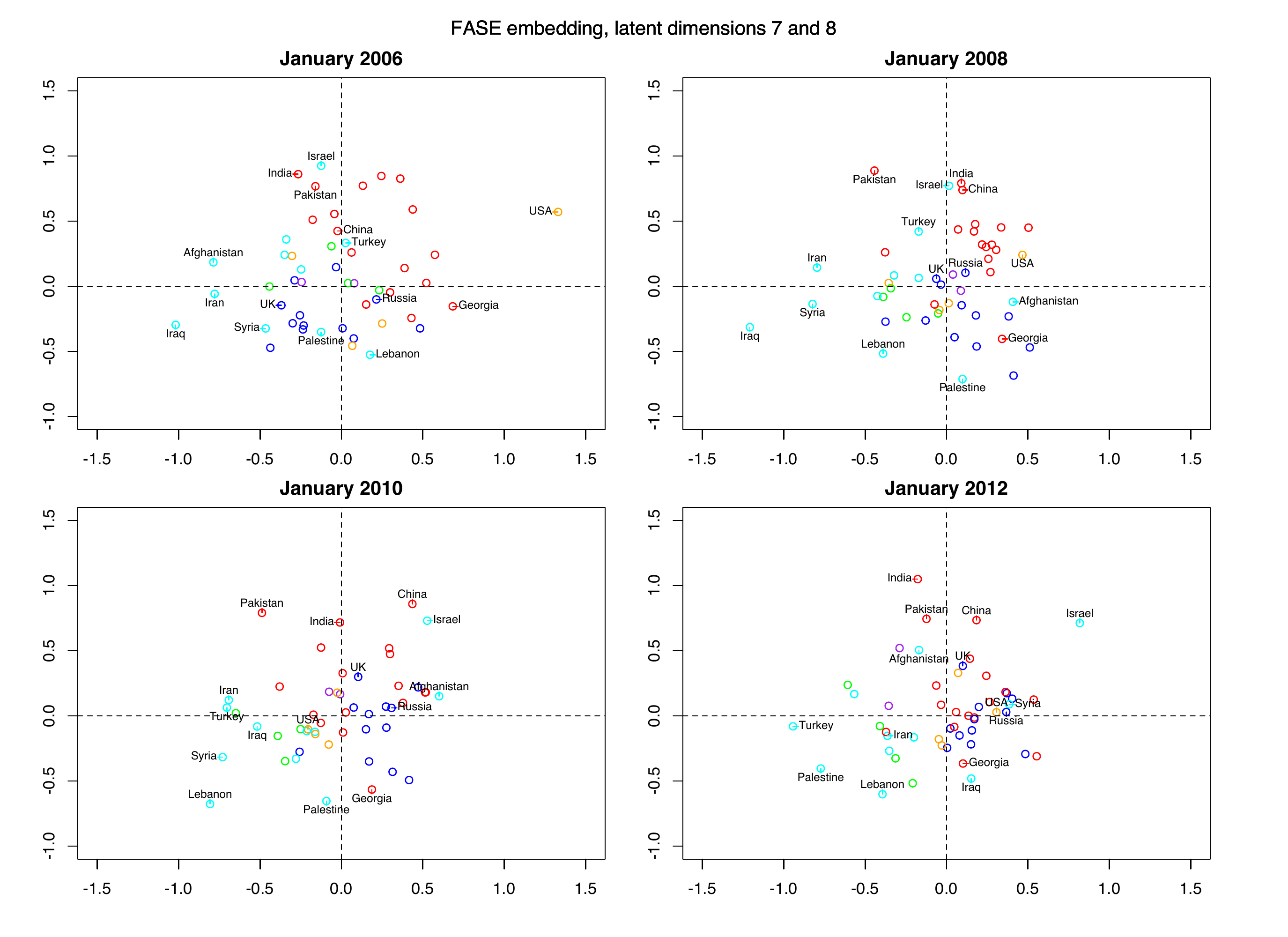}
  \caption{Seventh (horizontal axis) and eighth (vertical axis) dimensions of FASE evaluated at four times: January 2006, January 2008, January 2010, and January 2012. Points are colored by geographical region. Purple: Africa, Red: Asia-Pacific, Blue: Europe, Cyan: Middle East, Orange: North America, Green: South America.}
  \label{fig:embed_d78}
\end{figure}

In Figure~\ref{fig:embed_d56}, the USA and Venezuela are separated at extremes in the fifth dimension, while for much of the time period, the top right and bottom left quadrants separate countries with respect to a conflict between Israel and Iran.
As noted in Section~\ref{sec:real_data}, the fifth latent coordinate for Syria moves substantially, from the postive to negative half-plane between January 2010 and January 2012
\ifjasa
  \citeyearpar[BBC News,][]{11us}.
\else
\ifjmlr
  \citeyearpar[BBC News,][]{11us}.
\else
  \citep{11us}.
\fi
\fi
We also see movement, mostly in the sixth latent dimension of Afghanistan and India, reflecting worsening relations with Pakistan during this period
\ifjasa
  \citeyearpar[Reuters,][]{13tensions}.
\else
\ifjmlr
  \citeyearpar[Reuters,][]{13tensions}.
\else
  \citep{13tensions}.
\fi
\fi

In Figure~\ref{fig:embed_d78}, we again see a regional clusters formed by countries from Europe and Asia, although these begin to merge by the end of the time period.
The seventh dimension separates the USA and Iraq in January 2006 and January 2008, but similar to the conclusion from Figure~\ref{fig:embed_d34}, this conflict appears to have fully dissipated by January 2012, as the two countries have similar seventh latent coordinates, with the same sign.

\section{FASE with smoothing splines} \label{app:smoothing_splines}

Here we report preliminary results using an extension of the FASE methodology to {\em smoothing splines}, in which we select a maximal natural spline basis and optimize a penalized objective function.

Briefly, recall that the optimization problem introduced in Section~\ref{sec:estimation} minimizes a loss function $\ell(\mathcal{W})$ over coordinate tensors $\mathcal{W} \in \mathbb{R}^{n \times q \times d}$.
The vector-valued function $B(x) \in \mathbb{R}^q$ contains the $B$-spline basis for a $q$-dimensional cubic spline space.
We can rewrite this in a functional way if we let $\mathbb{S}_q^{n \times d}$  denote the space of functions from $\mathcal{X}$ to $\mathbb{R}^{n \times d}$ with components in $\operatorname{span}(B)$.
Then we can rewrite \eqref{snap_opt} as
\begin{equation*}
  \min_{Z \in \mathbb{S}_q^{n \times d}} \left\{ \sum_{k=1}^m \lVert A_k - Z(x_k)Z(x_k)^{\tp} \rVert_F^2 \right\}.
\end{equation*}

Following the usual development for smoothing splines, suppose we instead optimize over $Z$'s with components in the Sobolev space $\operatorname{Sob}_{2,2}^{n \times d}$ of twice-differentiable functions and add a penalty term
\begin{equation*}
  \operatorname{Pen}(Z) = \sum_{i=1}^n \sum_{r=1}^d \int_{\mathcal{X}} \left\{ z_{ir}^{''}(x) \right\}^2 dt
\end{equation*}
scaled by a penalty parameter $\lambda \geq 0$.
That is we solve the smoothing spline optimization problem
\begin{equation*}
  \min_{Z \in \operatorname{Sob}_{2,2}^{n \times d}} \left\{ \sum_{k=1}^m \lVert A_k - Z(x_k)Z(x_k)^{\tp} \rVert_F^2 + \lambda \operatorname{Pen}(Z) \right\}.
\end{equation*}

Classical results on smoothing splines \citep{hastie01elements} can be used to justify that this is equivalent to solving
\begin{equation} \label{optim_ss}
  \min_{Z \in \mathbb{N}_m^{n \times d}} \left\{ \sum_{k=1}^m \lVert A_k - Z(x_k)Z(x_k)^{\tp} \rVert_F^2 + \lambda \operatorname{Pen}(Z) \right\}
\end{equation}
where $\mathbb{N}_m^{n \times d}$ is the natural cubic spline with knots at the $x_k$'s for $k=1,\ldots,m$.
Hence, we can easily adapt FASE to these settings, solving \eqref{optim_ss} with gradient descent.
The penalty term can be evaluated in terms of integrals of the second derivatives of the natural spline basis functions, and rewritten as a quadratic function of the basis coordinates, hence its gradient is easy to calculate.

We compare this smoothing spline version of FASE (FASE (SS)) to the $B$-spline version developed in the body of the paper (FASE (BS)) through a small simulation study.
In particular, we generate functional networks as in scenario (ii) in Section~\ref{sec:simulations}, fixing $n=100$ and varying $m$ from $20$ to $200$ and $\sigma \in \{2,4,6,8\}$.
We perform $50$ replications for each setting, and evaluate the mean of $\mathrm{Err}_Z$ (see Section~\ref{sec:simulations}). 
We select the nonparametric scenario (ii) as it should favor the fully nonparametric smoothing spline version of FASE.

For each of FASE (SS) and FASE (BS), we fit an oracle version with $d$ fixed at the ground truth value and $\lambda$ and $q$ respectively selected from a grid to minimize $\mathrm{Err}_Z$.
In about 98\% of replications, the grids contain a local minimum of $\mathrm{Err}_Z$.
We try two initialization routines.
First, the usual initializer introduced in Section~\ref{subsec:initializers_new}. Second, an oracle initialization from the ground truth processes $Z$, or the closest approximation to each component in the $B$-spline space (ORC-init).
The results are shown in Figure~\ref{fig:ii_varym_smooth}. 

\begin{figure}
  \centering
  \includegraphics[width=\textwidth]{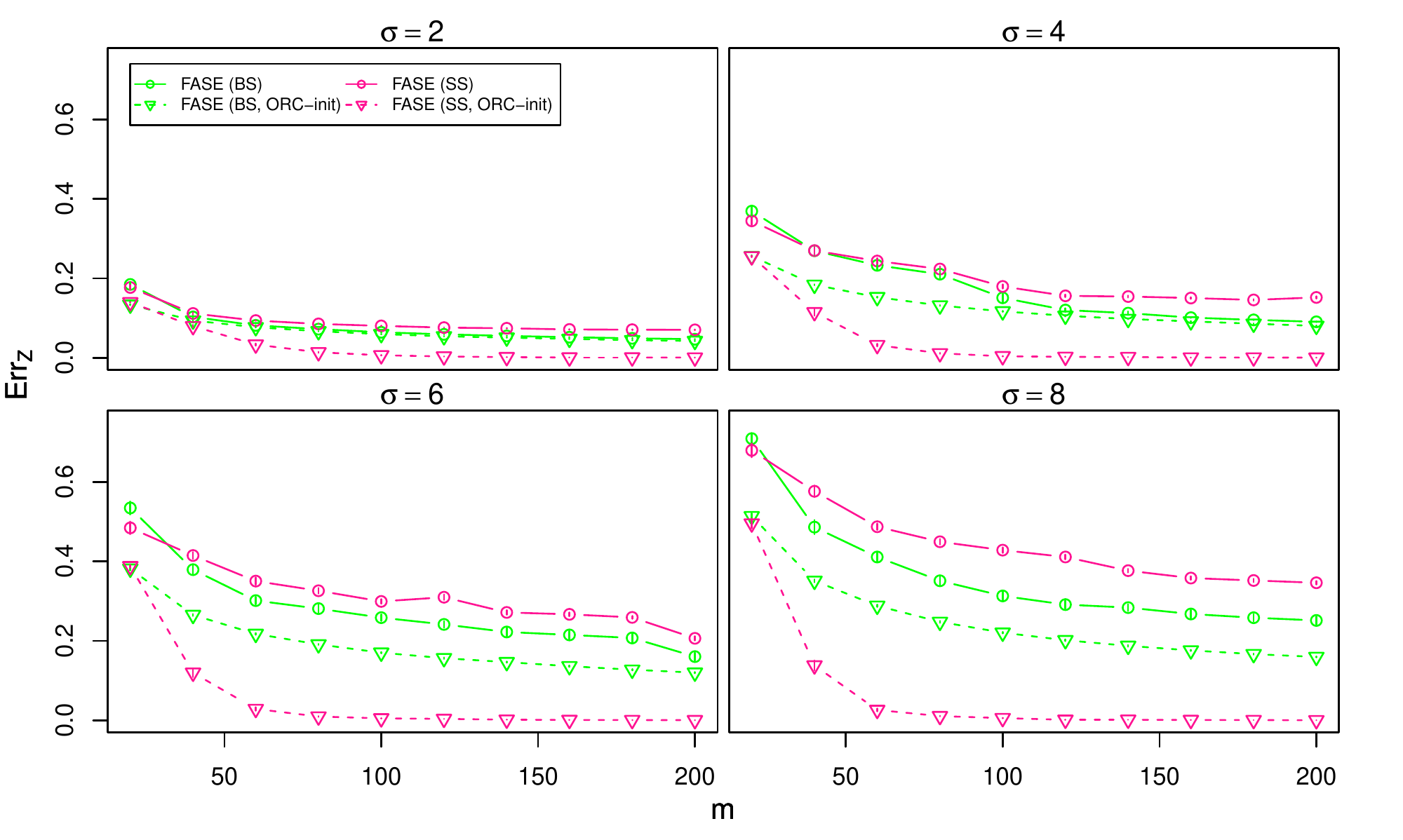}
  \caption{Mean of $\mathrm{Err}_Z$, varying $m$, the number of snapshots. Scenario (ii), nonparametric Gaussian networks. Plots are labeled by edge standard deviation $\sigma$.}
  \label{fig:ii_varym_smooth}
\end{figure}


In Figure~\ref{fig:ii_varym_smooth}, we see that with data-driven initialization, FASE (BS) outperforms FASE (SS) in terms of $\mathrm{Err}_Z$ except in the setting where $m=20$.
This ordering is reversed with oracle initialization.
In fact, the performance of FASE (SS) is insensitive to $\sigma$ for large $m$, implying that gradient descent is converging to a local minimum very close to the starting point.
This provides evidence that for large $m$ and $n$, as FASE (SS) must optimize far more parameters than FASE (BS), gradient descent becomes unreliable and highly dependent on the starting value.
While this does not preclude the existence of an efficient implementation of FASE (SS) which can overcome these optimization issues, it is not clear that such an implementation would substantially outperform FASE (BS).

\section{Sequential FASE} \label{app:stash_sequential}

In this section, we develop a sequential version of FASE that estimates one latent dimension at a time to overcome some of the identifiability issues.   As inputs, the sequential gradient descent algorithm takes a set of initial coordinates
\ifjasa
  $\widehat{\mathcal{W}}^0 = \{\widehat{\bm{W}}_r^0\}_{r=1}^d$;
\else
\begin{equation*}
  \widehat{\mathcal{W}}^0 = \{\widehat{\bm{W}}_r^0\}_{r=1}^d;
\end{equation*}
\fi
step sizes $\eta_{h,r} > 0$, which may depend on the iteration number $h \geq 0$; the latent dimension $d$; and a maximum iteration number $H$.
In practice, for both gradient descent schemes we will choose $H$ based on the convergence of the value of $\ell$ to a local minimum.

\begin{algorithm} \label{alg:gd_seq}
  \caption{Sequential gradient descent algorithm.}
  \begin{tabbing}
  \enspace Set $\widetilde{\mathcal{W}}^0 = \bm{0}_{n \times q \times d}$ \\
  \enspace For $r=1$ to $r=d$ \\
    \qquad $\widetilde{\bm{W}}_{r}^0 \leftarrow \widehat{\bm{W}}_{r}^{0}$ \\
    \qquad For $h=1$ to $h=H$ \\
      \qquad\qquad $\widetilde{\bm{W}}_r^{h} \leftarrow \widetilde{\bm{W}}_r^{h-1} - \eta_{h-1,r} \frac{\partial \ell}{\partial \bm{W}_r}(\widetilde{\mathcal{W}}^{h-1})$ \\
    \qquad $\widetilde{\bm{W}}_r^{0} \leftarrow \widetilde{\bm{W}}_{r}^{H}$ \\
  \enspace Output $\widehat{\mathcal{W}}^{H} = \{\widetilde{\bm{W}}_r^H\}_{r=1}^d$
  \end{tabbing}
\end{algorithm}

The output of Algorithm~\ref{alg:gd_seq} is an $n \times q \times d$ tensor-valued coordinate estimator $\widehat{\mathcal{W}}^{H}$.
Algorithm~\ref{alg:gd_seq} computes the coordinate estimator $\widehat{\mathcal{W}}^{H}$ one $n \times q$ slice at a time by estimating $d$ one-dimensional latent process models  sequentially.
Once a slice is estimated, it remains fixed, and its contribution is subtracted from the network structure.   Slices which have not yet been estimated have all entries fixed at $0$.

Empirically, we have found that when the singular values of the true latent processes are separated uniformly (in $x$), this sequential approach can achieve better estimation performance, which we attribute to it essentially reducing the space of unknown orthogonal transformations.
Theoretical results can be proven, analogous to Corollary~\ref{cor:main_concur1}, that recover each dimension in sequence with stronger guarantees on the alignment of the estimated and true latent processes.
However, extending a result like Corollary~\ref{cor:main_concur1} to higher dimensions with this sequential estimation scheme requires strong assumptions on the separation of the latent dimensions, in particular that the singular values of each $Z(x)$ are separated uniformly (in $x$), so that the previously estimated or yet to be estimated dimensions do not interfere with the estimation of the current slice.
We omit these results, as in general we recommend the concurrent gradient descent estimator for FASE presented in Section~\ref{sec:estimation}.

\fi
\fi

\ifjmlr
\bibliography{dyn_nets0}
\else
\fi

\end{document}